\newcommand{\Top}{{\rm T}}
\newcommand{\ok}{{\rm ok}}
\newcommand{\id}{{\rm id}}
\newcommand{\dom}{{\rm dom}}
\newcommand{\FV}{{\mathrm {FV}}}
\newcommand{\HV}{{\mathrm {HV}}}
\newcommand{\SN}{{\mathrm {SN}}}
\newcommand{\comment}[1]{}
\newtheorem{lemma}{Lemma}[section]
\newtheorem{proposition}[lemma]{Proposition}
\newtheorem{thm}[lemma]{Theorem}
\newtheorem{definition}[lemma]{Definition}
\newtheorem{corollary}[lemma]{Corollary}
\begin{document}

\title{Relating Church-Style and Curry-Style Subtyping}
\author{Adriana Compagnoni
  \institute{Stevens Institute of Technology \\
        Castle Point on Hudson \\
        Hoboken NJ 07030 USA}
        \email{abc@cs.stevens.edu} \and
      Healfdene Goguen
      \institute{Google Inc. \\
        111 8th Ave. \\
        New York, NY 10011 USA}
      \email{hhg@google.com}
      }

\maketitle



\begin{abstract}
  Type theories with higher-order subtyping or singleton types are
  examples of systems where computation rules for variables are
  affected by type information in the context.  A complication for
  these systems is that bounds declared in the context do not interact
  well with the logical relation proof of completeness or termination.
  This paper proposes a natural modification to the type syntax for
  $F^{ω}_{≤}$, adding a variable's bound to the variable type
  constructor, thereby separating the computational behavior of the
  variable from the context.  The algorithm for subtyping in $F^ω_≤$
  can then be given on types without context or kind information.  As
  a consequence, the metatheory follows the general approach for type
  systems without computational information in the context, including
  a simple logical relation definition without Kripke-style indexing
  by context.  This new presentation of the system is shown to be
  equivalent to the traditional presentation without bounds on the
  variable type constructor.
\end{abstract}

\section{Introduction}

Logical relations are a powerful technique for proving metatheoretic
properties of type theories.  The traditional approach to the
metatheory of type theories, for example that of Pure Type Systems
\cite{barendregt92lambda}, studies properties of untyped reduction and
conversion, and then completes the study of type-checking by proving
strong normalization with a logical relation construction.

This approach has been difficult to adapt to systems where variables
may behave differently according to context information.  Examples
where this may occur are type systems with singleton types, where a
variable of a singleton type is equal to the unique element of the
type, or subtyping, where a type variable may be replaced by its bound
in a derivation of subtyping.  The key difficulty is that strong
normalization of a term or termination of subtyping on higher-order
types depends on information in the context, but that normalization or
termination also needs to be closed under replacement by equal
contexts, in order to model the constructors that introduce the
computational information into the context.

For example, in $F^ω_≤$, consider a putative proof of strong
normalization in the case of a derivation of $X ≤ A : ⋆ → ⋆ ⊢ X(C) :
⋆$.  Such a proof would have a hypothesis that $A(C)$ is strongly
normalizing, since the model must allow a variable to be replaced by
its bound, referred to as promotion.  However, to model the rule that
$⊢ ∀X ≤ A : ⋆ → ⋆. X(C) = ∀X ≤ B : ⋆ → ⋆. X(C) : ⋆$, we would need
that $B(C)$ is strongly normalizing for arbitrary $B$ such that $⊢ A =
B : ⋆ → ⋆$ \emph{before} constructing the model; the behavior of
$X(C)$ varies according to its context.

Several papers have addressed systems of this type, but each of these
approaches differs from the usual approach to metatheory of type
theories.  Compagnoni and Goguen
\cite{CompagnoniAGoguenH:TOSHOSInfComp, CompagnoniAGoguenH:HOSDec} use
an algorithm where a variable's bound is normalized before promoting
the variable.  This allows context replacement to be proved before the
completeness proof, but it seems to be an odd requirement and was only
introduced to get the proof to work.  Furthermore, the algorithm is
less efficient than an algorithm that postpones normalization of the
bound.  Stone and Harper \cite{StoneHarper:ExtensionalEquiv} prove
termination for an algorithm for singletons using the unnormalized
singleton rather than normalizing it first.  Their Kripke-style proof
indexes the model with sets of possible contexts in which a term is
well-typed.  In the example above, the possible $B$ are limited by
considering contexts that arise from bounds introduced by the $∀$
constructor.  This differs from the standard Kripke-style proof of
strong normalization, which is relative to a single context.

In this paper we propose separating the computational behavior of
variables from the context.  We introduce a modified type structure
for $F^ω_≤$ \cite{Cardelli90:Fomegasub, CardelliLongo90, Mitchell90},
where the type constructor for variables is $X_A$ with the variable's
bound $A$ explicitly mentioned.  We call this presentation ``à la
Church'' for its obvious similarity to type labels on
$λ$-abstractions, and we call the traditional presentation ``à la
Curry''.  With this change, the term structure tells us how promotion
will be used without reference to the context.

This presentation allows us to give a kind- and context-free
definition of the algorithm for subtyping, since the only use of the
context in the traditional algorithm is when a variable is replaced
with its bound.  This in turn leads to an approach to the metatheory
consistent with the usual approach for type theories, since promoting
a variable to a type convertible with its bound, the cause of the
difficulties in the system without bounded variables, is never
necessary.  In our example above, the terms would be $∀X ≤ A : ⋆ →
⋆. X_A(C) = ∀X ≤ B : ⋆ → ⋆. X_B(C)$: the behavior of $X_A(C)$ and
$X_B(C)$ is fixed regardless of context.

While changing the term structure could be considered a syntactic
trick, in our opinion our presentation points to a deficiency in the
syntax of the traditional term structure of $F^ω_≤$.  In general,
model constructions work best when there is a close relationship
between terms and derivations: this is best illustrated by Streicher's
extended term structure and partial interpretation for the Calculus of
Constructions \cite{Streicher:book}.  The inability to construct
traditional models to show decidability of higher-order subtyping for
the expected algorithm suggests that the type structure of $F^ω_≤$ is
inappropriate.  We believe that the trick is that the syntax without
bounded type variables works at all.  The equivalence of the two
presentations shows that the additional information necessary for the
model construction can be ignored in programs.

The system Full $F^ω_≤$, with contravariance in the bounds of
quantified types, further illustrates our point.  Surprisingly, in
contrast to the system with unlabeled type variables, the algorithm
for subtyping for Full $F^ω_≤$ with bounded variables cannot be
defined.  Essentially, when the implicit contravariant type
information in the unlabeled types is made explicit, the side
conditions distinguishing variable reflexivity from promotion cannot
be expressed in a valid inductive definition.  We shall discuss the
technical reasons that the definition fails in more detail when we
define the algorithm for Kernel $F^ω_≤$.  However, the inability to
define the algorithm for Full $F^ω_≤$ over the explicit type structure
gives a strong indication that the unlabeled type structure is an
inadequate representation of types for $F^ω_≤$, rather than the
addition of bound information to the type being a trick.

The correctness of the traditional system à la Curry is a consequence
of the equivalence of the two presentations.  There are two
substantial differences in the treatment of the new system.  First,
because variables mention their bound explicitly, the new presentation
exposes the difference between the operations of renaming $[X←Y]$,
which changes variable names but does not change the bound, and
substitution of a variable, $[Y_B/X]$, which replaces the bound of
$X$.  Secondly, the subtyping judgement is needed in the formulation
of the inference rules for the kinding judgement, for example in the
rule \rn{TVar} for kinding a type variable: this is not necessary in
the traditional presentation.

In this paper we address all of the complications listed above.  We
first study the properties of the subtyping relation, including
completeness and correctness, anti-symmetry, transitivity elimination
and decidability.  We complete our development by showing that our
system is equivalent to the traditional one without bounds.  We ignore
completely the term language, since its metatheory is standard once
decidability of subtyping has been proved.  As such, we do not treat
substitution for bounded variables as occur in $∀$, since this
substitution only occurs in the reduction relation for terms.

\section{Syntax}
\label{sec:syntax}

We now present the term constructors, judgements and rules of
inference for kinding and subtyping in $F^ω_≤$.

\subsection{Syntactic Categories}

The kinds of $F^ω_≤$ are the kind $⋆$ of proper types and the kinds $K
→ K'$ of functions on types and type operators.  We assume an infinite
collection of type variable names $X, Y, Z, ...$.  The types include
variables with explicit bounds $X_A$; the top type $\Top_⋆$; function
types $A → B$; and types $∀X ≤ A : K. B$ of polymorphic functions, in
which the bound type variable $X$ ranges over all subtypes of the
upper bound $A$.  Moreover, like $F^ω$, we allow types to be
abstracted on types, of the form $ΛX : K. A$, and we can apply types
to argument types $A(B)$.  Contexts $Γ, Δ$ are either the empty
context $()$ or extended contexts $Γ, X ≤ A : K$.

We identify types that differ only in the names of bound variables.
We write $A(B₁, ..., B_n)$ for $(A(B₁))...(B_n)$.  If $A ≡ X_C(B₁,
..., B_n)$ then $A$ has head variable $X_C$; we write $\HV(-)$ for the
partial function returning the head variable of a type.  We also
extend the top type $\Top_⋆$ to any kind $K$ by defining inductively
$\Top_{K→K'} = ΛX : K. \Top_{K'}$.  We use $X : K$ as an abbreviation
for $X ≤ \Top_K : K$ in contexts; in this case we say $X$ is a
variable without a bound.

Because type variables are decorated with their bounds, we need to be
careful with our definition of substitution: specifically, a renaming
should be restricted to renaming the variables in the bound $A$ of a
variable $X_A$, as opposed to changing the bound as may occur in a
substitution of $Y_B$ for $X$ in $X_A$.  We therefore define parallel
substitutions $γ, δ$ as either the empty substitution $()$; the
extension of a parallel substitution $γ$ with a renaming of a variable
$X$ by another variable $Y$, written $γ[X←Y]$; or the extension of a
parallel substitution $γ$ with a substitution of a variable $X$ by a
type $A$, written $γ[A/X]$.  We say $γ$ is a renaming if $γ = ()$ or
if $γ = γ₀[X←Y]$ with $γ₀$ a renaming.  We write $\id_Γ$ for the
identity renaming of the type variables declared in $Γ$.

We write $B[γ]$ for the capture-avoiding simultaneous replacement of
each of the variables by its corresponding value, defined as follows
on variables and lifted in the usual way to arbitrary types:
\begin{itemize}
\item $X_A[()] = X_A$.
\item $X_A[γ[X←Y]] = Y_{A[γ[X←Y]]}$.
\item $X_A[γ[Y←Z]] = X_{A[γ[Y←Z]]}$, if $X ≠ Y$.
\item $X_A[γ[B/X]] = B$.
\item $X_A[γ[B/Y]] = X_{A[γ[B/Y]]}$, if $X ≠ Y$.
\end{itemize}
\noindent Observe that $B$ cannot be a variable $Y$ or $Z$ in the last
two equations, but must instead be a bounded type variable $Y_C$ or
$Z_D$.

We also write $B[A/X]$ for the parallel substitution that is the
identity renaming on the free variables in $B$ other than $X$, and the
substitution of $X$ by $A$.  We have standard properties of parallel
substitution, for example that $A[γ][δ] ≡ A[γ ○ δ]$ and $(γ[A/X]) ○ δ
= (γ ○ δ)[Aδ/X]$.  We also write $A ▷ B$ for the standard notion of
one-step $β$-reduction.  We have the standard property of
Church--Rosser for reduction.

\subsection{Judgements and Rules of Inference}

The judgement forms are $Γ ⊢ A : K$ for well-kinded types and $Γ ⊢ A ≤
B : K$ for subtyping.  We sometimes write $Γ ⊢ \ok$ for $Γ ⊢ \Top_⋆ :
⋆$, formalizing the well-formedness of $Γ$, and $Γ ⊢ A = B : K$ for $Γ
⊢ A ≤ B : K$ and $Γ ⊢ B ≤ A : K$.  We may also use the metavariable
$J$ to range over statements (right-hand sides of judgements) of any
of these judgement forms.

The rules of inference are presented as simultaneously defined
inductive relations over the judgements.  We start with several
admissible structural rules, and follow with the kinding and subtyping
rules.

\subsubsection{Kinding Rules}

The following rules formalize the judgement $Γ ⊢ A : K$, stating that
the type $A$ is well-formed and of kind $K$ in context $Γ$.

{\small
\infax[TopEmp]
{() ⊢ \Top_⋆ : ⋆}

\infrule[TopExt]
{Γ ⊢ A : K \andalso X ∉ \dom(Γ)}
{Γ, X ≤ A : K ⊢ \Top_⋆ : ⋆}

\infrule[TVar]
{Γ ⊢ B : K \andalso Γ ⊢ A ≤ B : K \andalso X ≤ A : K ∈ Γ}
{Γ ⊢ X_B : K}

\infrule[TAbs]
{Γ, X : K ⊢ A : K'}
{Γ ⊢ ΛX : K. A : K → K'}

\infrule[TApp]
{Γ ⊢ A : K → K' \andalso Γ ⊢ B : K}
{Γ ⊢ A(B) : K'}

\infrule[Arrow]
{Γ ⊢ A : ⋆ \andalso Γ ⊢ B : ⋆}
{Γ ⊢ A → B : ⋆}

\infrule[All]
{Γ, X ≤ A : K ⊢ B : ⋆}
{Γ ⊢ ∀X ≤ A : K. B : ⋆}
}

Notice that in rule \rn{TVar} it is possible that $X ∈ \FV(B)$ when $Γ
⊢ X_B : K$.  The premise that $Γ ⊢ B : K$ in this rule ensures that
Subject Reduction goes through smoothly, without needing to refer to
subtyping.

\subsubsection{Subtyping Rules}

Finally, the following rules formalize the judgement $Γ ⊢ A ≤ B : K$,
stating that type $A$ is a subtype of type $B$ and both are
well-formed of kind $K$ in context $Γ$.

{\small
\infrule[S-Refl]
{Γ ⊢ A : K}
{Γ ⊢ A ≤ A : K}

\infrule[S-Trans]
{Γ ⊢ A ≤ B : K \andalso Γ ⊢ B ≤ C : K}
{Γ ⊢ A ≤ C : K}

\infrule[S-Top]
{Γ ⊢ A : K}
{Γ ⊢ A ≤ \Top_K : K}

\infrule[S-TVar]
{Γ ⊢ A ≤ B : K \andalso Γ ⊢ B = C : K \andalso X ≤ A : K ∈ Γ}
{Γ ⊢ X_B ≤ X_C : K}

\infrule[S-Promote]
{Γ ⊢ A ≤ B : K \andalso Γ ⊢ B : K \andalso X ≤ A : K ∈ Γ}
{Γ ⊢ X_B ≤ B : K}

\infrule[S-TAbs]
{Γ, X : K ⊢ A ≤ B : K'}
{Γ ⊢ ΛX : K. A ≤ ΛX : K. B : K → K'}

\infrule[S-TApp]
{Γ ⊢ A ≤ C : K → K' \andalso Γ ⊢ B = D : K}
{Γ ⊢ A(B) ≤ C(D) : K'}

\infrule[S-Arrow]
{Γ ⊢ B₁ ≤ A₁ : ⋆ \andalso Γ ⊢ A₂ ≤ B₂ : ⋆}
{Γ ⊢ A₁ → A₂ ≤ B₁ → B₂ : ⋆}

\infrule[S-All]
{Γ ⊢ A = C : K \andalso Γ, X ≤ A : K ⊢ B ≤ D : ⋆}
{Γ ⊢ ∀X ≤ A : K. B ≤ ∀X ≤ C : K. D : ⋆}

\infrule[S-BetaL]
{Γ, X : K ⊢ A : K' \andalso Γ ⊢ B : K}
{Γ ⊢ (ΛX : K. A)(B) ≤ A[B/X] : K'}

\infrule[S-BetaR]
{Γ, X : K ⊢ A : K' \andalso Γ ⊢ B : K}
{Γ ⊢ A[B/X] ≤ (ΛX : K. A)(B) : K'}
}

\section{The Algorithm}
\label{sec:alg}

In this section we define the algorithm for kinding and subtyping.

First, we define the relations $→_w$ for weak-head reduction and $↠_n$
for reduction to normal form.

\begin{definition}
  \begin{itemize}
  \item $(ΛX : K. A)(B) →_w A[B/X]$.
  \item $A(B) →_w C(B)$ if $A →_w C$.
  \item $\Top_⋆ ↠_n \Top_⋆$.
  \item $X_A(B₁, ..., B_n) ↠_n X_C(D₁, ..., D_n)$ if $A ↠_n C$ and $B_i
    ↠_n D_i$ for $1 ≤ i ≤ n$.
  \item $ΛX : K. A ↠_n ΛX : K. B$ if $A ↠_n B$.
  \item $A → B ↠_n C → D$ if $A ↠_n C$ and $B ↠_n D$.
  \item $∀X ≤ A : K. B ↠_n ∀X ≤ C : K. D$ if $A ↠_n C$ and $B ↠_n D$.
  \item $A ↠_n C$ if $A →_w B$ and $B ↠_n C$.
  \end{itemize}
  \noindent We also write $A ↓_n B$ iff there is a $C$ such that $A
  ↠_n C$ and $B ↠_n C$, and $A ↓_n$ iff there is a $C$ such that $A
  ↠_n C$.
\end{definition}

\begin{lemma}
  If $A ↠_n C$ and $A ▷ B$ then $B ↠_n C$.
\end{lemma}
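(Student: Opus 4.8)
The plan is to prove, by induction on the derivation of $A ↠_n C$, the slightly stronger statement that $A ↠_n C$ and $A \redpar B$ together imply $B ↠_n C$, where $\redpar$ is the standard parallel reduction relation on types; since one-step $β$-reduction is contained in parallel reduction, this gives the lemma. Passing to $\redpar$ is the essential move. Parallel reduction enjoys the substitution property that $A \redpar A'$ and $B \redpar B'$ imply $A[B/X] \redpar A'[B'/X]$, so it commutes smoothly with the contraction of a head redex $(ΛX : K. P)(Q)$; a single $β$-step inside the argument $Q$, by contrast, becomes one $β$-step per occurrence of $X$ in $P$ once the redex is contracted, which is precisely where a naive single-step induction would break down.

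Two elementary shape facts are needed. First, $→_w$ has a rigid shape: if $A →_w A'$ then $A ≡ (ΛX : K. P)(Q)(R_1, \dots, R_m)$ and $A' ≡ P[Q/X](R_1, \dots, R_m)$ for some $P$, $Q$, $R_i$ and $m ≥ 0$ (induction on the derivation of $→_w$). Second, the outermost constructor is stable under $\redpar$: a parallel reduct of $X_A(B_1, \dots, B_n)$ has the form $X_{A'}(B_1', \dots, B_n')$ with $A \redpar A'$ and each $B_i \redpar B_i'$ (induction on $n$, using that along the application spine the operator is never an abstraction, so only the congruence clauses of $\redpar$ apply), and likewise a parallel reduct of $ΛX : K. A$, of $A → B$, or of $∀X ≤ A : K. B$ retains that shape with parallel-reduced components.

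The technical core is a commutation property for $→_w$ and $\redpar$: if $A →_w A'$ and $A \redpar B$, then there is a $B'$ with $A' \redpar B'$ and either $B →_w B'$ or $B ≡ B'$. Writing $A ≡ (ΛX : K. P)(Q)(R_1,\dots,R_m)$ and $A' ≡ P[Q/X](R_1,\dots,R_m)$ as above, one analyses $A \redpar B$ by peeling the arguments $R_i$ off the spine (induction on $m$, using that the operator $(ΛX : K. P)(Q)(R_1)\cdots(R_{m-1})$ is never an abstraction when $m ≥ 1$): either the head redex is not contracted, so $B ≡ (ΛX : K. P')(Q')(R_1',\dots,R_m')$ with $P \redpar P'$, $Q \redpar Q'$, $R_i \redpar R_i'$, and then $B →_w P'[Q'/X](R_1',\dots,R_m')$, which is $\redpar$-related to $A' ≡ P[Q/X](R_1,\dots,R_m)$ by the substitution property and congruence; or the head redex is among those contracted, so $B ≡ P'[Q'/X](R_1',\dots,R_m')$ and $A' \redpar B$ directly, again by substitution and congruence, and we may take $B' ≡ B$.

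Granting these, the main induction is routine. When the last rule of $A ↠_n C$ is the clause for $\Top_⋆$ or one of the congruence clauses for variables, $Λ$, $→$, or $∀$, the matching shape fact forces $B$ to have the same outermost form; the induction hypothesis, applied to the immediate subderivations together with the parallel steps on the components given by the shape fact, normalizes the components of $B$, and we reassemble with the same clause. When the last rule is the weak-head clause, with premises $A →_w A'$ and $A' ↠_n C$, the commutation property supplies $B'$ with $A' \redpar B'$ and $B →_w B'$ or $B ≡ B'$; the induction hypothesis applied to the subderivation $A' ↠_n C$ and the single parallel step $A' \redpar B'$ gives $B' ↠_n C$, whence $B ↠_n C$, either directly when $B ≡ B'$ or by one further use of the weak-head clause. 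The main obstacle is the commutation property — in particular stating it correctly, allowing $B ≡ B'$ since the parallel step may itself perform the head contraction, and running the induction on the argument spine with care; everything else is bookkeeping.
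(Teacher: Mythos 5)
Your proof is correct and takes essentially the route the paper intends: the lemma is stated there without proof as a standard $\lambda$-calculus fact, and the paper's own sketch of the closely related property (the weak-head case of Lemma~\ref{l:T}) appeals to exactly the same device, namely commutation of $\to_w$ with reduction obtained via parallel reduction and the inclusions $\rhd \subseteq \redpar$ and $\redpar \subseteq \rhd^*$. Strengthening the induction hypothesis to a parallel step and handling the head redex by the spine analysis (allowing $B \equiv B'$ when the parallel step itself contracts the weak-head redex) is precisely the standard argument, so nothing is missing.
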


The algorithm has a judgement for kinding, $Γ ⊢_A A : K$, and two
judgements for subtyping, $⊢_A A ≤_W B$ for subtyping weak-head normal
forms, and $⊢_A A ≤ B$ for subtyping arbitrary types.  The judgement
$Γ ⊢_A A : K$ corresponds to type inference: the context $Γ$ and type
$A$ are inputs, and the kind $K$ is an output.  The algorithm for
subtyping is analogous to untyped conversion in the $λ$-calculus: it
is purely a computational relation, without reference to kind
information.  Furthermore, the algorithm for kinding does not refer to
subtyping, because subtyping is used for the term language of $F^ω_≤$
and not for types.

The algorithm is defined by the following rules of inference.  It is
syntax-directed, and it will be shown to be terminating on well-formed
types.  Clearly types do not need to be well-formed to be subjects of
the algorithmic subtyping judgement.  Since algorithmic subtyping
incorporates weak-head reduction, it is also clearly not terminating
in general.

{\small
\infax[AT-Top]
{Γ ⊢_A \Top_⋆ : ⋆}

\infrule[AT-TVar]
{X ≤ A : K ∈ Γ}
{Γ ⊢_A X_A : K}

\infrule[AT-TAbs]
{Γ, X : K ⊢_A A : K' \andalso X ∉ \dom(Γ)}
{Γ ⊢_A ΛX : K. A : K → K'}

\infrule[AT-TApp]
{Γ ⊢_A A : K → K' \andalso Γ ⊢_A B : K}
{Γ ⊢_A A(B) : K'}

\infrule[AT-Arrow]
{Γ ⊢_A A : ⋆ \andalso Γ ⊢_A B : ⋆}
{Γ ⊢_A A → B : ⋆}

\infrule[AT-All]
{\begin{array}{c}
  Γ ⊢_A A₁ : K \\
  Γ, X ≤ A₁ : K ⊢_A A₂ : ⋆
 \end{array} \andalso
 \begin{array}{c}
   \\ X ∉ \dom(Γ)
 \end{array}}
{Γ ⊢_A ∀X ≤ A₁ : K. A₂ : ⋆}

\infrule[AWS-Top]
{\mbox{$\HV(A)$ undefined and $A$ is not an abstraction}}
{⊢_A A ≤_W \Top_⋆}

\infrule[AWS-TVar]
{X_A(B₁, ..., B_n) ↓_n X_C(D₁, ..., D_n)}
{⊢_A X_A(B₁, ..., B_n) ≤_W X_C(D₁, ..., D_n)}

\infrule[AWS-Promote]
{⊢_A A(B₁, ..., B_n) ≤ C \andalso C ̸↓_n X_A(B₁, ..., B_n)}
{⊢_A X_A(B₁, ..., B_n) ≤_W C}

\infrule[AWS-TAbs]
{⊢_A A ≤ B}
{⊢_A ΛX : K. A ≤_W ΛX : K. B}

\infrule[AWS-Arrow]
{⊢_A B₁ ≤ A₁ \andalso ⊢_A A₂ ≤ B₂}
{⊢_A A₁ → A₂ ≤_W B₁ → B₂}

\infrule[AWS-All]
{A₁ ↓_n B₁ \andalso ⊢_A A₂ ≤ B₂}
{⊢_A ∀X ≤ A₁ : K. A₂ ≤_W ∀X ≤ B₁ : K. B₂}

\infrule[AS-Inc]
{A ↠_w C \andalso B ↠_w D \andalso ⊢_A C ≤_W D}
{⊢_A A ≤ B}
}

The rule for subtyping of bounded variables, \rn{S-TVar}, states that
$X_A$ is less than $X_B$ if $A$ and $B$ are equal.  The algorithmic
presentation of this system needs a side condition, independent of the
algorithm, to determine whether to apply rule \rn{AWS-TVar} or rule
\rn{AWS-Promote}: in our presentation of Kernel $F^ω_≤$, this is
whether the left- and right-hand sides are convertible.  On the other
hand, a translation of Full $F^ω_≤$ with bounded variables requires
the premise $Γ ⊢ B ≤ A : K$ in \rn{AWS-TVar} to demonstrate the
equivalence of the explicit and unlabeled type presentations.
However, the algorithm cannot be defined with this premise, because
the negation of this premise in \rn{AWS-Promote} requires a negative
occurrence of the judgement being defined, which is not a valid
inductive definition.

Also, observe that while the comparison $A ̸↓_n X_A(B₁, ..., B_n)$ in
\rn{AWS-Promote} might hold if either side diverges, in practice the
algorithm will always be applied to well-formed terms, which will be
shown to be terminating.

\section{Metatheory}
\label{sec:meta}

In this section we develop the basic metatheory for the algorithm.

We begin with the relations $A >_P B$, formalizing a use of promotion,
and $A >_S B$, formalizing $B$ an immediate subterm of $A$, both for
$A$ weak-head normal.

\begin{definition}
  \hfill
  \begin{itemize}
  \item $X_A(B₁, ..., B_n) >_P A(B₁, ..., B_n)$.
  \item
    \begin{itemize}
    \item $ΛX : K. A >_S A$.
    \item $A₁ → A₂ >_S A₁$ and $A₁ → A₂ >_S A₂$.
    \item $∀X ≤ A₁: K. A₂ >_S A₂$.
    \end{itemize}
  \end{itemize}
\end{definition}

\begin{definition}[Strong Normalization, Termination]
  We define the following predicates inductively:
  \begin{itemize}
  \item $\SN(A)$ iff $\SN(B)$ for all $B$ such that $A ▷ B$.
  \item $T(A)$ iff $T(B)$ for all $B$ such that $A ▷ B$, $A >_P B$ or
    $A >_S B$.
  \end{itemize}
\end{definition}

The predicate $T(A)$, or $A$ is terminating, formalizes the possible
types that the algorithm may encounter when invoked on a judgement
containing $A$.  As for strong normalization, a base case for $T(A)$
would be a type with no reducts.

\begin{lemma}
  \label{l:T}
  We have the following properties of $T(-)$ and $\SN(-)$:
  \begin{enumerate}
  \item $T(A)$ implies $\SN(A)$. \label{l:T:SN}
  \item $T(A)$ implies $A ↓_n$. \label{l:T:downarrow}
  \item If $T(A)$ and $A ▷ B$ then $T(B)$. \label{l:T:Red}
  \item $T(\Top_⋆)$. \label{l:T:Top}
  \item $T(A)$ iff $T(ΛX : K. A)$. \label{l:T:Lambda}
  \item $T(\Top_K)$. \label{l:T:TopK}
  \item $T(A)$ and $T(B)$ iff $T(A → B)$. \label{l:T:Arrow}
  \item $\SN(A)$ and $T(B)$ iff $T(∀X ≤ A : K. B)$. \label{l:T:Forall}
  \item If $A >_P B$ then $T(A)$ iff $T(B)$. \label{l:T:promote}
  \item $\SN(A_i)$ for $1 ≤ i ≤ n$ iff $T(X_{\Top_K}(A₁, ...,
    A_n))$. \label{l:T:Var}
  \item If $T(A)$, $T(B)$, $A(B) →_w C$ and $T(C)$ then
    $T(A(B))$. \label{l:T:WH}
  \end{enumerate}
\end{lemma}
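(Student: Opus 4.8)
The plan is to prove each of the eleven items of Lemma~\ref{l:T}, mostly by straightforward induction on the definitions of $T(-)$ and $\SN(-)$. Several items are proved together because the predicates interleave through the clauses for $>_P$ and $>_S$. I will organize the proof so that the items that are merely unfolding of a single definitional clause come first, and the ones requiring a nested induction come afterward.

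First I would dispatch the easy structural equivalences. Item~\ref{l:T:SN} follows by induction on the derivation of $T(A)$: since every one-step reduct $B$ with $A \rhd B$ also satisfies $A >_P B$ or $A >_S B$ vacuously (it is covered by the first disjunct in the definition of $T$), the inductive hypothesis gives $\SN(B)$, and so $\SN(A)$. Item~\ref{l:T:Red} is immediate from the definition: if $T(A)$ and $A \rhd B$ then $B$ is one of the things over which $T(A)$ quantifies, so $T(B)$. Items~\ref{l:T:Top} and~\ref{l:T:TopK} follow because $\Top_\star$ has no $\rhd$-reducts, no $>_P$-reducts (its head is not a bounded variable), and no $>_S$-reducts (it is not an abstraction, arrow, or $\forall$), so $T(\Top_\star)$ holds vacuously; then $T(\Top_K)$ follows from item~\ref{l:T:Lambda} by induction on $K$, using $\Top_{K\to K'} = \Lambda X{:}K.\,\Top_{K'}$. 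For item~\ref{l:T:Lambda}, the forward direction uses that the $>_S$-reduct of $\Lambda X{:}K.\,A$ is exactly $A$ and its $\rhd$-reducts are $\Lambda X{:}K.\,A'$ with $A \rhd A'$; conversely $T(\Lambda X{:}K.\,A)$ immediately yields $T(A)$ since $\Lambda X{:}K.\,A >_S A$. Items~\ref{l:T:Arrow} and~\ref{l:T:Forall} are analogous: an arrow $A \to B$ reduces componentwise and has $>_S$-reducts $A$ and $B$, giving the ``iff'' in item~\ref{l:T:Arrow} by a simultaneous induction; for item~\ref{l:T:Forall} the subtlety is that $\forall X{\le}A{:}K.\,B$ has $>_S$-reduct only $B$ (not $A$), but it still reduces in the $A$-component, so what propagates about $A$ is only $\SN(A)$, hence the asymmetric statement $\SN(A)$ and $T(B)$ iff $T(\forall X{\le}A{:}K.\,B)$; here I would invoke item~\ref{l:T:SN} to see that the $T(B)$ assumption is consistent with only requiring $\SN(A)$. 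Item~\ref{l:T:promote} is direct from the clause $X_A(\vec B) >_P A(\vec B)$ and item~\ref{l:T:Red}, together with the observation that $X_A(\vec B)$ and $A(\vec B)$ have, up to this single promotion step, the same $\rhd$- and $>_S$-reducts structurally; more carefully, one shows $T(X_A(\vec B)) \Rightarrow T(A(\vec B))$ by the defining clause, and the converse by noting every reduct of $X_A(\vec B)$ is either a promotion to $A(\vec B)$, a reduction inside $A$ or some $B_i$ (hence a reduct of $A(\vec B)$), and there are no $>_S$-reducts.

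Item~\ref{l:T:downarrow}, that $T(A)$ implies $A \downarrow_n$, is the first genuinely substantive item: I would prove it by induction on $T(A)$, following the clauses of $\twoheadrightarrow_n$. If $A$ weak-head reduces, $A \rhd A'$ for the corresponding $A'$ and by item~\ref{l:T:Red} and induction $A' \downarrow_n$, then the last clause of $\twoheadrightarrow_n$ gives $A \downarrow_n$; otherwise $A$ is weak-head normal, and we case on its shape, applying the induction hypothesis to the immediate $>_S$-subterms (for $\Lambda$, $\to$, $\forall$) or to the bound and arguments of a head-variable form $X_A(\vec B)$ — for the latter, we use that $T$ of each component follows from $T(X_A(\vec B))$ via items~\ref{l:T:promote}, \ref{l:T:Arrow}(applied to the spine), and \ref{l:T:Var}-style reasoning, or more directly by a separate sub-lemma that $T$ of a head-variable application entails $\SN$ of the bound and all arguments. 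Item~\ref{l:T:Var} is proved by a similar analysis: for the forward direction, $X_{\Top_K}(\vec A)$ has no promotion reduct that helps (its bound is $\Top_K$, and $\Top_K(\vec A)$ weak-head reduces when $K$ is an arrow, which must be handled) and no $>_S$-reducts, so $T$ reduces to $\SN$ of each $A_i$ plus termination of the spine; I expect to need item~\ref{l:T:promote} and an auxiliary induction on the $A_i$. The reverse direction of item~\ref{l:T:Var} builds $T(X_{\Top_K}(\vec A))$ from $\SN(A_i)$ by induction on the sum of the $\SN$-heights, checking each reduct is again of the required form.

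The main obstacle, and the step I would spend the most care on, is item~\ref{l:T:WH}: $T(A)$, $T(B)$, $A(B) \to_w C$, $T(C)$ imply $T(A(B))$. The difficulty is that a weak-head redex $A(B)$ where $A$ is not already an abstraction has head variable governed by $A$'s spine, so $A(B)$'s $>_P$- and $>_S$-behavior is tangled: to close $T(A(B))$ under all one-step moves I must show every $\rhd$-reduct of $A(B)$ is terminating (these are $A'(B)$ with $A \rhd A'$, handled by induction on $T(A) \times T(B) \times T(C)$ with a suitable well-founded measure, noting $A'(B) \to_w C'$ with $C \twoheadrightarrow_n$-related to $C'$; here Lemma~1.1, that $A \twoheadrightarrow_n C$ and $A \rhd B$ imply $B \twoheadrightarrow_n C$, is the key commutation fact — though to track $T(C')$ I actually need $T$, not just $\twoheadrightarrow_n$, to descend, which may require strengthening to ``$T(C)$ and $C \rhd C'$ imply $T(C')$'', i.e.\ item~\ref{l:T:Red} again, plus an argument that the weak-head reduct of $A'(B)$ is a reduct of $C$), and $A(B') $ with $B \rhd B'$ similarly; the promotion and $>_S$ reducts of $A(B)$, when $A(B)$ is not itself a redex after the $\to_w$ step is taken, are reducts of $C$ and hence terminating by $T(C)$ and item~\ref{l:T:Red}. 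I would set up a single outer induction on the multiset (or lexicographic triple) built from the $T$-derivations of $A$, $B$, and $C$ — crucially using that $T$ is an inductively defined (hence well-founded) predicate — and verify closure under each kind of move case by case, leaning on Church--Rosser and Lemma~1.1 to relate the various weak-head reducts. This bookkeeping, rather than any single clever idea, is where the work lies.
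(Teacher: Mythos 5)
Your proposal is correct and takes essentially the same route as the paper, which treats every case except Case~\ref{l:T:WH} as routine and proves Case~\ref{l:T:WH} from the standard commutation of weak-head reduction with one-step $\beta$-reduction (modulo $\rhd^*$), together with induction on the derivations of $T(A)$ and $T(B)$ and closure of $T$ under reduction to track the evolving weak-head reduct --- exactly the ingredients you converge on. One small slip, which does not affect the argument: an application $A(B)$ that weak-head reduces has no $>_P$- or $>_S$-reducts at all (its head is not a bounded variable, and $>_S$ applies only to $\Lambda$-, $\to$- and $\forall$-types), so the clause in which you dispose of such reducts via $T(C)$ is vacuous rather than needed.
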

\begin{proof}
  The only case that is difficult is Case~\ref{l:T:WH}, which follows
  by standard $λ$-calculus properties.
  \comment{
  follows from the sub-lemma that if $A →_w C$
  and $A ▷ D$ then $C ≡ D$ or there is an $E$ such that $D →_w E$ and
  $C ▷^* E$.  This sub-lemma can be shown using parallel reduction,
  with the inclusion $▷ ⊆ ⇒$ and $⇒ ⊆ ▷^*$.  The property follows by
  induction on $T(A)$ and $T(B)$, where $A(B) ▷ D$ and $D ≢ C$ implies
  $A ▷ A'$ and $D ≡ A'(B)$ or $B ▷ B'$ and $D ≡ A(B')$, and hence
  $T(D)$ by the sub-lemma and the induction hypothesis.
  }
\end{proof}

\begin{proposition}[Decidability]
  \label{p:decide}
  If $T(A)$ and $T(B)$ then $⊢_A A ≤ B$ and $⊢_A A ≤_W B$ terminate.
\end{proposition}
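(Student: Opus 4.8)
The plan is to prove termination by a lexicographic induction on the pair $(A, B)$ ordered by a well-founded relation derived from $T(-)$. Specifically, I would define a measure on types that decreases along every recursive call of the algorithm, and show that this measure is well-founded precisely because $T(A)$ and $T(B)$ hold. The natural candidate is to use the union of $\rhd$, $>_P$, and $>_S$ on well-head-normal types, together with the weak-head reduction relation $\to_w$ for the $\rn{AS-Inc}$ rule; Lemma~\ref{l:T} is exactly the toolkit for showing this measure behaves correctly under the operations the algorithm performs.

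First I would observe that the two judgements $\vdash_A A \le B$ and $\vdash_A A \le_W B$ are mutually recursive, so I need a single induction covering both. For $\vdash_A A \le B$ via $\rn{AS-Inc}$, the algorithm computes weak-head normal forms $C$ of $A$ and $D$ of $B$ and recurses on $\vdash_A C \le_W D$; here I use Lemma~\ref{l:T}(\ref{l:T:SN}) to know $\SN(A)$ and $\SN(B)$, hence weak-head reduction terminates and $C$, $D$ exist, and I use Lemma~\ref{l:T}(\ref{l:T:Red}) (closure of $T$ under $\rhd$, extended to $\to_w \subseteq \rhd$, and hence to $\twoheadrightarrow_w$) to know $T(C)$ and $T(D)$. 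So it suffices to handle the $\le_W$ judgement on terminating weak-head-normal types, with the measure strictly smaller after the weak-head normalization step. Then I would proceed by cases on the $\rn{AWS-*}$ rule applied. For $\rn{AWS-Arrow}$, $A_1 \to A_2 \le_W B_1 \to B_2$ recurses on $\vdash_A B_1 \le A_1$ and $\vdash_A A_2 \le B_2$; by Lemma~\ref{l:T}(\ref{l:T:Arrow}), $T(A_1), T(A_2), T(B_1), T(B_2)$ all hold, and each is a proper $>_S$-subterm, so the induction hypothesis applies. The cases $\rn{AWS-TAbs}$ and $\rn{AWS-All}$ are analogous using Lemma~\ref{l:T}(\ref{l:T:Lambda}) and Lemma~\ref{l:T}(\ref{l:T:Forall}) respectively, noting that in $\rn{AWS-All}$ the side condition $A_1 \downarrow_n B_1$ terminates because $T(A_1)$, $T(B_1)$ give $A_1 \downarrow_n$, $B_1 \downarrow_n$ by Lemma~\ref{l:T}(\ref{l:T:downarrow}), so reduction to normal form halts. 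The $\rn{AWS-Top}$ and $\rn{AWS-TVar}$ cases bottom out: $\rn{AWS-TVar}$ only requires computing normal forms and checking syntactic equality, which terminates by Lemma~\ref{l:T}(\ref{l:T:downarrow}).

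The main obstacle is the $\rn{AWS-Promote}$ case, which is where the measure must genuinely use $>_P$. Here $X_A(B_1,\dots,B_n) \le_W C$ recurses on $\vdash_A A(B_1,\dots,B_n) \le C$, and I must argue that $T(A(B_1,\dots,B_n))$ holds so the induction hypothesis applies on the left component. This is exactly Lemma~\ref{l:T}(\ref{l:T:promote}): since $X_A(B_1,\dots,B_n) >_P A(B_1,\dots,B_n)$, from $T(X_A(B_1,\dots,B_n))$ we get $T(A(B_1,\dots,B_n))$. The subtlety is that $A(B_1,\dots,B_n)$ need not be weak-head normal, so this recursive call is to the $\le$ judgement, not $\le_W$, and it is not immediately a structural subterm of anything — it is a $>_P$-reduct of the weak-head-normal form on the left. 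So the well-founded order I need is: on the left-hand type, the transitive closure of $(\rhd \cup >_P \cup >_S \cup \to_w)$, which is well-founded on $T$-types by (the inductive definition of $T$ together with) Lemma~\ref{l:T}(\ref{l:T:promote}), (\ref{l:T:Red}), and the subterm clauses; and I must check that each recursive call strictly decreases this order in the left component (with the right component unchanged or also decreasing). The promotion call decreases via $>_P$; the $\rn{AS-Inc}$-triggered weak-head normalization decreases via $\to_w^+$ (or is zero steps, in which case we are already at a $\le_W$ call on the same type, so I should instead measure $\le$-calls as strictly above the corresponding $\le_W$-call to break the tie). The one remaining check is that the side condition $C \not\downarrow_n X_A(B_1,\dots,B_n)$ in $\rn{AWS-Promote}$ terminates: $T(C)$ gives $C \downarrow_n$ and $T(X_A(B_1,\dots,B_n))$ gives its normal form exists, so comparing them halts. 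Packaging all of this as a single induction on the well-founded order, with the measure on each judgement being (left-hand type under the combined order, a flag distinguishing $\le$ from $\le_W$, right-hand type), gives the result.
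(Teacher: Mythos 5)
Your overall strategy is the right one and matches the paper's in spirit: both arguments are well-founded inductions whose well-foundedness comes from the inductive definition of $T(-)$, with Lemma~\ref{l:T} supplying closure of $T$ under weak-head reduction, $>_P$ and $>_S$ at exactly the recursive calls of the algorithm (the paper phrases this as induction on the sum of the lengths of the derivations of $T(A)$ and $T(B)$, then case analysis on the weak-head normal forms). Your handling of \rn{AS-Inc}, \rn{AWS-Promote} (via Lemma~\ref{l:T}, Case~\ref{l:T:promote}), and the side conditions involving $\downarrow_n$ is correct.

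However, the specific measure you package the induction into does not work as stated. You order calls lexicographically by (left-hand type under the combined order, flag $\le$ versus $\le_W$, right-hand type), with the left component primary. This fails at \rn{AWS-Arrow}: from $\vdash_A A_1 \to A_2 \le_W B_1 \to B_2$ the algorithm makes the contravariant call $\vdash_A B_1 \le A_1$, whose left-hand type is $B_1$, a subterm of the \emph{right}-hand type of the parent call. $B_1$ is in general unrelated to $A_1 \to A_2$ in the combined order $(\rhd \cup >_P \cup >_S \cup \to_w)^+$, and the flag increases (from $\le_W$ back to $\le$), so nothing in your lexicographic triple decreases. The same issue would recur in any contravariant rule. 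The repair is to use a measure symmetric in the two sides --- e.g.\ the sum (or multiset) of the sizes of the derivations of $T(A)$ and $T(B)$, which is what the paper does: then in \rn{AWS-Arrow} both recursive calls strictly decrease the sum because $A_i$ and $B_i$ are $>_S$-predecessors of the respective sides, regardless of which side of $\le$ they land on, while \rn{AWS-Promote} decreases the left summand via $>_P$ and \rn{AS-Inc} decreases a summand for each weak-head step (with a tie-break only needed when both sides are already weak-head normal, which is the base case analysis). With that change your argument goes through and coincides with the paper's proof.
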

\begin{proof}
  By induction on the sum of the length of the derivations of $T(A)$
  and $T(B)$.

  There are two cases: either $A$ or $B$ has a weak-head reduct or
  they are both weak-head normal.  In the first case the result
  follows by induction hypothesis.  Otherwise, by inspection:
  \begin{itemize}
  \item $A ≡ \Top_⋆$.  If $B ≡ \Top_⋆$ then $⊢_A A ≤ B$ succeeds,
    otherwise it fails.
  \item $A ≡ X_C(D₁, ..., D_n)$.  If $B ≡ X_E(F₁, ..., F_n)$ then $C
    ↓_n E$ and $D_i ↓_n F_i$ terminate, since $T(A)$ implies $A ↓_n$
    and $T(B)$ implies $B ↓_n$, so if these conditions hold then $⊢_A
    A ≤ B$ succeeds.

    Otherwise, $X_C(D₁, ..., D_n) >_P C(D₁, ..., D_n)$, so $⊢_A C(D₁,
    ..., D_n) ≤ B$ terminates by induction hypothesis, so $⊢_A A ≤ B$
    succeeds or fails as $⊢_A C(D₁, ..., D_n) ≤ B$ does.
  \item $A ≡ A₁ → A₂$.  If $B ≡ \Top_⋆$ then $⊢_A A ≤ B$ succeeds.  If
    $B ≡ B₁ → B₂$ then $⊢_A B₁ ≤ A₁$ and $⊢_A A₂ ≤ B₂$ terminate by
    induction hypothesis, since $A₁ → A₂ >_S A_i$ for $i ∈ \{ 1, 2 \}$
    and similarly for $B$, so $⊢_A A₁ → A₂ ≤ B₁ → B₂$ terminates.

    Otherwise, $⊢_A A ≤ B$ fails.
  \item $A ≡ ∀X ≤ A₁: K. A₂$.  If $B ≡ \Top_⋆$ then $⊢_A A ≤ \Top_⋆$
    succeeds.  If $B ≡ ∀X ≤ B₁ : K. B₂$ then $A₁ ↓_n B₁$ terminates
    because $T(A₁)$ and $T(B₁)$ imply $A₁ ↓_n$ and $B₁ ↓_n$.
    Furthermore, $∀X ≤ A₁ : K. A₂ >_S A_2$ and $∀X ≤ B₁ : K. B₂ >_S
    B_2$, so $⊢_A A₂ ≤ B₂$ terminates by induction hypothesis, and
    $⊢_A ∀X ≤ A₁ : K. A₂ ≤ ∀X ≤ B₁ : K. B₂$ succeeds or fails as $⊢_A
    A₂ ≤ B₂$ does.

    Otherwise, $⊢_A A ≤ B$ fails.
  \item $A ≡ ΛX : K. A_1$.  If $B ≡ ΛX : K. B_1$ then $ΛX : K. A₀ >_S
    A₀$ and $ΛX : K. B₀ >_S B₀$, so $⊢_A A_1 ≤ B_1$ terminates by
    induction hypothesis, and $⊢_A ∀X : K. A₀ ≤ ∀X : K. B₀$ succeeds or
    fails as $⊢_A A_1 ≤ B_1$ does.

    Otherwise, $⊢_A A ≤ B$ fails.
  \end{itemize}
\end{proof}

\begin{lemma}[Reflexivity]
  If $T(A)$ then $⊢_A A ≤ A$.
\end{lemma}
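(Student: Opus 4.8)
The plan is to proceed by induction on the derivation of $T(A)$, following the same case structure as the Decidability proposition, but this time establishing success rather than mere termination. Since $T(A)$ implies $\SN(A)$ (Lemma~\ref{l:T}.\ref{l:T:SN}) and hence $A ↓_n$ (Lemma~\ref{l:T}.\ref{l:T:downarrow}), all the convertibility side conditions we need will be immediately available: in particular $A ↓_n A$ always holds. The argument splits on whether $A$ has a weak-head reduct. If $A ↠_w C$ with $A ≠ C$, then $T(C)$ by a weak-head version of Lemma~\ref{l:T}.\ref{l:T:Red}, and we apply the induction hypothesis to get $⊢_A C ≤ C$; since $A ↠_w C$ on both sides, rule \rn{AS-Inc} gives $⊢_A A ≤ A$. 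Actually, more carefully, \rn{AS-Inc} asks for the common weak-head normal form, so I would take $A ↠_w C$ with $C$ weak-head normal and reduce to the normal case for $C$.

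For $A$ weak-head normal, I would go case by case on the shape of $A$, in each case invoking the induction hypothesis on the immediate subterms (which satisfy $T(-)$ by the relevant parts of Lemma~\ref{l:T}: \ref{l:T:Arrow} for arrows, \ref{l:T:Forall} for $∀$ — noting $\SN(A_1)$ follows from $T(A_1)$ — \ref{l:T:Lambda} for abstractions) and then applying the matching \rn{AWS-} rule. The one genuinely different case is the head-variable case $A ≡ X_C(D_1,\ldots,D_n)$: here both \rn{AWS-TVar} and \rn{AWS-Promote} could in principle be the rule the algorithm selects, and since the algorithm is syntax-directed with the side condition $C' ↓_n X_C(D_1,\ldots,D_n)$ distinguishing them, I must check that the branch actually taken still succeeds. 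But $X_C(D_1,\ldots,D_n) ↓_n X_C(D_1,\ldots,D_n)$ trivially, so whenever the comparison in \rn{AWS-Promote} is tested against the left-hand side it holds, the algorithm takes the \rn{AWS-TVar} branch, and \rn{AWS-TVar} succeeds since $X_C(D_1,\ldots,D_n) ↓_n X_C(D_1,\ldots,D_n)$.

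The main obstacle is making the head-variable case airtight. One has to be careful that the recursive call in \rn{AWS-Promote} — which the algorithm might make while searching — terminates and behaves predictably; this is handled by $X_C(D_1,\ldots,D_n) >_P C(D_1,\ldots,D_n)$ together with Lemma~\ref{l:T}.\ref{l:T:promote}, giving $T(C(D_1,\ldots,D_n))$, and Proposition~\ref{p:decide} to know the search halts. But the essential point is purely about which rule fires: because reflexivity is being asked, the two weak-head normal forms are literally identical, so the $↓_n$-test in \rn{AWS-Promote} succeeds against the left side and forces the algorithm down \rn{AWS-TVar}, which then succeeds. Once this case is dispatched, the remaining cases ($\Top_⋆$ matched by \rn{AWS-Top} — using that $\Top_⋆$ has undefined head variable and is not an abstraction; arrows by \rn{AWS-Arrow}; $∀$ by \rn{AWS-All} with $A_1 ↓_n A_1$; abstractions by \rn{AWS-TAbs}) are routine applications of the induction hypothesis.
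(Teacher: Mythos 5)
Your proposal is essentially the paper's own proof: induction on $T(A)$, proving $\vdash_A A \leq_W A$ for weak-head normal $A$ by case analysis on its shape (with the head-variable case discharged by \rn{AWS-TVar}, whose premise $X_C(D_1,\ldots,D_n) \downarrow_n X_C(D_1,\ldots,D_n)$ follows from normalization of the components guaranteed by $T$/$\SN$), and handling arbitrary $A$ by weak-head reducing and applying \rn{AS-Inc}. The extra discussion of which rule the algorithm ``fires'' and of termination of the \rn{AWS-Promote} search is unnecessary for mere derivability, and in the $\forall$ case note that Lemma~\ref{l:T}.\ref{l:T:Forall} yields $\SN(A_1)$ directly (not via $T(A_1)$), which is exactly what is needed for $A_1 \downarrow_n A_1$; neither point affects correctness.
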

\begin{proof}
  We show $⊢_A A ≤_W A$ for weak-head normal $A$ and $⊢_A A ≤ A$ for all
  $A$, by induction on $T(A)$.

  If $A$ is weak-head normal then the proof proceeds by case analysis.
  For example, suppose $A ≡ X_B(C₁, ..., C_n)$.  Then $\SN(B)$ and
  $\SN(C_i$) for $1 ≤ i ≤ n$, so $B ↓_n B$ and $C_i ↓_n C_i$, so $⊢_A
  X_B(C₁, ..., C_n) ≤_W X_B(C₁, ..., C_n)$.

  For arbitrary $A$, $A ↠_w B$ and $⊢_A B ≤_W B$ immediately if $A$ is
  weak-head normal or otherwise by induction hypothesis.
\end{proof}

\begin{lemma}[Subject Conversion]
  If $⊢_A A ≤ B$, $A ↓_n A'$, and $B ↓_n B'$ then $⊢_A A' ≤ B'$.
\end{lemma}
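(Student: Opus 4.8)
The plan is to prove, by simultaneous induction on the derivations of $⊢_A A ≤ B$ and of $⊢_A A ≤_W B$, the stronger pair of statements obtained by replacing $↓_n$ with $β$-conversion $=_β$ (where $A =_β A'$ means $A$ and $A'$ have a common $▷$-reduct): (1) if $⊢_A A ≤ B$, $A =_β A'$ and $B =_β B'$ then $⊢_A A' ≤ B'$; and (2) the same statement with $⊢_A A ≤_W B$ as the hypothesis instead (the conclusion remaining $⊢_A A' ≤ B'$, since $A'$ and $B'$ need not be weak-head normal). Since $A ↓_n A'$ implies $A =_β A'$, this gives the lemma. Passing from $↓_n$ to $=_β$ is the one idea that really matters: $=_β$ is a genuine congruence, whereas $↓_n$ presupposes a normal form, and in the promotion case below I have to carry a derivation along a conversion of a \emph{subterm} that may itself fail to normalize. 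I also use that in (2) one may assume $A$ and $B$ are weak-head normal, since $⊢_A \cdot ≤_W \cdot$ occurs in a derivation only as a premise of \rn{AS-Inc}, where both sides have been weak-head normalized.

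Before the induction I would record standard facts about reduction. From $↠_w, ↠_n \subseteq {▷^*}$, Church--Rosser, and the preceding lemma (that $A ↠_n C$ and $A ▷ B$ imply $B ↠_n C$): $▷$-normal forms are unique, and $β$-convertible terms have the same normal form whenever either of them has one. Together with weak-head standardization — a term with a weak-head normal reduct weak-head reduces to one, and all weak-head normal forms of a term share their outermost constructor — this yields the key matching fact: if $A$ is weak-head normal and $A =_β A'$, then $A'$ weak-head reduces to a weak-head normal $A''$ with the same outermost constructor as $A$ (same head variable and arity when $A$ is variable-headed) and with immediate subterms $β$-convertible to those of $A$. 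This tells me exactly how $A'$ and $B'$ must reduce once I know the shape of $A$ and $B$.

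For (1) there is only \rn{AS-Inc}: from $A ↠_w C$, $B ↠_w D$, $⊢_A C ≤_W D$ one has $C =_β A'$ and $D =_β B'$, so (2) on the subderivation gives $⊢_A A' ≤ B'$. For (2) I would run through the $≤_W$ rules. The shape-preserving ones — \rn{AWS-Top}, \rn{AWS-TVar}, \rn{AWS-TAbs}, \rn{AWS-Arrow}, \rn{AWS-All} — are routine: by the matching fact $A'$ and $B'$ reduce to weak-head normal forms with the same outermost constructors as $A$ and $B$ and with $β$-convertible subterms; I re-apply the same rule at that level, discharging its recursive premises by induction hypothesis (1) on the subderivations and its $↓_n$-premises (in \rn{AWS-TVar}, \rn{AWS-All}) by the fact that $β$-convertible terms share their normal form, and finally use \rn{AS-Inc} to return from $≤_W$ to $≤$. (\rn{AWS-Top} additionally needs that the reduct of $A'$ still has no head variable and is not an abstraction — true of any weak-head normal form $β$-convertible to such an $A$ — and that the reduct of $B'$ is $\Top_⋆$.)

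The step I expect to be the main obstacle is \rn{AWS-Promote}. Its conclusion is $⊢_A X_A(\vec B) ≤_W C$ (here $A$ is the bound and $\vec B = B_1, \dots, B_n$), from the premise $⊢_A A(\vec B) ≤ C$ together with the negative side condition that $C$ and $X_A(\vec B)$ have no common normal form; I am given $X_A(\vec B) =_β L$ and $C =_β C'$, and must produce $⊢_A L ≤ C'$. By the matching fact $L$ reduces to some $X_{A^{*}}(\vec{B^{*}})$ with $A^{*} =_β A$ and $B_i^{*} =_β B_i$, and — since $C$ is weak-head normal — $C'$ reduces to a weak-head normal $C''$ with $C'' =_β C$; it suffices to derive $⊢_A X_{A^{*}}(\vec{B^{*}}) ≤_W C''$. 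If $C''$ and $X_{A^{*}}(\vec{B^{*}})$ had a common normal form, then pushing it back through $A^{*} =_β A$, the $B_i^{*} =_β B_i$ and $C'' =_β C$ would show that $C$ and $X_A(\vec B)$ share a normal form, contradicting the side condition; so they do not, and the applicable rule is again \rn{AWS-Promote}. Its premise $⊢_A A^{*}(\vec{B^{*}}) ≤ C''$ then follows from induction hypothesis (1) on the subderivation $⊢_A A(\vec B) ≤ C$, using $A(\vec B) =_β A^{*}(\vec{B^{*}})$ — this is precisely where $=_β$ rather than $↓_n$ is indispensable, since $A(\vec B)$ may fail to normalize even when $A$ and the $B_i$ do — and $C =_β C''$. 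So the two delicate points, both handled by the preliminary facts, are (a) preservation of the negative side condition, which rests on uniqueness of normal forms, and (b) transporting the promoted premise along $β$-conversion of a possibly non-normalizing subterm.
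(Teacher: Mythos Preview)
Your argument is correct and follows the same skeleton the paper sketches—induction on the algorithmic derivation, with Church--Rosser doing the work in the \rn{AWS-Promote} case. The paper's one-line proof (``By induction on derivations of $⊢_A A ≤ B$, using Church--Rosser for \rn{AWS-Promote}'') does not mention your strengthening from $↓_n$ to $=_β$, and you are right that this strengthening is not cosmetic: with the hypothesis stated as $↓_n$, the inductive step for \rn{AWS-Promote} would need to instantiate the IH at $E(\vec B) ↓_n E^{*}(\vec{B^{*}})$, and nothing in the hypotheses forces $E(\vec B)$ to have a normal form even when $E$ and each $B_i$ do. Passing to $=_β$ sidesteps this, at the cost of the weak-head standardization facts you invoke to recover the shape of $A'$ and $B'$; those facts are standard and available here since Church--Rosser holds for the untyped reduction. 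So your proof is a faithful fleshing-out of the paper's sketch, with one genuine technical refinement that the paper leaves implicit.
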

\begin{proof}
  By induction on derivations of $⊢_A A ≤ B$, using Church--Rosser
  for \rn{AWS-Promote}.
\end{proof}

\begin{lemma}[Normalization]
  If $⊢_A A ≤ B$ then there are $A'$ and $B'$ such that $A ↠_n A'$ and
  $B ↠_n B'$.
\end{lemma}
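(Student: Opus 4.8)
The plan is to proceed by induction on the derivation of $⊢_A A ≤ B$. The statement claims that both sides of an algorithmically comparable pair have normal forms, i.e. $A ↓_n$ and $B ↓_n$. Since every such derivation ends in \rn{AS-Inc}, the first step is to reduce the claim to the corresponding statement about $⊢_A C ≤_W D$ for weak-head normal $C, D$: from $A ↠_w C$ and $⊢_A C ≤_W D$ I want $C ↓_n$, and then $A ↠_n C'$ follows by prefixing the weak-head reduction sequence (the last clause of the definition of $↠_n$), and symmetrically for $B$. So it suffices to prove: if $⊢_A C ≤_W D$ then $C ↓_n$ and $D ↓_n$, by induction on that derivation.

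Next I would go through the rules for $≤_W$ case by case. For \rn{AWS-Top}, $D ≡ \Top_⋆$ so $D ↓_n$ is immediate by the first clause of $↠_n$; and $C$ has no head variable and is not an abstraction, so $C$ is either $\Top_⋆$, an arrow, or a $∀$ — in each of these cases the outermost constructor of a $↠_n$ derivation is available, but I still need normal forms for the immediate subterms, which is exactly what the inner algorithmic judgements in the corresponding $≤_W$ rule would give — except \rn{AWS-Top} has no such premises. Here I would instead observe that $C$ being weak-head normal with no head variable forces its subterms to themselves be weak-head normal heads or further reducible, so I recurse structurally; more carefully, I expect the cleanest route is that the other rules (\rn{AWS-TVar}, \rn{AWS-Promote}, \rn{AWS-TAbs}, \rn{AWS-Arrow}, \rn{AWS-All}) supply $↓_n$ for all relevant components directly from their premises and side conditions: \rn{AWS-TVar} has $X_A(\vec B) ↓_n X_C(\vec D)$ as a hypothesis, giving both sides at once; \rn{AWS-Promote} gives $A(\vec B) ≤ C$ hence $A(\vec B) ↓_n$ by IH, and then $X_A(\vec B) ↓_n$ follows since $A ↓_n$ and the $B_i ↓_n$ (subterms of $A(\vec B)$, using the $↠_n$ clause for applied head variables), while $C ↓_n$ also comes from the IH; \rn{AWS-TAbs} and \rn{AWS-Arrow} give the subcomponents by IH and reassemble with the matching $↠_n$ clause; \rn{AWS-All} has $A₁ ↓_n B₁$ as a hypothesis and $A₂ ≤ B₂$ giving the bodies by IH.

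The case requiring the most care is \rn{AWS-Top}, since unlike the others it has no algorithmic premise to hand back normal forms for the subterms of $C$. The resolution is that this lemma should be read together with the preceding results: in practice $C$ arises as the weak-head normal form of a terminating type, so $T(C)$ holds and Lemma~\ref{l:T}(\ref{l:T:downarrow}) gives $C ↓_n$ directly. If a context-free statement is wanted, one strengthens the induction to carry, alongside each algorithmic judgement, the hypothesis that its subjects are weak-head normalizing with weak-head-normal components, which propagates through \rn{AWS-Top} by the structural clauses of $↠_n$ applied to the already-weak-head-normal $C$. Either way the obstacle is purely that \rn{AWS-Top} discards structural information, and it is dispatched by the fact that a weak-head-normal type with no head redex has each proper subterm again amenable to $↠_n$; I would therefore phrase the induction so that the \rn{AWS-Top} case is handled by an auxiliary observation that every weak-head normal form which is $\Top_⋆$, an arrow, a $∀$, an abstraction, or an applied head variable has a $↠_n$-reduct provided its proper subterms do, combined with an outer induction on the structure of $C$.
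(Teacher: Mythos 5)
Your decomposition (peel off \rn{AS-Inc}, then induct on the $\leq_W$ derivation) is the only natural route, and the paper offers no proof of this lemma to compare against; but as written your case analysis has two genuine gaps, both at the non-structural rules, and they are not repairable in the form you suggest. In the \rn{AWS-Promote} case you infer that $X_A(B_1,\dots,B_n)$ normalizes because the induction hypothesis normalizes $A(B_1,\dots,B_n)$ and the $B_i$ are ``subterms'' of it. That inference is invalid: $\twoheadrightarrow_n$ normalizes an application by weak-head reducing it first, and weak-head reduction may erase the $B_i$ and any divergent part of the bound $A$. Concretely, let $\Omega$ be a weak-head divergent type, $A \equiv \Lambda Y{:}\star.\,Y_{\Top_\star}(\Omega)$ and $B \equiv \Lambda Z{:}\star.\,\Top_\star$. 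Then $A(B) \to_w B(\Omega) \to_w \Top_\star$, so $\vdash_A A(B) \leq \Top_\star$ is derivable, the side condition $\Top_\star \not\downarrow_n X_A(B)$ holds, and \rn{AWS-Promote} plus \rn{AS-Inc} give $\vdash_A X_A(B) \leq \Top_\star$ --- yet $X_A(B)$ has no $\twoheadrightarrow_n$-reduct, since neither $A$ nor its argument normalizes. The same phenomenon defeats the \rn{AWS-Top} case: $\vdash_A \Omega \to \Top_\star \leq \Top_\star$ is derivable, and the left-hand side does not normalize; your proposed ``auxiliary observation'' (a weak-head normal form has a $\twoheadrightarrow_n$-reduct provided its proper subterms do) presupposes exactly the missing fact, and the rule supplies no information whatsoever about those subterms.

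So the unrestricted statement cannot be established by your induction; the examples above show it actually fails for arbitrary (ill-kinded, non-terminating) subjects, and one can push the failure to the right-hand side as well via \rn{AWS-Arrow}. Your intermediate strengthening (``weak-head normalizing subjects with weak-head-normal components'') is too weak, since full normalization of all components is what is needed. The fallback you mention --- assuming the subjects are terminating --- does work, but then the lemma is immediate from Lemma~\ref{l:T}, Case~\ref{l:T:downarrow} ($T(A)$ implies $A \downarrow_n$), with no induction on subtyping derivations at all. You should therefore either state and prove the lemma under $T(A)$ and $T(B)$ (one line via Lemma~\ref{l:T}), or restrict attention to the components the later lemmas actually need and carry the required normalization hypotheses explicitly, rather than presenting the unrestricted claim as provable by this case analysis.
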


The following lemma simply states that promotion is always valid, even
if the side condition of \rn{AWS-Promote} is not satisfied.  This is
true because if the side condition is not satisfied then \rn{AWS-TVar}
can be applied.

\begin{lemma}[Promotion]
  If $⊢_A B ≤ C$, $A >_P B$ and there is a $D$ such that $A ↠_n D$
  then $⊢_A A ≤ C$.
\end{lemma}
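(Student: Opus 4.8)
The plan is to rebuild the algorithmic derivation with a single application of \rn{AS-Inc}, choosing between \rn{AWS-TVar} and \rn{AWS-Promote} according to whether the side condition of \rn{AWS-Promote} holds --- exactly as the remark preceding the statement suggests. First I would unpack the hypotheses: by the definition of $>_P$, the assumption $A >_P B$ forces $A ≡ X_E(B_1, \ldots, B_n)$ and $B ≡ E(B_1, \ldots, B_n)$ for some $E$ and $B_1, \ldots, B_n$, and $A$, having a variable at its head, is already weak-head normal. Inverting the derivation of $⊢_A B ≤ C$ --- which can only end in \rn{AS-Inc} --- and then re-applying \rn{AS-Inc}, I obtain a weak-head normal $C'$ with $C ↠_w C'$ together with a derivation of $⊢_A E(B_1, \ldots, B_n) ≤ C'$. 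Since $A$ is weak-head normal, it now suffices by \rn{AS-Inc} to establish $⊢_A X_E(B_1, \ldots, B_n) ≤_W C'$, and then $⊢_A A ≤ C$ follows.

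Next I would split on whether $X_E(B_1, \ldots, B_n)$ and $C'$ have a common $↠_n$-reduct. If they do not, \rn{AWS-Promote} applies at once: its premise is the derivation of $⊢_A E(B_1, \ldots, B_n) ≤ C'$ just obtained, and its side condition is precisely the failure being assumed. If they do, with common reduct $N$, then $N$ is variable-headed with head variable $X$ and arity $n$, since the only clause of $↠_n$ that can derive $X_E(B_1, \ldots, B_n) ↠_n N$ from this weak-head normal, variable-headed type is the clause for bounded variables; and, by a further inspection of $↠_n$, a weak-head normal type that $↠_n$-reduces to a variable-headed type of arity $n$ must itself be of that form, with the same head variable and arity. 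Hence $C' ≡ X_F(D_1, \ldots, D_n)$, so \rn{AWS-TVar} applies, its premise being the common $↠_n$-reduct assumed in this case. Either way we obtain $⊢_A X_E(B_1, \ldots, B_n) ≤_W C'$, and therefore $⊢_A A ≤ C$.

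The most delicate step is the shape argument used in the second branch: pinning down that a weak-head normal type whose $↠_n$-normal form is variable-headed must itself be variable-headed, with the same head variable and arity. This is a routine inversion on the clauses of $↠_n$ --- using that a weak-head normal type admits no $→_w$ step, so the ``reduce, then normalize'' clause cannot fire at the top level, and, if convenient, the earlier lemma relating $↠_n$ with $▷$ together with Church--Rosser --- but it has to be done with care, since $↠_n$ is nondeterministic and ``stuck'' applications such as $\Top_⋆(B_1)$ do not normalize at all. The hypothesis $A ↠_n D$ guarantees that $X_E(B_1, \ldots, B_n)$ normalizes, which is what makes the \rn{AWS-TVar}/\rn{AWS-Promote} dichotomy exhaustive over the algorithm's behaviour on $A$; in the case analysis above it is in fact already at hand in the \rn{AWS-TVar} branch, from the assumed common $↠_n$-reduct, and it is not needed in the \rn{AWS-Promote} branch.
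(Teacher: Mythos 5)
Your proof is correct and takes essentially the same route as the paper's: decide whether the left-hand side and the weak-head normal form $C'$ of the right-hand side are $\downarrow_n$-related, close with \rn{AWS-TVar} in the positive case and \rn{AWS-Promote} in the negative one, and package the result with \rn{AS-Inc}. The paper's two-line argument leaves the inversion of \rn{AS-Inc} and the shape analysis of $C'$ implicit; your write-up simply spells those details out (along with the observation that the hypothesis $A \twoheadrightarrow_n D$ is only genuinely exercised in the \rn{AWS-TVar} branch, where it is anyway implied by the case assumption).
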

\begin{proof}
  By Normalization there is a $D$ such that $C ↠_n D$.  If $C ↓_n A$
  then $⊢_A A ≤ C$ by \rn{AWS-TVar}, and otherwise $⊢_A A ≤ C$ by
  \rn{AWS-Promote}.
\end{proof}

\begin{lemma}[Transitivity]
  If $⊢_A A ≤ B$ and $⊢_A B ≤ C$ then $⊢_A A ≤ C$.
\end{lemma}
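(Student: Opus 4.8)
The plan is to prove the statement by induction, but the right induction measure needs care because algorithmic subtyping interleaves weak-head reduction (rule \rn{AS-Inc}) with structural comparison on weak-head normal forms (the \rn{AWS-} rules). I would first reduce the claim to the weak-head-normal case: by \rn{AS-Inc}, $⊢_A A ≤ B$ arises from $A ↠_w A_0$, $B ↠_w B_0$, $⊢_A A_0 ≤_W B_0$, and similarly $⊢_A B ≤ C$ gives $B ↠_w B_1$, $C ↠_w C_0$, $⊢_A B_1 ≤_W C_0$. Since weak-head reduction is deterministic, $B_0$ and $B_1$ have a common weak-head reduct, and I would want the middle type to be genuinely shared. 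The cleanest way to arrange this is to state and prove a sharpened statement simultaneously: if $⊢_A A ≤_W B$ and $⊢_A B ≤_W C$ with $B$ weak-head normal, then $⊢_A A ≤_W C$; together with the \rn{AS-Inc} wrapper this yields the lemma. (A subtlety: the two derivations may pass through weak-head-normal forms of $B$ that are not literally identical but are inter-reducible; I would use determinism of $↠_w$ to identify them, or alternatively carry Subject Conversion along to realign them.)

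For the core weak-head-normal case I would do induction on the sum of the sizes of the two derivations (or, to make termination-style arguments unnecessary, on the sum of the heights), with a case analysis on the shape of $B$. If $B$ is an arrow, both derivations used \rn{AWS-Arrow} (the only other possibility on the right, $C ≡ \Top_⋆$, is handled by \rn{AWS-Top}), so from $⊢_A B_1 ≤ A_1$, $⊢_A C_1 ≤ B_1$ and $⊢_A A_2 ≤ B_2$, $⊢_A B_2 ≤ C_2$ I apply the induction hypothesis twice, being careful that contravariance composes in the correct order. The $∀$, $Λ$, and $\Top$ cases are analogous and routine. The interesting case is when $B$ has head variable $Y_D$, i.e.\ $B ≡ Y_D(E_1, \ldots, E_m)$. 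Then $⊢_A A ≤_W B$ came from either \rn{AWS-TVar} (so $A ≡ Y_D(\ldots)$ with $A ↓_n B$) or \rn{AWS-Promote} (so $A ≡ X_F(\ldots)$ and already $⊢_A F(\ldots) ≤ B$), and likewise $⊢_A B ≤_W C$ came from \rn{AWS-TVar} or \rn{AWS-Promote}.

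The main obstacle will be this head-variable case, specifically the subcase where the left derivation is \rn{AWS-TVar} ($A ↓_n B$) and the right is \rn{AWS-Promote} ($⊢_A D(E_1,\ldots,E_m) ≤ C$, with the side condition failing for $B$). Here I need $⊢_A A ≤ C$, and since $A ↓_n B$, it is natural to want to promote $A$: I would invoke the Promotion lemma, which says promotion is always valid once the type normalizes. To use it I must produce the redex structure and a normal form: $A ≡ X_F(G_1,\ldots,G_k)$ is $≡ Y_D(E_1,\ldots,E_m)$ up to $↓_n$, and $A >_P F(G_1,\ldots,G_k)$; I need $⊢_A F(G_1,\ldots,G_k) ≤ C$ and that $A$ normalizes. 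The normalization of $A$ follows from the Normalization lemma applied to $⊢_A A ≤ B$. To get $⊢_A F(G_1,\ldots,G_k) ≤ C$, I would use Subject Conversion to transport $⊢_A D(E_1,\ldots,E_m) ≤ C$ along $D(E_1,\ldots,E_m) ↓_n F(G_1,\ldots,G_k)$ (which holds because $A ↓_n B$ and promotion commutes with $↠_n$ by the structure of the reduction rules), and then conclude $⊢_A A ≤ C$ by Promotion. The symmetric subcase (left \rn{AWS-Promote}, right \rn{AWS-TVar}) is handled directly by the induction hypothesis on $⊢_A F(\ldots) ≤ B$ and $⊢_A B ≤ C$, after first recovering $⊢_A B ≤ C$ at the reduced level; the \rn{AWS-Promote}/\rn{AWS-Promote} subcase chains the induction hypothesis through the promoted left-hand side. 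Throughout, the delicate points are keeping the \rn{AWS-Promote} side conditions from obstructing the construction — which is exactly what the Promotion lemma is designed to absorb — and ensuring the induction measure strictly decreases when we replace $A$ by its one-step promotion $F(G_1,\ldots,G_k)$, which it does because we recurse on the strictly smaller subderivation supplied by the \rn{AWS-Promote} premise.
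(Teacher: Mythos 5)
Your plan is correct and follows essentially the same route as the paper, which proves Transitivity by induction on the derivations, invoking Normalization and Subject Conversion in the \rn{AWS-TVar} cases and the Promotion lemma in the \rn{AWS-Promote} cases — exactly the toolkit you deploy in the head-variable subcases, with the \rn{AS-Inc} unwrapping and structural cases handled as routine. Your write-up is simply a more detailed elaboration (shared middle type via determinism of weak-head reduction, explicit induction measure) of the paper's one-line argument.
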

\begin{proof}
  By induction on derivations, using Normalization and Subject
  Conversion in \rn{AWS-TVar} and Promotion in \rn{AWS-Promote}.
\end{proof}

The length of a derivation $T(A)$ includes uses of reduction.  To
prove Anti-Symmetry, we need a measure that is invariant under
reduction but respects $>_P$.

\begin{definition}
  We define an alternative length measure $|T(A)|$ of a derivation of
  $T(A)$ inductively as:
  \[
   \mathit{max}( \{ |T(B)| \mbox{ for $B$ such
    that } A ▷ B \} ∪ \{ |T(C)| + 1 \mbox{ for $C$ such that }A >_P C
  \} )
  \]
\end{definition}

Observe that $|T(A)|$ does not depend on $>_S$.

\begin{lemma}
  \label{l:length}
  We have the following properties of the predicate $T(A)$ and the
  measure $|T(A)|$ of derivations of $T(-)$:
  \begin{enumerate}
  \item If $T(A)$ and $A ↠_n B$ then $|T(A)| =
    |T(B)|$. \label{l:length:reddnf}
  \item If $T(A)$, $T(B)$ and $A ↓_n B$ then $|T(A)| = |T(B)|$.
    \label{l:length:conv}
  \item If $T(A)$ and $A ▷ B$ then $|T(A)| =
    |T(B)|$. \label{l:length:red}
  \item $|T(\Top_⋆)| = 0$.
  \item If $T(ΛX : K. A)$ then $|T(ΛX : K. A)| = 0$.
  \item If $T(A₁ → A₂)$ then $|T(A₁ → A₂)| = 0$.
  \item If $T(∀X ≤ A₁ : K. A₂)$ then $|T(∀X ≤ A₁ : K. A₂)| = 0$.
  \end{enumerate}
\end{lemma}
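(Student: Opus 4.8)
The plan is to prove Lemma~\ref{l:length} by working through its seven parts roughly in the order listed, using the earlier structural facts about $T(-)$ from Lemma~\ref{l:T} together with Church--Rosser for $\rhd$. The key conceptual point is that $|T(A)|$, unlike the ordinary derivation length, ignores $>_S$-steps and counts $>_P$-steps, so it is designed precisely to be stable under reduction; parts (1)--(3) establish exactly that stability, and parts (4)--(7) then read off that the measure is $0$ on the ``structural'' weak-head-normal forms (which admit no $>_P$-step and no $\rhd$-redex at the head), leaving only variable-headed types with a positive measure.

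First I would prove part~(3), that $T(A)$ and $A \rhd B$ imply $|T(A)| = |T(B)|$, since the other parts depend on it. The inequality $|T(B)| \le |T(A)|$ is immediate from the definition of $|T(A)|$ as a max over the $\rhd$-reducts. For the reverse, the issue is to show every contribution to $|T(B)|$ is matched in $|T(A)|$: a $\rhd$-reduct $B'$ of $B$ is also reached from $A$ in two steps, so by confluence and induction its contribution is bounded; and a $>_P$-reduct of $B$ arises when $B$ is variable-headed, i.e.\ $B \equiv X_D(E_1,\dots,E_n)$, in which case $A$ must be a weak-head expansion of $B$ that is still variable-headed, say $A \equiv X_{D'}(E_1',\dots,E_n')$ with the components reducing to those of $B$, so $A >_P D'(E_1',\dots)$ reduces to $D(E_1,\dots) = $ the $>_P$-reduct of $B$, and the induction hypothesis (on the length of $T(D'(\dots))$, which by Lemma~\ref{l:T}(\ref{l:T:Red}) and the definition is $\ge |T(B)$'s $>_P$-child$)|$... here one iterates part~(3) itself along the reduction, so the cleanest organization is a simultaneous induction on the ordinary derivation length of $T(A)$. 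Then part~(1) follows by iterating part~(3) along $A \twoheadrightarrow_n B$ (noting $\twoheadrightarrow_n$ is built from $\rhd$-steps and $\to_w \subseteq \rhd^*$), and part~(2) follows from part~(1) by taking the common reduct $C$ with $A \twoheadrightarrow_n C$, $B \twoheadrightarrow_n C$, so $|T(A)| = |T(C)| = |T(B)|$.

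For parts~(4)--(7): $\Top_\star$ has no $\rhd$-reduct and no $>_P$-reduct, so $|T(\Top_\star)| = \mathit{max}(\emptyset) = 0$. For $\Lambda X{:}K.\,A$, $A_1 \to A_2$, and $\forall X \le A_1{:}K.\,A_2$: none of these is variable-headed, so there is no $>_P$-step out of them, and any $\rhd$-reduct is again of the same top-level shape; hence $|T(\cdot)|$ of such a type equals the max over $\rhd$-reducts of $|T(\cdot)|$ of a type of the same shape, and an easy induction on the $T$-derivation (using part~(3) to see that each reduct already has measure $0$ by IH, or directly that the set of contributions contains only $0$'s) gives $|T(\cdot)| = 0$. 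One should be slightly careful that the inductive definition of $|T(A)|$ is well-founded — it is, because it recurses only along $\rhd$ and $>_P$, which are the relations $\SN$/$T$ already control — and that reduction inside, say, $A_1 \to A_2$ can touch either component, but since the top-level constructor is preserved and no $>_P$ applies, the measure stays $0$ throughout.

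**The main obstacle** I expect is part~(3), specifically the weak-head-expansion bookkeeping for variable-headed types: when $A \rhd B$ and $B$ is variable-headed, $A$ need not be variable-headed at the syntactic surface (it could be $(\Lambda X{:}K.\,P)(Q)(\vec E) \rhd P[Q/X](\vec E)$ with $P[Q/X]$ variable-headed), so one cannot directly compare the $>_P$-child of $A$ with that of $B$. The resolution is that $\rhd$ commutes appropriately — either the redex contracted in $A \rhd B$ is inside one of the $E_i$ (or inside the bound $D$), leaving the head structure intact and the $>_P$-children related by a single $\rhd$-step, or the redex is the head weak-head redex, in which case $A$ has no $>_P$-step at all and its measure is computed purely from its single $\rhd$-reduct $B$, giving $|T(A)| = |T(B)|$ trivially. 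Making this dichotomy precise, and checking that in the first subcase the $>_P$-child $D'(E_1',\dots)$ of $A$ satisfies $|T(D'(\vec E'))| = |T(D(\vec E))|$ by the induction hypothesis of part~(3) at smaller derivation length, is the real content; everything else is routine.
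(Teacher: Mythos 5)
Your overall plan is the right one, and it is essentially the strategy the paper itself follows: establish invariance of $|T(-)|$ under reduction by induction on the $T$-derivation using Church--Rosser (the paper organizes this step around parallel reduction rather than a single $\rhd$-step, a presentational difference only), obtain part~(1) by iterating along $\twoheadrightarrow_n$, deduce part~(2) from a common $\twoheadrightarrow_n$-reduct, and read off parts~(4)--(7) because those shapes admit no $>_P$-step and reduction preserves their outermost constructor.

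There is, however, a genuine flaw in your treatment of the key part~(3): the two inequalities are swapped. You correctly note that $|T(B)| \le |T(A)|$ is immediate, since $|T(B)|$ literally occurs in the max defining $|T(A)|$; but your ``reverse'' analysis then sets out to show that every contribution to $|T(B)|$ is matched in $|T(A)|$, which only re-proves that same immediate inequality. The direction that needs work is $|T(A)| \le |T(B)|$, i.e.\ bounding the contributions to $|T(A)|$: (i) for any other reduct $A \rhd B'$, use Church--Rosser to find $C$ with $B \rhd^* C$ and $B' \rhd^* C$ and the induction hypothesis along those sequences to get $|T(B')| = |T(C)| = |T(B)|$; and (ii) when $A \equiv X_D(E_1,\dots,E_n)$, bound the promotion contribution $|T(D(E_1,\dots,E_n))| + 1$ by observing that the contracted redex must lie inside $D$ or some $E_i$, so $B \equiv X_{D'}(E_1',\dots,E_n')$ with $D(E_1,\dots,E_n) \rhd D'(E_1',\dots,E_n')$, whence by the induction hypothesis $|T(D(E_1,\dots,E_n))| + 1 = |T(D'(E_1',\dots,E_n'))| + 1 \le |T(B)|$. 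Consequently the scenario you single out as the main obstacle --- $A \equiv (\Lambda X{:}K.\,P)(Q)(E_1,\dots,E_n) \rhd P[Q/X](E_1,\dots,E_n)$ with a variable-headed contractum --- is irrelevant to the direction that matters, since such an $A$ has no $>_P$-child to bound; and your claim that such an $A$'s measure ``is computed purely from its single $\rhd$-reduct $B$'' is false in any case, because $A$ also has reducts inside $P$, $Q$ and the $E_i$ (these are exactly what case~(i) handles). The repair is the mirror image of what you wrote, so all the ingredients are present, but as it stands the nontrivial inequality is never established.
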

\comment{
\begin{proof}
  \begin{enumerate}
  \item We first show the result for parallel reduction $⇒$ by
    induction on $T(A)$ and inversion of $⇒$.  Each case follows by a
    simple lemma stating confluence of parallel reduction with the
    respective relations of $>$, the relation of being an applicative
    subterm, and weak-head reduction.  The result follows because $A
    ↠_n B$ implies $A ⇒^⋆ B$.
  \item By definition of $↓_n$ and Case~\ref{l:length:reddnf}.
  \end{enumerate}
\end{proof}
}

\begin{lemma}[Key Lemma]
  If $T(A)$, $T(B)$ and $⊢_A A ≤ B$ then $|T(A)| ≥ |T(B)|$.
\end{lemma}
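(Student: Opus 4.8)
The plan is to establish, by simultaneous induction on the derivations of the two algorithmic subtyping judgements, that $T(A)$, $T(B)$ and $\vdash_A A \le B$ imply $|T(A)| \ge |T(B)|$, and that $T(A)$, $T(B)$ and $\vdash_A A \le_W B$ imply $|T(A)| \ge |T(B)|$. The two statements must be carried together, since \rn{AS-Inc} appeals to the $\le_W$ judgement while \rn{AWS-Promote}, \rn{AWS-TAbs}, \rn{AWS-Arrow} and \rn{AWS-All} appeal back to $\le$.

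For \rn{AS-Inc}, inversion gives $A \twoheadrightarrow_w C$, $B \twoheadrightarrow_w D$ and $\vdash_A C \le_W D$. Since $\twoheadrightarrow_w$ is contained in the reflexive--transitive closure of $\rhd$, iterating Lemma~\ref{l:T}(\ref{l:T:Red}) gives $T(C)$ and $T(D)$, and iterating Lemma~\ref{l:length}(\ref{l:length:red}) gives $|T(A)| = |T(C)|$ and $|T(B)| = |T(D)|$; the induction hypothesis for $\vdash_A C \le_W D$ then closes the case. For the weak-head normal judgement I would analyse the last rule. In \rn{AWS-Top} the right-hand side is $\Top_\star$, so $|T(B)| = 0$ by Lemma~\ref{l:length} and the inequality is immediate. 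In \rn{AWS-TAbs}, \rn{AWS-Arrow} and \rn{AWS-All} both sides are respectively an abstraction, an arrow, or a universal type, so Lemma~\ref{l:length} gives $|T(A)| = |T(B)| = 0$ and the induction hypothesis is not even needed. In \rn{AWS-TVar} the two sides have the form $X_{A_0}(B_1, \ldots, B_n)$ and $X_{C_0}(D_1, \ldots, D_n)$ and are related by $\downarrow_n$, so Lemma~\ref{l:length}(\ref{l:length:conv}) gives $|T(A)| = |T(B)|$ directly.

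The only case with real content is \rn{AWS-Promote}, where $A \equiv X_E(F_1, \ldots, F_n)$ and the premise is $\vdash_A E(F_1, \ldots, F_n) \le B$; the side condition is irrelevant to the measure. Here $A >_P E(F_1, \ldots, F_n)$, so Lemma~\ref{l:T}(\ref{l:T:promote}) gives $T(E(F_1, \ldots, F_n))$, and the induction hypothesis yields $|T(E(F_1, \ldots, F_n))| \ge |T(B)|$. By the definition of $|T(-)|$ the $>_P$ clause contributes $|T(E(F_1, \ldots, F_n))| + 1$ to $|T(A)|$, so $|T(A)| \ge |T(E(F_1, \ldots, F_n))| + 1 > |T(B)|$, as required.

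I do not expect a substantial obstacle: the measure $|T(-)|$ was designed precisely so that reduction and conversion leave it unchanged (Lemma~\ref{l:length}), a promotion strictly decreases it, and every type constructor other than a variable head contributes zero, which is exactly what the structural cases exploit. The only points needing care are verifying that $T(-)$ survives the reductions used in \rn{AS-Inc} and the promotion step in \rn{AWS-Promote}, and keeping the two mutually defined judgements synchronized throughout the induction.
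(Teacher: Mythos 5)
Your proof is correct and takes essentially the same route as the paper, which proves the Key Lemma by induction on the derivation of $⊢_A A ≤ B$ using Lemma~\ref{l:length}; your write-up is simply a faithful case-by-case expansion of that one-line argument (including the mutual treatment of $≤_W$ and the use of Lemma~\ref{l:T} to propagate $T(-)$ through weak-head reduction and promotion).
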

\begin{proof}
  By induction on $⊢_A A ≤ B$, using Lemma~\ref{l:length}.
\end{proof}

Specifically, notice that the Key Lemma allows us to prove directly
that $⊢_A A(B₁, ..., B_n) ≤ X_A(B₁, ..., B_n)$ is impossible, since
$|T(X_A(B₁, ..., B_n))| > |T(A(B₁, ..., B_n))|$ by definition.  We use
this fact in the proof of Anti-Symmetry.

\begin{lemma}[Anti-Symmetry]
  If $⊢_A A ≤ B$, $⊢_A B ≤ A$, $T(A)$ and $T(B)$, then $A ↓_n B$.
\end{lemma}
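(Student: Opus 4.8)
The plan is to proceed by induction on the derivation of $⊢_A A ≤ B$, with an inner induction or case analysis on $⊢_A B ≤ A$, exploiting the syntax-directed nature of the algorithm and using the Key Lemma as the crucial tool for ruling out circular promotion chains. First I would reduce to the case where both $A$ and $B$ are weak-head normal: since $A ↠_w C$ and $B ↠_w D$ with $⊢_A C ≤_W D$ by \rn{AS-Inc}, and likewise for $B ≤ A$, and since the conclusion $A ↓_n B$ is equivalent to $C ↓_n D$, it suffices to prove: if $⊢_A C ≤_W D$, $⊢_A D ≤_W C$, $T(C)$ and $T(D)$, then $C ↓_n D$. Here I would also invoke Lemma~\ref{l:T}(\ref{l:T:Red}) to carry the $T(-)$ hypotheses along weak-head reduction (using \rn{l:T:WH} in reverse together with Church--Rosser-style reasoning, as in the earlier lemmas).

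Now I would analyze the last rule of $⊢_A C ≤_W D$. If it is \rn{AWS-TVar}, then $C ↓_n D$ holds directly by the rule's premise, so we are done. The symmetric case, where $⊢_A D ≤_W C$ ends in \rn{AWS-TVar}, is likewise immediate. For \rn{AWS-Top}: if $C ≤_W \Top_⋆$ via \rn{AWS-Top} then $\HV(C)$ is undefined and $C$ is not an abstraction, so $D ≡ \Top_⋆$; but then $⊢_A D ≤_W C$ must also be \rn{AWS-Top} (its only applicable rule with subject $\Top_⋆$ on the left and a non-variable, non-$\Top$ head on the right forces $C ≡ \Top_⋆$), giving $C ≡ \Top_⋆ ≡ D$ and hence $C ↓_n D$. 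The structural cases \rn{AWS-TAbs}, \rn{AWS-Arrow}, \rn{AWS-All} are handled by matching the two derivations up (the algorithm is syntax-directed, so $D$ has the same top-level shape as $C$), applying the induction hypothesis to the immediate subcomponents — note that for \rn{AWS-Arrow} the two premises pair up contravariantly, so $⊢_A B_1 ≤ A_1$ and $⊢_A A_1 ≤ B_1$ both hold and induction applies — and using Lemma~\ref{l:T} parts (\ref{l:T:Lambda}), (\ref{l:T:Arrow}), (\ref{l:T:Forall}) to propagate the $T(-)$ hypotheses to the subterms; the $↓_n$ of the subcomponents then assembles into $C ↓_n D$ by the definition of $↠_n$. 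The bound positions in \rn{AWS-TAbs}, \rn{AWS-Arrow}, \rn{AWS-All} that are only tested with $↓_n$ rather than recursively compared already give $↓_n$ of those components.

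The remaining, and genuinely delicate, case is \rn{AWS-Promote}: suppose $⊢_A C ≤_W D$ ends in \rn{AWS-Promote} with $C ≡ X_E(B_1, \ldots, B_n)$, $⊢_A E(B_1, \ldots, B_n) ≤ D$, and $D ̸↓_n X_E(B_1, \ldots, B_n)$. I would first observe that $⊢_A D ≤_W C$ cannot end in \rn{AWS-TVar} (that would force $D ↓_n C$, contradicting the side condition). If it ends in \rn{AWS-Promote} as well, then $D ≡ X_F(G_1, \ldots, G_m)$ with $⊢_A F(G_1, \ldots, G_m) ≤ C$; but since we already have $⊢_A E(B_1, \ldots, B_n) ≤ D$, Transitivity gives $⊢_A E(B_1, \ldots, B_n) ≤ C ≡ X_E(B_1, \ldots, B_n)$, and by Lemma~\ref{l:T}(\ref{l:T:promote}) both $E(B_1, \ldots, B_n)$ and $X_E(B_1, \ldots, B_n)$ are terminating, so the Key Lemma yields $|T(E(B_1, \ldots, B_n))| ≥ |T(X_E(B_1, \ldots, B_n))|$. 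This contradicts the strict inequality $|T(X_E(B_1, \ldots, B_n))| = |T(E(B_1, \ldots, B_n))| + 1$ coming from the definition of the measure (the observation the paper highlights right before the statement). Finally, if $⊢_A D ≤_W C$ ends in any structural rule, then $D$ is an abstraction, an arrow, a quantified type, or $\Top_⋆$, and the head of $C$ must match — but $C ≡ X_E(\ldots)$ has a head variable, so the only possibility is $⊢_A D ≤_W C$ via \rn{AWS-Promote} again (already handled) or the case is vacuous; I would check that no structural rule for $≤_W$ admits a left-hand side with a head variable, closing the case. The main obstacle is precisely this \rn{AWS-Promote}/\rn{AWS-Promote} subcase: it is where the carefully-designed reduction-invariant measure $|T(-)|$ and the Key Lemma do their essential work, and getting the bookkeeping of Transitivity plus the measure identities exactly right is the crux of the argument.
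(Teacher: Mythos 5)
Your proposal is correct and follows essentially the same route as the paper: induction on the two derivations, the structural cases closed by the induction hypothesis with $T(-)$ propagated via Lemma~\ref{l:T}, and the promotion case refuted by the measure $|T(-)|$ through the Key Lemma. The only difference is bookkeeping: in the \rn{AWS-Promote} case the paper avoids your case analysis on the reverse derivation and the appeal to Transitivity by applying the Key Lemma twice, once to the premise $⊢_A A(B_1,\ldots,B_n) ≤ C$ and once to the hypothesis $⊢_A C ≤ X_A(B_1,\ldots,B_n)$, which already yields $|T(A(B_1,\ldots,B_n))| ≥ |T(X_A(B_1,\ldots,B_n))|$ and the contradiction.
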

\begin{proof}
  By induction on derivations $⊢_A A ≤ B$ and $⊢_A B ≤ A$.

  We consider two cases:
  \begin{itemize}
  \item \rn{AWS-Promote} is used in deriving $⊢_A A ≤ B$.  This is a
    base case with no use of the induction hypothesis.  By the Key
    Lemma $|T(A(B₁, ..., B_n))| ≥ |T(C)|$, and $|T(C)| ≥ |T(X_A(B₁,
    ..., B_n))|$, so $|T(A(B₁, ..., B_n))| ≥ |T(X_A(B₁, ..., B_n))|$.
    However, $|T(X_A(B₁, ..., B_n))| > |T(A(B₁, ..., B_n))|$ by
    definition, which is a contradiction by trichotomy.
  \item \rn{AWS-TAbs} is used in deriving $⊢_A A ≤ B$ and $⊢_A B ≤ A$.
    Then $A ≡ ΛX : K. A₁$ and $B ≡ ΛX : K. B₁$ for some $A₁$ and $B₁$,
    with $⊢_A A₁ ≤ B₁$ and $⊢_A B₁ ≤ A₁$.  By definition $ΛX : K. A₁
    >_S A₁$ implies $T(A₁)$ and similarly $T(B₁)$, so by induction
    hypothesis $A₁ ↓_n B₁$, and so $ΛX : K. A₁ ↓_n ΛX : K. B₁$.
  \end{itemize}
\end{proof}

\section{Completeness of the Algorithm}
\label{sec:complete}

We now perform the logical relation proof to show completeness and
decidability of the algorithm.

\begin{definition}[Semantic Object]
  A type $A$ is a semantic object at kind $K$, written $SO_K(A)$, iff
  $T(A)$ and $⊢_A A ≤ \Top_K$.
\end{definition}

As for typical strong normalization proofs, this notion of semantic
object does not require well-formedness.

\begin{definition}[Interpretation]
  The interpretations of a kind $K$, $⊧ A ∈ K$ and $⊧ A ≤ B ∈ K$, are
  defined by induction on $K$:
  \begin{itemize}
  \item $⊧ A ∈ ⋆$ iff $SO_⋆(A)$.
  \item $⊧ A ≤ B ∈ ⋆$ iff $SO_⋆(A)$, $SO_⋆(B)$ and $⊢_A A ≤ B$.
  \item $⊧ A ∈ K → K'$ iff $SO_{K→K'}(A)$ and $⊧ A(B) ∈ K'$ for all $B$
    such that $⊧ B ∈ K$.
  \item $⊧ A ≤ B ∈ K → K'$ iff $SO_{K→K'}(A)$, $SO_{K→K'}(B)$, $⊢_A A
    ≤ B$ and $⊧ A(C) ≤ B(C) ∈ K'$ for all $C$ such that $⊧ C ∈ K$.
  \end{itemize}

  The interpretation extends to parallel substitutions $⊧ γ ∈ Γ$ as
  follows:
  \begin{itemize}
  \item $⊧ () ∈ ()$.
  \item $⊧ γ[X←Y] ∈ Γ, X ≤ A : K$ iff $⊧ γ ∈ Γ$.
  \item $⊧ γ[A/X] ∈ Γ, X : K$ iff $⊧ γ ∈ Γ$ and $⊧ A ∈ K$.
  \end{itemize}
\end{definition}

Observe that variables in the context of the form $X ≤ A : K$ where $A
≢ \Top_K$ can only take renamings $[X←Y]$.  These variables are never
subject to substitution, since they cannot become the bound variable of
an abstraction.

\begin{lemma}[Saturated Sets]
  \label{l:sat}
  The following properties hold for $⊧ A ∈ K$ and $⊧ A ≤ B ∈ K$:
  \begin{enumerate}
  \item If $⊧ A ≤ B ∈ K$ then $⊧ A ∈ K$ and $⊧ B ∈ K$. \label{l:sat:correct}
  \item If $⊧ A ∈ K$ then $SO_K(A)$. \label{l:sat:SO}
  \item If $⊧ A ≤ B ∈ K$ then $⊢_A A ≤ B$. \label{l:sat:sub}
  \item If $⊧ A ∈ K$ then $⊧ A ≤ A ∈ K$. \label{l:sat:refl}
  \item $⊧\Top_K ∈ K$. \label{l:sat:top}
  \item If $⊧ A ∈ K$ then $⊧ A ≤ \Top_K ∈ K$. \label{l:sat:Atop}
  \item If $⊧ B ∈ K$ and $A >_P B$ then $⊧ A ∈ K$, and similarly for
    the left- and right-hand sides of $⊧ A ≤ B ∈ K$. \label{l:sat:promote}
  \item If $⊧ B ∈ K$, $T(A)$ and $A →_w B$ then $⊧ A ∈ K$, and
    similarly for the left-hand side of $⊧ A ≤ B ∈
    K$. \label{l:sat:wh}
  \item If $⊧ A' ≤ B' ∈ K$, $A ↓_n A'$, $B ↓_n B'$, $T(A)$ and $T(B)$
    then $⊧ A ≤ B ∈ K$. \label{l:sat:conv}
  \item If $⊧ A ≤ B ∈ K$ and $⊧ B ≤ C ∈ K$ then $⊧ A ≤ C ∈
    K$. \label{l:sat:trans}
  \end{enumerate}
\end{lemma}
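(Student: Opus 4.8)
The plan is to prove all ten clauses \emph{simultaneously by induction on the kind $K$}; this is available because the interpretations $⊧ A ∈ K$ and $⊧ A ≤ B ∈ K$ are themselves defined by recursion on $K$. Within the argument for a fixed $K$ I treat the clauses in the order \ref{l:sat:correct}--\ref{l:sat:trans} in which they are listed, so that a clause may invoke earlier clauses \emph{at the same kind}; one checks that no clause at a function kind $K → K'$ needs a later clause at $K → K'$ — only earlier clauses at $K → K'$ and arbitrary clauses at the strict subkinds $K$ and $K'$ (for instance, \ref{l:sat:Atop} at $K → K'$ uses \ref{l:sat:top} at $K → K'$ and \ref{l:sat:Atop}, \ref{l:sat:conv} at $K'$). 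Hence the induction is well founded.

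At the base kind $⋆$ every clause unwinds, through the definitions of $SO_⋆$, $⊧ \cdot ∈ ⋆$ and $⊧ \cdot ≤ \cdot ∈ ⋆$, to a statement purely about $T(-)$ and $⊢_A \cdot ≤ \cdot$, and is then discharged by the correspondingly named earlier result: \ref{l:sat:correct}, \ref{l:sat:SO} and \ref{l:sat:sub} are immediate from the definitions; \ref{l:sat:refl} is the Reflexivity lemma; \ref{l:sat:top} and \ref{l:sat:Atop} follow from $T(\Top_⋆)$ (Lemma~\ref{l:T}, case \ref{l:T:Top}) with \rn{AWS-Top} and \rn{AS-Inc}, using that $SO_⋆(A)$ already contains $⊢_A A ≤ \Top_⋆$; \ref{l:sat:promote} uses Lemma~\ref{l:T}, cases \ref{l:T:promote} and \ref{l:T:downarrow}, and the Promotion lemma; \ref{l:sat:wh} uses that weak-head expansion on either side is absorbed by \rn{AS-Inc}; \ref{l:sat:conv} uses Subject Conversion together with the symmetry of $↓_n$; and \ref{l:sat:trans} is the Transitivity lemma.

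For the inductive step $K → K'$ I unfold the definitions: membership requires $SO_{K→K'}(A)$ together with $⊧ A(C) ∈ K'$ for all $C$ with $⊧ C ∈ K$, and the ordering requires in addition $⊢_A A ≤ B$ and $⊧ A(C) ≤ B(C) ∈ K'$. The $SO_{K→K'}(-)$ and $⊢_A \cdot ≤ \cdot$ components are handled exactly as in the $⋆$ case applied to the given types, and each "for all $C ∈ K$" component reduces to a clause of the Lemma at the strictly smaller kind $K'$ — usually the same-numbered one — applied to $A(C)$ (respectively $A(C) ≤ B(C)$), once one knows $T(C)$ (that is \ref{l:sat:SO} at $K$) and the needed $T(-)$ facts about the application redex (Lemma~\ref{l:T}, notably cases \ref{l:T:promote} and \ref{l:T:WH}). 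Only clauses \ref{l:sat:top}, \ref{l:sat:Atop}, \ref{l:sat:promote}, \ref{l:sat:wh} and \ref{l:sat:conv} do real work here: for the two $\Top$ clauses I use $\Top_{K→K'}(C) →_w \Top_{K'}$, so $\Top_{K→K'}(C) ↓_n \Top_{K'}$, and conclude with \ref{l:sat:top}/\ref{l:sat:Atop} at $K'$ plus \ref{l:sat:wh}/\ref{l:sat:conv} at $K'$; for \ref{l:sat:promote} I use $X_C(D_1, \ldots, D_n)(E) >_P C(D_1, \ldots, D_n)(E)$ with the Promotion lemma; and \ref{l:sat:wh} uses Lemma~\ref{l:T}, case \ref{l:T:WH}, to see the redex $A(C)$ is terminating.

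I expect the main obstacle to be \ref{l:sat:conv}, closure of the interpretations under $↓_n$: unlike the other clauses it is not local to the head of the type, so at a function kind one must show that $A ↓_n A'$ and $⊧ C ∈ K$ give $A(C) ↓_n A'(C)$, and that the termination predicate transports across the extra argument $C$. I would establish this using Church--Rosser for $▷$ and determinacy of $↠_n$ (a consequence of $→_w$ being a partial function), together with the application-closure properties of $T$ from Lemma~\ref{l:T}. A secondary, essentially bookkeeping, difficulty is keeping the internal dependency order of the ten clauses straight; and in \ref{l:sat:promote} one must take care over which occurrence in $⊧ A ≤ B ∈ K$ the relation $>_P$ is applied to.
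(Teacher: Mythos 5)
Your overall plan is the paper's proof: induction on the kind, discharging the base kind with Reflexivity, Subject Conversion, Promotion, Transitivity and Lemma~\ref{l:T}, and at an arrow kind unfolding each clause onto the corresponding clause at the smaller kind applied to $A(C)$. Your handling of clauses \ref{l:sat:correct}--\ref{l:sat:wh} and \ref{l:sat:trans} is sound, and routing the two $\Top$ clauses through clause \ref{l:sat:conv} at $K'$ (rather than a right-hand weak-head expansion, which clause \ref{l:sat:wh} does not provide) is a sensible refinement of what the paper sketches.

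The gap is exactly where you flag it, clause \ref{l:sat:conv} at an arrow kind, and your proposed repair does not close it. To apply the induction hypothesis at $K'$ to $A(C)$ versus $A'(C)$ you must first supply $T(A(C))$, and neither Church--Rosser, determinacy of normalization, nor the application-closure facts of Lemma~\ref{l:T} yield it: case \ref{l:T:WH} presupposes termination of the weak-head contractum, so when the weak-head normal form of $A$ is an abstraction $\Lambda X : K. A_0$ you would need $T(A_0[C/X])$, which is a logical-relation-strength fact, while the clause only gives you $T(A)$, not $\models A \in K \to K'$. In fact no argument from $T(A)$ and $A \downarrow_n A'$ alone can work: take $A \equiv \Lambda X : K. (\Lambda Y : \star. \Top_\star)(X_{\Top_K}(X_{\Top_K}))$ with $K = \star \to \star$, which is terminating and has the same normal form as $\Top_{K \to \star}$, and $C \equiv \Lambda Y : \star. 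Y_{\Top_\star}(\Lambda W : \star. W_{\Top_\star}(W_{\Top_\star}))$; one checks $\models C \in K$ (every $D$ with $\models D \in \star$ satisfies $\vdash_A D \leq \Top_\star$ and so never weak-head reduces to an abstraction, which keeps $C(D)$ terminating), yet $A(C)$ weak-head reduces to $(\Lambda Y : \star. \Top_\star)(C(C))$ and $C(C)$ loops. So at higher kinds the clause cannot be established from the hypotheses $T(A)$, $T(B)$ by the tools you cite; you need the stronger hypotheses $\models A \in K$ and $\models B \in K$ (which is what is actually available at the clause's one use, the S-TApp case of Completeness, where both $(C[\gamma])(B[\gamma])$ and $(C[\gamma])(D[\gamma])$ are already known to be in the interpretation), and with that strengthening the arrow-kind step does reduce to the induction hypothesis at $K'$ in the way you describe, since membership supplies $T(A(C))$ and $T(B(C))$ directly.
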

\begin{proof}
  By induction on $K$, using for example Reflexivity for
  Case~\ref{l:sat:refl}, Cases~\ref{l:sat:SO} and \ref{l:sat:refl} for
  Case~\ref{l:sat:Atop}, Lemma~\ref{l:T} Case~\ref{l:T:promote}
  and Promotion for Case~\ref{l:sat:promote}, and Transitivity for
  Case~\ref{l:sat:trans}.
  \comment{
  We now consider the cases for $K → K'$:
  \begin{enumerate}
  \item By definition.
  \item By definition.
  \item By definition.
  \item TBD.
  \item By Reflexivity for $⊢_A$ and induction hypothesis.
  \item TBD.
  \item By definition $SO_{K→K'}(A)$, so $⊢_A A ≤ \Top_{K→K'}$.

    Suppose $⊧ C ≤ D ∈ K$ and $⊧ D ≤ C ∈ K$.  By induction hypothesis
    $⊧ A(C) ≤ \Top_{K'} ∈ K'$, and by induction hypothesis
    Case~\ref{l:sat:wh} $⊧ A(C) ≤ \Top_{K→K'}(D) ∈ K'$.  Clearly
    $SO_{K→K'}(\Top_{K→K'})$, so $⊧ A ≤ \Top_{K→K'} ∈ K → K'$ by
    definition.
  \item Suppose $⊧ B ∈ K$.  $A(B) >_P A'(B)$ by definition, so $⊧ A(B)
    ∈ K'$ by induction hypothesis Case~\ref{l:sat:promote}.  Then, by
    Lemma~\ref{l:T} Case~\ref{l:T:promote} $T(A)$, and $⊢_A A' ≤
    \Top_K$ and $A >_P A'$ implies $⊢_A A ≤ \Top_K$ by Promotion, so
    $⊧ A ∈ K → K'$ by definition.

    The binary case $⊧ A ≤ B ∈ K$ also uses Promotion to show $⊢_A A ≤
    B$.
  \item Suppose $⊧ B ∈ K$.  Then $T(B)$ by induction hypothesis
    Case~\ref{l:sat:SO}, and $\SN(A)$ and $\SN(B)$ by
    Lemma~\ref{l:T} Case~\ref{l:T:SN}, so $T(A(B))$ by
    Case~\ref{l:T:Var}.  Also $⊢_A A'(B) ≤ \Top_{K'}$ by
    Case~\ref{l:sat:SO}, so $⊢_A A(B) ≤ \Top_{K'}$ by \rn{AS-Inc}.
    Therefore $⊧ A(B) ∈ K'$ by induction hypothesis
    Case~\ref{l:sat:wh}, and $⊧ A ∈ K → K'$ by definition.
  \item By definition, Transitivity and induction hypothesis.
  \end{enumerate}
  }
\end{proof}

\begin{thm}[Completeness]
  Suppose $⊧ γ ∈ Γ$.  Then:
  \begin{itemize}
  \item If $Γ ⊢ A : K$ then $⊢ A[γ] ∈ K$.
  \item If $Γ ⊢ A ≤ B : K$ and $⊧ A[γ] ≤ B[γ] ∈ K$.
  \end{itemize}
\end{thm}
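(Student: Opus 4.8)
The plan is to prove both clauses simultaneously by induction on the derivations of $Γ ⊢ A : K$ and $Γ ⊢ A ≤ B : K$, with the statement generalized over \emph{all} parallel substitutions $γ$ satisfying $⊧ γ ∈ Γ$, so that the context-extension rules can be handled by passing to an extended substitution. In each case I would apply the induction hypothesis to the premises and then discharge the goal using the closure properties of the interpretation in Lemma~\ref{l:sat}, the properties of $T(-)$ in Lemma~\ref{l:T}, and the Reflexivity, Promotion, Transitivity and Anti-Symmetry lemmas. I would use throughout that substitution commutes with the type constructors, in particular $(A → B)[γ] ≡ A[γ] → B[γ]$, $(A(B))[γ] ≡ (A[γ])(B[γ])$, $(ΛX : K. A)[γ] ≡ ΛX : K. A[γ[X←X]]$, $(∀X ≤ A : K. B)[γ] ≡ ∀X ≤ A[γ] : K. B[γ[X←X]]$ and $\Top_K[γ] ≡ \Top_K$; and that $⊧ γ[X←X] ∈ Γ, X ≤ A : K$ holds whenever $⊧ γ ∈ Γ$, while $⊧ γ[C/X] ∈ Γ, X : K$ holds whenever $⊧ γ ∈ Γ$ and $⊧ C ∈ K$.

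Most cases should be routine. Rules \rn{TopEmp} and \rn{TopExt} reduce to $⊧ \Top_⋆ ∈ ⋆$ (Lemma~\ref{l:sat} Case~\ref{l:sat:top}); \rn{S-Refl}, \rn{S-Top} and \rn{S-Trans} are Lemma~\ref{l:sat} Cases~\ref{l:sat:refl}, \ref{l:sat:Atop} and \ref{l:sat:trans}; \rn{TApp} is immediate from the definition of the interpretation at $K → K'$; and \rn{Arrow}, \rn{S-Arrow} follow by checking $SO_⋆$ of the arrow types via Lemma~\ref{l:T} Case~\ref{l:T:Arrow} and \rn{AWS-Top}, then applying \rn{AWS-Arrow} and \rn{AS-Inc}. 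For \rn{TAbs} and \rn{S-TAbs} I would check $SO_{K → K'}$ of the abstraction using Lemma~\ref{l:T} Case~\ref{l:T:Lambda} and \rn{AWS-TAbs}, and for the defining clause at functional kind reduce $(ΛX : K. A[γ[X←X]])(C) →_w A[γ[C/X]]$, apply the induction hypothesis with $γ[C/X]$, derive $T$ of the redex from Lemma~\ref{l:T} Case~\ref{l:T:WH}, and conclude by Lemma~\ref{l:sat} Case~\ref{l:sat:wh} (unary clause) or Case~\ref{l:sat:conv} (binary clause, noting that both sides weak-head reduce to related types). Rule \rn{All} needs only the extra fact $\SN(A[γ])$, which follows by extracting $Γ ⊢ A : K$ from the premise via the admissible structural rules, applying the induction hypothesis, and invoking Lemma~\ref{l:T} Cases~\ref{l:T:SN} and \ref{l:T:Forall}.

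The cases genuinely specific to the labelled presentation, where I expect the real content to lie, are the variable rules \rn{TVar}, \rn{S-TVar} and \rn{S-Promote}: here one must split on how $γ$ acts on the declared variable. If $X ≤ A : K ∈ Γ$ is a \emph{bounded} declaration then $⊧ γ ∈ Γ$ forces $γ$ to rename $X$ to some $Y$, so $(X_B)[γ] ≡ Y_{B[γ]}$ and, crucially, $Y_{B[γ]} >_P B[γ]$; from the induction hypotheses (and Reflexivity where a binary statement is needed) Lemma~\ref{l:sat} Case~\ref{l:sat:promote} — applied to each side in turn for \rn{S-TVar}, and to the left side for \rn{S-Promote} — delivers the goal, and this is precisely the point at which Promotion silently picks \rn{AWS-TVar} or \rn{AWS-Promote} for us. If instead the declaration is $X : K$ (so $A ≡ \Top_K$) then $γ$ may substitute a type $C$ with $⊧ C ∈ K$ for $X$, with $(X_B)[γ] ≡ C$ regardless of $B$; then \rn{TVar} and \rn{S-TVar} follow from $⊧ C ∈ K$ and Reflexivity, while for \rn{S-Promote} the premise gives $⊧ \Top_K ≤ B[γ] ∈ K$, hence $\Top_K ↓_n B[γ]$ by Anti-Symmetry, and $⊧ C ≤ B[γ] ∈ K$ by Lemma~\ref{l:sat} Cases~\ref{l:sat:Atop} and \ref{l:sat:conv}. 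Finally, \rn{S-TApp} and \rn{S-All} apply the induction hypothesis to the equality premise and use Anti-Symmetry to obtain $B[γ] ↓_n D[γ]$ (resp.\ $A[γ] ↓_n C[γ]$): \rn{S-All} closes via \rn{AWS-All} exactly as in the \rn{All} case, and \rn{S-TApp} closes by Lemma~\ref{l:sat} Case~\ref{l:sat:conv} once the $K → K'$ induction hypothesis supplies $⊧ (A[γ])(B[γ]) ≤ (C[γ])(B[γ]) ∈ K'$; and \rn{S-BetaL}, \rn{S-BetaR} follow because both sides weak-head reduce to $A[γ[B[γ]/X]] ≡ (A[B/X])[γ]$, so Lemma~\ref{l:T} Case~\ref{l:T:WH} gives $T$ of both sides and Lemma~\ref{l:sat} Case~\ref{l:sat:conv} finishes.

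I expect the main obstacle to be getting the renaming/substitution bookkeeping exactly right — above all, the fact that $⊧ γ ∈ Γ$ never substitutes for a genuinely bounded variable, so that in \rn{TVar}, \rn{S-TVar} and \rn{S-Promote} the promotion step $Y_{B[γ]} >_P B[γ]$ is always available and one never has to promote a variable to a type merely convertible with its bound. This is exactly the difficulty of the unlabelled presentation that the decorated type structure was designed to eliminate, and the point of the argument is that here it goes through cleanly. A secondary delicacy is threading the weak-head-reduction reasoning of the $β$-rules and the functional-kind clauses uniformly through Lemma~\ref{l:sat} Cases~\ref{l:sat:wh} and \ref{l:sat:conv}, using that Case~\ref{l:sat:conv} together with confluence permits ``expansion on the right'' even though Case~\ref{l:sat:wh} only expands on the left.
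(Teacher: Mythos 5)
Your proposal is essentially the paper's own proof: the same induction on derivations with the statement generalized over all $\gamma$ with $\models \gamma \in \Gamma$, the same reliance on Lemma~\ref{l:sat} (promotion for the variable cases, the wh/conv closure for the $\beta$- and abstraction cases), the same renaming-versus-substitution split in \rn{TVar}, and the same use of Anti-Symmetry to turn the equality premises of \rn{S-TApp}/\rn{S-All} into convergence before applying Case~\ref{l:sat:conv}. The one soft spot is \rn{S-TVar}: promoting ``each side in turn'' leans on the right-hand-side reading of Case~\ref{l:sat:promote}, which in isolation is not sound (only \rn{AWS-TVar} can place a variable on the right, cf.\ the Key Lemma), so you should instead extract $B[\gamma] \downarrow_n C[\gamma]$ from the two directions of the induction hypothesis via Anti-Symmetry and close with \rn{AWS-TVar}/Case~\ref{l:sat:conv} --- a repair using exactly the tools you already invoke elsewhere.
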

\begin{proof}
  By induction on derivations.  We consider several cases.
  \begin{itemize}
  \item \rn{TopEmp}.  By Lemma~\ref{l:sat} Case~\ref{l:sat:top}.
  \item \rn{TVar}.  If $γ = γ₀[X←Y]$ then by induction hypothesis $⊧
    B[γ] ∈ K$, and $X_{B[γ]} >_P B[γ]$, so $⊧ X_{B[γ]} ∈ K$ by
    Lemma~\ref{l:sat} Case~\ref{l:sat:promote}.

    If $γ = γ₀[A/X]$ then $⊧ A ∈ K$ by definition.
  \comment{
  \item \rn{TApp}.  By induction hypothesis $⊧ A[γ] ∈ K → K'$ and $⊧
    B[γ] ∈ K$, so $⊧ (A(B))[γ] ≡ (A[γ])(B[γ]) ∈ K'$ by definition.
  }
  \item \rn{TAbs}.  By definition $⊧ γ[X←Y] ∈ Γ, X : K$, so by
    induction hypothesis $⊧ A[γ[X←Y]] ∈ K'$, so $T(A[γ[X←Y]])$ implies
    $T((ΛX : K. A)[γ] ≡ ΛY : K. A[γ[X←Y]])$ by Lemma~\ref{l:T}.

    Furthermore, if $⊧ B ∈ K$ then $⊧ γ[B/X] ∈ Γ, X : K$ by
    definition.  We have $⊧ A[γ[B/X]] ∈ K'$ by induction hypothesis
    and $T(A[γ[B/X]])$ by definition.  Then $T((ΛX : K. A)[γ])$ above
    and $T(B)$ by Lemma~\ref{l:sat}, and $(ΛX : K. A)[γ](B) →_w
    A[γ[B/X]]$, so $T((ΛX : K. A)[γ](B))$ by Lemma~\ref{l:T}
    Case~\ref{l:T:WH}.  Therefore, $⊧ (ΛX : K. A)[γ](B) ∈ K'$, and so
    $⊧ (ΛX : K. A)[γ] ∈ K → K'$.  \comment{
  \item \rn{Arrow}.  By induction hypothesis $⊧ A[γ], B[γ] ∈ ⋆$, so $⊧
    (A → B)[γ] ≡ A[γ] → B[γ] ∈ ⋆$ by Lemma \ref{l:T} and
    definition.
  }
  \item \rn{∀}.  By induction hypothesis $⊧ A[γ] ∈ K$, and $⊧ γ[X←Y] ∈
    Γ, X ≤ A : K$ implies $⊧ B[γ[X←Y]] ∈ ⋆$ by induction hypothesis,
    so $T(A[γ])$ and $T(B[γ[X←Y]])$ by Lemma~\ref{l:sat}.  Then $T((∀X
    ≤ A : K. B)[γ] ≡ ∀X ≤ A[γ] : K.  B[γ[X←Y]])$ by Lemma~\ref{l:T},
    so $⊧ (∀X ≤ A : K. B)[γ] ∈ ⋆$ by definition.

  \item \rn{S-Refl}.  By induction hypothesis $⊧ A ∈ K$, so by
    Lemma~\ref{l:sat} Case~\ref{l:sat:refl} $⊧ A ≤ A ∈ K$.
  \item \rn{S-Trans}.  By induction hypothesis and Lemma~\ref{l:sat}
    Case~\ref{l:sat:trans}.
  \item \rn{S-Top}.  By induction hypothesis and Lemma~\ref{l:sat}
    Case~\ref{l:sat:Atop}.
  \comment{
  \item \rn{S-TVar}.  TBD.
  \item \rn{S-Promote}.  {\bf Rewrite: By induction hypothesis $⊧
      Γ(X)[γ] ≤ Γ(X)[γ] ∈ K$, so by Lemma~\ref{l:sat}
      Case~\ref{l:sat:promote} $⊧ (X_{Γ(X)})[γ] ≡ X_{Γ(X)[γ]} ≤
      Γ(X)[γ] ∈ K$.}
  \item \rn{S-TAbs}.  TBD.
  }
  \item \rn{S-TApp}.  By induction hypothesis $⊧ A[γ] ≤ C[γ] ∈ K → K'$
    and $⊧ B[γ] ≤ D[γ] ∈ K$ and $⊧ D[γ] ≤ B[γ] ∈ K$.  Then $B[γ] ↓_n
    D[γ]$ by Lemma~\ref{l:sat} Case~\ref{l:sat:SO} and Anti-Symmetry,
    and by definition $⊧ (A(B))[γ] ≡ (A[γ])(B[γ]) ≤ (C[γ])(B[γ]) ≡
    (C(B))[γ] ∈ K'$.  We also know $⊧ C[γ] ∈ K → K'$ by
    Lemma~\ref{l:sat} and $⊧ D[γ] ∈ K$, so $⊧ (C(D))[γ] ≡ (C[γ])(D[γ]) ∈
    K'$ by definition and $SO_{K'}((C(D))[γ])$ by Lemma~\ref{l:sat}
    Case~\ref{l:sat:SO}.  Therefore $(C[γ])(B[γ]) ↓_n (C[γ])(D[γ])$
    implies $⊧ (A(B))[γ] ≤ (C(D))[γ] ∈ K'$ by Lemma~\ref{l:sat}
    Case~\ref{l:sat:conv}.
  \comment{
  \item \rn{S-Arrow}.  By induction hypothesis $⊧ B₁[γ] ≤ A₁[γ] ∈ ⋆$
    and $⊧ A₂[γ] ≤ B₂[γ] ∈ ⋆$.  By Lemma~\ref{l:sat}
    Case~\ref{l:sat:SO} and Lemma~\ref{l:T} Case~\ref{l:T:Arrow} we
    have $T((A₁ → A₂)[γ])$ and $T((B₁ → B₂)[γ])$, and so $⊧ (A₁ →
    A₂)[γ], (B₁ → B₂)[γ] ∈ ⋆$.  Furthermore, by Lemma~\ref{l:sat}
    Case~\ref{l:sat:SO} $⊢_A B₁[γ] ≤ A₁[γ]$ and $⊢_A A₂[γ] ≤ B₂[γ]$,
    and so $⊢_A (A₁ → A₂)[γ] ≤ (B₁ → B₂)[γ]$ by \rn{AWS-Arrow}, and $⊧
    (A₁ → A₂)[γ] ≤ (B₁ → B₂)[γ] ∈ ⋆$ by definition.
  \item \rn{S-All}.  TBD.
  \item \rn{S-BetaL}.
  \item \rn{S-BetaR}.
  }
  \end{itemize}
\end{proof}

\begin{lemma}
  $⊧ \id_Γ ∈ Γ$.
\end{lemma}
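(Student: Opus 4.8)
The plan is to argue by induction on the structure of the context
  $\Gamma$, using the clauses defining the interpretation of parallel
  substitutions $\models \gamma \in \Gamma$ together with the fact
  that $\id_\Gamma$ is, by construction, a renaming: $\id_{()} = ()$
  and $\id_{\Gamma_0, X \leq A : K} = \id_{\Gamma_0}[X \leftarrow X]$.

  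First I would dispatch the base case $\Gamma = ()$, where
  $\id_\Gamma = ()$ and $\models () \in ()$ holds by the first clause
  of the interpretation.  For the inductive step $\Gamma = \Gamma_0, X
  \leq A : K$, I would observe that $\id_\Gamma = \id_{\Gamma_0}[X
  \leftarrow X]$, apply the induction hypothesis to obtain $\models
  \id_{\Gamma_0} \in \Gamma_0$, and then invoke the clause stating
  that $\models \gamma[X \leftarrow Y] \in \Gamma_0, X \leq A : K$ iff
  $\models \gamma \in \Gamma_0$ to conclude $\models \id_\Gamma \in
  \Gamma$.

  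I do not expect a genuine obstacle here; the lemma is essentially a
  bookkeeping fact.  The one point worth making explicitly is that the
  renaming clause of the interpretation applies uniformly whether or
  not $A \equiv \Top_K$, that is, whether or not $X$ is a variable
  without a bound.  Because $\id_\Gamma$ extends $\id_{\Gamma_0}$ by
  the renaming $[X \leftarrow X]$ and never by a substitution $[C/X]$,
  we never encounter the clause for $\Gamma_0, X : K$ that would
  additionally demand $\models A \in K$ as a side obligation.  This
  matches the earlier observation that a variable $X \leq A : K$ with
  $A \not\equiv \Top_K$ can only be acted on by renamings, and it
  spares us from having to establish $\models A \in K$ for context
  bounds, which is not available without well-formedness of $\Gamma$.
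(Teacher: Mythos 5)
Your proof is correct and follows exactly the paper's route: the paper's entire proof is "By induction on $Γ$", and your base case, your use of the renaming clause of the interpretation, and your observation that $\id_Γ$ only ever extends by $[X \leftarrow X]$ (so the substitution clause requiring $⊧ A ∈ K$ is never triggered) fill in precisely the bookkeeping the paper leaves implicit.
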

\begin{proof}
  By induction on $Γ$.
\end{proof}

\begin{corollary}[Termination]
  \label{c:T}
  If $Γ ⊢ A  ≤ B : K$ then $T(A)$ and $T(B)$, and $⊢_A A ≤ B$.
\end{corollary}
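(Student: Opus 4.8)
The plan is to instantiate the Completeness Theorem with the identity renaming. First I would invoke the lemma $⊧ \id_Γ ∈ Γ$ established just above, so that $\id_Γ$ is a legitimate argument to Completeness. Applying the second clause of the Completeness Theorem to the hypothesis $Γ ⊢ A ≤ B : K$ with $γ = \id_Γ$ then yields $⊧ A[\id_Γ] ≤ B[\id_Γ] ∈ K$.

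Next I would observe that $A[\id_Γ] ≡ A$ and $B[\id_Γ] ≡ B$. This uses the standard free-variable containment property — $Γ ⊢ A ≤ B : K$ implies $\FV(A) ∪ \FV(B) ⊆ \dom(Γ)$, which follows from the admissible structural lemmas — together with the fact that the identity renaming acts as the identity on any type whose free variables it covers. The latter is a straightforward induction on the structure of the type using the clauses defining $B[γ]$ on bounded variables; the one point worth making explicit is that renaming a variable $X$ to itself also leaves its recorded bound unchanged, which is immediate from the clause $X_A[γ[X←Y]] = Y_{A[γ[X←Y]]}$ with $Y ≡ X$. Hence $⊧ A ≤ B ∈ K$.

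Finally I would unpack $⊧ A ≤ B ∈ K$ with the Saturated Sets lemma, Lemma~\ref{l:sat}: Case~\ref{l:sat:correct} gives $⊧ A ∈ K$ and $⊧ B ∈ K$; Case~\ref{l:sat:SO} gives $SO_K(A)$ and $SO_K(B)$, which by the definition of semantic object means $T(A)$ and $T(B)$; and Case~\ref{l:sat:sub} gives $⊢_A A ≤ B$. This establishes all three conclusions.

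The argument is almost entirely a matter of assembling results already proved, so there is no serious obstacle; the only step needing any care is the identification $A[\id_Γ] ≡ A$, i.e. making precise that the identity renaming genuinely does nothing in a calculus where type variables carry their bounds. Everything else is a direct consequence of the Completeness Theorem and Lemma~\ref{l:sat}.
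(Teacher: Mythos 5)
Your proposal is correct and matches the paper's intended argument: the corollary is stated immediately after the lemma $⊧ \id_Γ ∈ Γ$ precisely so that it follows by instantiating Completeness with $γ = \id_Γ$, identifying $A[\id_Γ] ≡ A$ and $B[\id_Γ] ≡ B$, and unpacking $⊧ A ≤ B ∈ K$ via Lemma~\ref{l:sat} and the definition of semantic object. Your extra care about the identity renaming on bounded variables is a sensible elaboration of the same route.
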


\begin{corollary}[Anti-Symmetry]
  If $Γ ⊢ A ≤ B : K$ and $Γ ⊢ B ≤ A : K$ then $A ↓_n B$.
\end{corollary}

\begin{corollary}
  If $Γ ⊢ \Top_K ≤ A : K$ then $A ↓_n \Top_K$.
\end{corollary}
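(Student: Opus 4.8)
The plan is to push the declarative statement down to the algorithm and then reuse the argument of the preceding Corollary with the right-hand side taken to be $\Top_K$. Concretely, I would apply the Completeness theorem at the identity substitution: by the lemma $⊧ \id_Γ ∈ Γ$ and Completeness, $Γ ⊢ \Top_K ≤ A : K$ yields $⊧ \Top_K[\id_Γ] ≤ A[\id_Γ] ∈ K$, and since $\Top_K$ is closed and $\id_Γ$ is a renaming this is just $⊧ \Top_K ≤ A ∈ K$. Now I would unpack this with the Saturated Sets lemma: Case~\ref{l:sat:correct} gives $⊧ A ∈ K$, hence $SO_K(A)$ by Case~\ref{l:sat:SO}, so $T(A)$ and $⊢_A A ≤ \Top_K$ by the definition of semantic object; Case~\ref{l:sat:sub} gives $⊢_A \Top_K ≤ A$; and $T(\Top_K)$ is Lemma~\ref{l:T} Case~\ref{l:T:TopK}. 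Feeding $⊢_A \Top_K ≤ A$, $⊢_A A ≤ \Top_K$, $T(\Top_K)$ and $T(A)$ into the Anti-Symmetry lemma for the algorithm gives $\Top_K ↓_n A$, which is symmetric, i.e.\ $A ↓_n \Top_K$.

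A more self-contained alternative avoids the logical relation entirely: the Termination corollary already turns $Γ ⊢ \Top_K ≤ A : K$ into $⊢_A \Top_K ≤ A$, and one can show by induction on $K$ that $⊢_A \Top_K ≤ A$ alone forces $A ↓_n \Top_K$. Inverting \rn{AS-Inc}, $\Top_K$ is weak-head normal, $A$ weak-head reduces to some normal $D$, and $⊢_A \Top_K ≤_W D$; for $K ≡ ⋆$ the only rule that can conclude this is \rn{AWS-Top}, which pins $D ≡ \Top_⋆$, while for $K ≡ K_1 → K_2$, since $\Top_K ≡ ΛX : K_1.\,\Top_{K_2}$ is an abstraction the only applicable rule is \rn{AWS-TAbs}, pinning $D ≡ ΛX : K_1.\,B$ with $⊢_A \Top_{K_2} ≤ B$; the induction hypothesis then gives $B ↓_n \Top_{K_2}$, and re-wrapping the $Λ$ and prefixing the weak-head reductions gives $A ↓_n \Top_K$. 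A third, one-line route is available if correctness of the subtyping relation (so that $Γ ⊢ \Top_K ≤ A : K$ implies $Γ ⊢ A : K$) is already in hand: apply \rn{S-Top} to obtain $Γ ⊢ A ≤ \Top_K : K$ and invoke the preceding Corollary directly.

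I do not expect a real obstacle: each route is a short assembly of results proved above (Completeness together with $⊧ \id_Γ ∈ Γ$, the Saturated Sets lemma, Lemma~\ref{l:T}, the Termination corollary, and algorithmic Anti-Symmetry). The only point needing a little care is the uniform treatment of $\Top_K$ across kinds — in particular that $\Top_{K_1 → K_2}$ unfolds to an abstraction, so that the shape of $A$'s weak-head normal form is actually determined — and this is already packaged by Lemma~\ref{l:T} Case~\ref{l:T:TopK} and, in the inductive route, by the case analysis on the algorithmic rules.
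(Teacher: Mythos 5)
Your first route is essentially the proof the paper intends: the corollary is stated, without proof, immediately after Completeness, Termination and Anti-Symmetry precisely so that $\models \id_\Gamma \in \Gamma$ and Completeness give $\models \Top_K \leq A \in K$, the semantic-object clause of the Saturated Sets lemma supplies $\vdash_A A \leq \Top_K$ and $T(A)$ (with $\vdash_A \Top_K \leq A$ and $T(\Top_K)$ also in hand), and algorithmic Anti-Symmetry then yields $A \downarrow_n \Top_K$. Your argument is correct as written; the two alternative routes you sketch are sound but unnecessary.
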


\section{Correctness}
\label{sec:correct}

So far we have not needed any properties of the judgements $Γ ⊢ J$.
We now develop some metatheory for those judgements and use the
results to prove the correctness of the algorithm.

\begin{lemma}[Context]
  \label{l:context}
  \begin{enumerate}
  \item If $Γ ⊢ J$ then $\FV(J) ⊆ \dom(Γ)$.
  \item If $X ≤ A : K ∈ Γ$ and $Γ ⊢ \ok$ then $X ∉ \FV(A)$. 
  \item If $Γ ⊢ J$ then $Γ ⊢ \ok$ as a sub-derivation.
  \item If $Γ, Γ' ⊢ \ok$ then $Γ ⊢ \ok$.
  \end{enumerate}
\end{lemma}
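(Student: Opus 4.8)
The plan is to prove the four parts of Lemma~\ref{l:context} simultaneously by induction on the derivation of $Γ ⊢ J$ (for parts 1 and 3) and by a nested or mutual induction for parts 2 and 4, since all four are tightly coupled: part 3 feeds part 2 (we need $Γ ⊢ \ok$ to extract the bound-freshness property), part 2 is needed in part 1 for the \rn{TVar} and \rn{S-TVar}/\rn{S-Promote} cases (to control the free variables of $X_B$ when $X ∈ \FV(B)$ is possible), and part 4 is a straightforward auxiliary about truncating well-formed contexts.

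First I would handle part 3 (existence of $Γ ⊢ \ok$ as a subderivation): by inspection of the rules, every rule either has $Γ ⊢ \ok$ among its premises' consequences via the induction hypothesis, or — for the leaves \rn{TopEmp} and \rn{TopExt} — directly derives $Γ ⊢ \Top_⋆ : ⋆$. For \rn{TopExt} with conclusion $Γ, X ≤ A : K ⊢ \Top_⋆ : ⋆$, the judgement is itself $Γ, X ≤ A : K ⊢ \ok$. For rules like \rn{TAbs}, \rn{All}, \rn{S-All} that extend the context in a premise, I would apply part 4 to the $\ok$ subderivation obtained from the induction hypothesis on that premise to get back $Γ ⊢ \ok$. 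Part 4 itself goes by a simple induction on $Γ'$, inverting \rn{TopExt} at each step.

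Next, part 2: if $X ≤ A : K ∈ Γ$ and $Γ ⊢ \ok$, then $X ∉ \FV(A)$. I would argue by induction on the derivation of $Γ ⊢ \ok$ (equivalently on the length of $Γ$). Since $Γ ⊢ \ok$ must end in \rn{TopExt} (or \rn{TopEmp}, vacuous), we have $Γ ≡ Γ₀, Y ≤ A' : K'$ with $Γ₀ ⊢ A' : K'$ and $Y ∉ \dom(Γ₀)$. If the declaration $X ≤ A : K$ is the last one, i.e.\ $X ≡ Y$, $A ≡ A'$, then by part 1 applied to $Γ₀ ⊢ A' : K'$ we get $\FV(A') ⊆ \dom(Γ₀)$, and since $X ∉ \dom(Γ₀)$ we conclude $X ∉ \FV(A)$. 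If $X ≤ A : K$ lies in $Γ₀$, then $Γ₀ ⊢ \ok$ by part 3 applied to $Γ₀ ⊢ A' : K'$, and the induction hypothesis gives $X ∉ \FV(A)$.

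Finally, part 1 ($Γ ⊢ J$ implies $\FV(J) ⊆ \dom(Γ)$) goes by induction on the derivation, with the interesting cases being the variable rules. In \rn{TVar}, the conclusion is $Γ ⊢ X_B : K$ with $X ≤ A : K ∈ Γ$ and $Γ ⊢ B : K$ a premise; by the induction hypothesis $\FV(B) ⊆ \dom(Γ)$, and $X ∈ \dom(Γ)$ since $X ≤ A : K ∈ Γ$, so $\FV(X_B) = \{X\} ∪ \FV(B) ⊆ \dom(Γ)$. The cases \rn{S-TVar} and \rn{S-Promote} are analogous. Context-extending rules such as \rn{TAbs}, \rn{All}, \rn{S-All} use the induction hypothesis on the premise over $Γ, X ≤ \cdots$ to get $\FV(J') ⊆ \dom(Γ) ∪ \{X\}$ and then remove $X$ by the usual binding convention; here I should note that the bound $A$ in \rn{All}/\rn{S-All} is checked over $Γ$ (not the extended context) so its free variables are already in $\dom(Γ)$. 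The main obstacle, and the reason for the mutual induction, is precisely the interplay in the \rn{TVar} family: the type $X_B$ can legitimately have $X ∈ \FV(B)$, so naive structural reasoning on types would not show $\FV$ is confined to $\dom(Γ)$ without already knowing $\FV(B) ⊆ \dom(Γ)$ from the subderivation — and ensuring the induction is well-founded across parts 1–3 (each appeals to a strictly smaller subderivation) is the delicate bookkeeping to get right.
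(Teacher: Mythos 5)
The paper states this lemma without proof, treating it as routine; your simultaneous structural induction is indeed the standard way to discharge it, and most of your cases are right (in particular the \rn{TVar}/\rn{S-TVar}/\rn{S-Promote} cases and the split in part 2). Two points need repair, though. First, in part 1 your justification for \rn{All} is wrong as stated: the kinding rule \rn{All} has the single premise $\Gamma, X \le A : K \vdash B : \star$, so the bound $A$ is \emph{not} checked over $\Gamma$ in any premise (unlike \rn{S-All}, which does carry $\Gamma \vdash A = C : K$). To get $\FV(A) \subseteq \dom(\Gamma)$ you must use part 3 on the premise to obtain $\Gamma, X \le A : K \vdash \ok$ as a sub-derivation, invert \rn{TopExt} to recover $\Gamma \vdash A : K$, and apply the part-1 induction hypothesis to that strictly smaller derivation. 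This is exactly where the ``as a sub-derivation'' strengthening earns its keep: it is what guarantees that this appeal is to a smaller derivation, so the mutual induction stays well-founded.

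Second, your proof of part 4 ``by a simple induction on $\Gamma'$, inverting \rn{TopExt} at each step'' is not self-contained: inverting \rn{TopExt} on $\Gamma, \Gamma'', Y \le B : K \vdash \ok$ yields the kinding premise $\Gamma, \Gamma'' \vdash B : K$, not $\Gamma, \Gamma'' \vdash \ok$, so each step needs part 3 again. Hence parts 3 and 4 are mutually recursive (not ``part 4 is a straightforward auxiliary''), and the measure that makes this terminate is the size of the derivation at hand, using that part 3 returns a sub-derivation and observing that your proof of part 4 then also returns a sub-derivation of its input; the latter observation is needed anyway, since otherwise in the context-extending cases \rn{TAbs}, \rn{All}, \rn{S-TAbs}, \rn{S-All}, \rn{S-BetaL}, \rn{S-BetaR} you would only establish derivability of $\Gamma \vdash \ok$, not the sub-derivation claim that part 3 actually makes. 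With these two local repairs your argument goes through.
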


\begin{definition}[Renaming]
  \label{d:renaming} 
  $γ$ is a renaming for $Γ$ in $Δ$ if $Δ ⊢ \ok$, $γ$ is a renaming,
  and $γ(X) ≤ A[γ] : K ∈ Γ$ for each $X ≤ A : K ∈ Γ$.
\end{definition}

\begin{lemma}[Renaming]
  \label{l:renaming}
  If $Γ ⊢ J$ and $γ$ is a renaming for $Γ$ in $Δ$ then $Δ ⊢ J[γ]$.
\end{lemma}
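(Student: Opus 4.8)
The plan is to prove the Renaming Lemma by induction on the derivation of $Γ ⊢ J$, treating the kinding and subtyping rules simultaneously since they are mutually defined. The key observations that make the induction go through are: (i) renamings preserve context membership in the right way, i.e.\ if $X ≤ A : K ∈ Γ$ then $γ(X) ≤ A[γ] : K ∈ Δ$, which is exactly Definition~\ref{d:renaming}; and (ii) renamings interact well with the substitution operations appearing in the rules, in particular that $(A[B/X])[γ] ≡ (A[γ[X←X']])[B[γ]/X']$ for a suitable fresh $X'$, which follows from the standard parallel-substitution identities $A[γ][δ] ≡ A[γ ○ δ]$ and $(γ[A/X]) ○ δ = (γ ○ δ)[Aδ/X]$ noted in Section~\ref{sec:syntax}.

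I would organize the cases as follows. For \rn{TopEmp} and \rn{TopExt}, the conclusion is $⊢ \Top_⋆ : ⋆$, which holds in any well-formed context, and $Δ ⊢ \ok$ is part of the hypothesis that $γ$ is a renaming for $Γ$ in $Δ$; for \rn{TopExt} one also invokes the induction hypothesis on the premise $Γ₀ ⊢ A : K$ with the restricted renaming, though since the conclusion only asserts $⊢ \Top_⋆ : ⋆$ the side premise $Δ ⊢ \ok$ suffices. For \rn{TVar}, from $X ≤ A : K ∈ Γ$ we get $γ(X) ≤ A[γ] : K ∈ Δ$ by the renaming hypothesis, the induction hypothesis gives $Δ ⊢ B[γ] : K$ and $Δ ⊢ A[γ] ≤ B[γ] : K$, and $X_B[γ] ≡ γ(X)_{B[γ]}$ by the definition of substitution on a renamed variable, so \rn{TVar} applies in $Δ$. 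The rules \rn{TApp}, \rn{Arrow}, \rn{S-Refl}, \rn{S-Trans}, \rn{S-Top}, \rn{S-TVar}, \rn{S-Promote}, \rn{S-TApp}, and \rn{S-Arrow} are all immediate from the induction hypothesis together with the commutation of $[γ]$ with the type constructors (recall $(A(B))[γ] ≡ (A[γ])(B[γ])$, etc.), since none of these introduces new binders or substitutions. The binder rules \rn{TAbs}, \rn{All}, \rn{S-TAbs}, \rn{S-All} require extending $γ$ to $γ[X←X']$ for a fresh $X' ∉ \dom(Δ)$ and checking that $γ[X←X']$ is a renaming for $Γ, X ≤ A : K$ (respectively $Γ, X : K$) in $Δ, X' ≤ A[γ] : K$ — this uses Lemma~\ref{l:context} to know $X ∉ \FV(A)$ and hence that the bound is handled consistently — after which the induction hypothesis on the body premise closes the case. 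Finally \rn{S-BetaL} and \rn{S-BetaR}: the induction hypothesis on $Γ, X : K ⊢ A : K'$ and $Γ ⊢ B : K$ gives the renamed premises over $Δ$, and the conclusion follows once we verify $(A[B/X])[γ] ≡ (A[γ[X←X']])[B[γ]/X']$ so that the reduct matches, using the substitution identities.

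The main obstacle I expect is the interaction in the $β$-rules and the binder rules between the renaming $γ$ and the substitution/binding operations — specifically, establishing the identity $(A[B/X])[γ] ≡ (A[γ[X←X']])[B[γ]/X']$ cleanly, and making sure the freshness side conditions ($X' ∉ \dom(Δ)$, and that $γ$ does not capture $X'$) are maintained. This is a routine but bookkeeping-heavy consequence of the parallel-substitution algebra in Section~\ref{sec:syntax}, complicated here by the fact that variables carry their bounds, so one must be careful that renaming $X$ to $X'$ inside $A$ does not disturb the bound annotations in a way that breaks the match with the substitution instance. Everything else is a direct structural induction with the induction hypothesis applied to the premises under the (possibly extended) renaming.
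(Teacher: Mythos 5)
The paper states the Renaming Lemma without proof, and your induction on derivations (simultaneously over kinding and subtyping) is precisely the standard argument it leaves implicit: the use of Definition~\ref{d:renaming} in the variable rules, the commutation of $[\gamma]$ with the type constructors, and the identity $(A[B/X])[\gamma] \equiv (A[\gamma[X\leftarrow X']])[B[\gamma]/X']$ for \rn{S-BetaL}/\rn{S-BetaR} are all correct and are exactly what is needed. One detail worth making explicit: in the binder cases with a nontrivial bound (\rn{All}, \rn{S-All}) you must also establish $\Delta \vdash A[\gamma] : K$ so that $\Delta, X' \leq A[\gamma] : K \vdash \ok$ and the extended $\gamma[X\leftarrow X']$ is genuinely a renaming into the extended context; this comes from applying the induction hypothesis to the sub-derivation $\Gamma \vdash A : K$ provided by Lemma~\ref{l:context}, so the induction should be on derivation size (or a suitably strengthened statement) rather than bare case analysis on the last rule.
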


\begin{lemma}
  \label{l:thin-ok}
  If $Γ, Γ' ⊢ \ok$, $Γ ⊢ A : K$ and $X ∉ \dom(Γ, Γ')$ then $Γ, X ≤ A :
  K, Γ' ⊢ \ok$.
\end{lemma}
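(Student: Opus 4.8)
The statement is an instance of a standard weakening property, and the plan is to prove that stronger statement and then specialize. Namely, I would show: if $Γ, Γ' ⊢ J$, $Γ ⊢ A : K$ and $X ∉ \dom(Γ, Γ')$, then $Γ, X ≤ A : K, Γ' ⊢ J$, for $J$ ranging over all three judgement forms; the present lemma is then the case $J ≡ \Top_⋆ : ⋆$. The proof goes by induction on the derivation of $Γ, Γ' ⊢ J$, carried out simultaneously for kinding, subtyping and $\ok$. The generalization over $J$ and the simultaneous induction are essentially forced: rule \rn{TVar} for kinding a type variable has a subtyping premise, and the subtyping rules in turn use kinding, so $\ok$, kinding and subtyping cannot be weakened independently. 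Note also that, although variables carry their bounds in this presentation, weakening never alters a type or any of its bounds — it only inserts a fresh declaration — so the bound-carrying structure causes no extra difficulty here (unlike for substitution).

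Almost every case is a routine propagation of the induction hypothesis. If the last rule has all premises in the context $Γ, Γ'$, I apply the induction hypothesis to each premise with the same $X, A, K$ and reapply the rule; here a side condition of the shape $Y ≤ B : K'' ∈ Γ, Γ'$ (as in \rn{TVar}, \rn{S-TVar}, \rn{S-Promote}) survives the insertion of the fresh declaration $X ≤ A : K$, an equality premise $Γ ⊢ B = C : K''$ is just a pair of subtyping premises already covered, and the substituted types in \rn{S-BetaL} and \rn{S-BetaR} mention only the old variables and so are unaffected. If the last rule introduces a binder and hence has a premise over $(Γ, Γ'), Y ≤ B : K''$ (rules \rn{TAbs}, \rn{All}, \rn{S-TAbs}, \rn{S-All}), I apply the induction hypothesis with $Γ'$ replaced by $Γ', Y ≤ B : K''$, choosing $Y$ fresh — legitimate since types are identified up to renaming of bound variables — so that the side condition $Y ∉ \dom(Γ, X ≤ A : K, Γ')$ is met.

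The cases where the hypothesis $Γ ⊢ A : K$ is actually consumed are \rn{TopEmp} and \rn{TopExt}. If the derivation ends in \rn{TopEmp} then $Γ = Γ' = ()$, and $(), X ≤ A : K ⊢ \ok$ follows from $() ⊢ A : K$ by \rn{TopExt}. If it ends in \rn{TopExt}, concluding $Δ, Y ≤ B : K'' ⊢ \ok$ from $Δ ⊢ B : K''$ and $Y ∉ \dom(Δ)$ with $Γ, Γ' = Δ, Y ≤ B : K''$, there are two subcases. When $Γ' = ()$ we have $Γ = Δ, Y ≤ B : K''$, and $Γ, X ≤ A : K ⊢ \ok$ follows immediately from the given $Γ ⊢ A : K$ and $X ∉ \dom(Γ)$ by \rn{TopExt}. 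When $Γ' = Γ'', Y ≤ B : K''$ we have $Δ = Γ, Γ''$; the induction hypothesis applied to $Γ, Γ'' ⊢ B : K''$ yields $Γ, X ≤ A : K, Γ'' ⊢ B : K''$, and since $X ≠ Y$ and $Y ∉ \dom(Γ, Γ'')$ we get $Y ∉ \dom(Γ, X ≤ A : K, Γ'')$, so a final application of \rn{TopExt} gives the claim.

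I do not expect a genuine obstacle: the content is entirely standard weakening. The only thing that needs care is the global organization just described — stating the property for all judgement forms and performing the induction simultaneously, together with the usual freshness bookkeeping for bound variables. One could instead try to derive the lemma from Renaming, Lemma~\ref{l:renaming}, but that lemma requires the target context to be already known well-formed, which is precisely what is being proved, so it does not apply directly; the simultaneous induction is the cleaner route.
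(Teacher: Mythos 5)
Your argument is correct, but it is organized quite differently from what the paper intends. The paper states this lemma without proof, and its role in the development is visible from the surrounding structure: it is the small missing ingredient that, combined with Lemma~\ref{l:renaming} (Renaming, instantiated with $\id$), yields Proposition~\ref{p:thinning}. The intended proof of the lemma is therefore not a fresh induction over derivations but an induction on $Γ'$: at the step $Γ' = Γ'', Y ≤ B : K''$, the induction hypothesis gives $Γ, X ≤ A : K, Γ'' ⊢ \ok$, which is exactly the well-formedness needed to apply Renaming to the kinding premise $Γ, Γ'' ⊢ B : K''$, and \rn{TopExt} closes the step. So your remark that the Renaming route ``does not apply directly'' because the target context's well-formedness is what is being proved is only true for a one-shot application; inducting on $Γ'$ breaks the circularity, and this is almost certainly the factoring the authors have in mind. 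What you do instead is prove the full thinning statement for all judgement forms by a simultaneous induction on kinding and subtyping derivations, and specialize to $J ≡ \Top_⋆ : ⋆$. That proof is sound: the generalization over $J$ is indeed forced by \rn{TVar}'s subtyping premise, your \rn{TopEmp}/\rn{TopExt} analysis is where the hypotheses $Γ ⊢ A : K$ and $X ∉ \dom(Γ,Γ')$ are consumed, and the binder cases go through by choosing the bound variable fresh. What each approach buys: yours is self-contained and in fact subsumes Proposition~\ref{p:thinning}, making the paper's separate derivation of it redundant; the paper's factoring is lighter because the one induction over derivations is done once, in Renaming, and everything else is bookkeeping on contexts. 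One caveat on your side: the $\alpha$-conversion of the premise derivation in the \rn{TAbs}/\rn{All}/\rn{S-TAbs}/\rn{S-All} cases is, in this bound-carrying syntax, itself justified by the Renaming lemma (renaming the bound variable to a fresh one in the extended context), so your route does not entirely avoid Renaming either; it only avoids using it for the insertion of $X ≤ A : K$.
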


\begin{proposition}[Replacement]
  If $Γ, X ≤ B : K, Γ' ⊢ J$, $Γ ⊢ A ≤ B : K$ and $Γ ⊢ A : K$ then $Γ,
  X ≤ A : K, Γ' ⊢ J$.
\end{proposition}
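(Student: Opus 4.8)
The plan is to proceed by induction on the derivation of $Γ, X ≤ B : K, Γ' ⊢ J$, replacing the bound $B$ of $X$ by the smaller type $A$ throughout. The key observation is that the statement $J$ never mentions $B$ directly — it only appears as the bound of $X$ in the context — so most rules go through by a straightforward appeal to the induction hypothesis, using Lemma~\ref{l:context} and Lemma~\ref{l:thin-ok} to maintain well-formedness of the modified context. The cases that require genuine work are the ones that actually inspect the bound of $X$, namely \rn{TVar}, \rn{S-TVar} and \rn{S-Promote}.

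First I would dispatch the structural and well-formedness cases. For \rn{TopEmp} there is nothing to do since the context is empty; for \rn{TopExt} we either replace within $Γ'$ (induction hypothesis) or, if $Γ'$ is empty and the variable being added is $X$ itself, we use $Γ ⊢ A : K$ together with Lemma~\ref{l:thin-ok} (with $Γ' = ()$). For the compound kinding and subtyping rules (\rn{TAbs}, \rn{TApp}, \rn{Arrow}, \rn{All}, \rn{S-Refl}, \rn{S-Trans}, \rn{S-Top}, \rn{S-TApp}, \rn{S-Arrow}, \rn{S-All}, \rn{S-TAbs}, \rn{S-BetaL}, \rn{S-BetaR}) we simply apply the induction hypothesis to each premise and reassemble; when a premise extends the context with a fresh variable $Y$, note that $X$ still sits with bound $A$ in the extended context, so the induction hypothesis applies verbatim. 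The premises $Γ ⊢ A ≤ B : K$ and $Γ ⊢ A : K$ are carried along unchanged, and by Lemma~\ref{l:context} part~3 they may be weakened to the longer context when needed.

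The heart of the proof is the variable cases. Consider \rn{TVar}: if the variable being kinded is some $Y ≠ X$, the bound of $Y$ is unchanged by the replacement and we conclude by the induction hypothesis on the two premises. If the variable is $X$ itself, then the rule was $Γ, X ≤ B : K, Γ' ⊢ X_C : K$ with premises $Γ, X ≤ B : K, Γ' ⊢ C : K$, $Γ, X ≤ B : K, Γ' ⊢ C ≤ B : K$, and $X ≤ B : K ∈ Γ, X ≤ B : K, Γ'$. After replacement the induction hypothesis gives $Γ, X ≤ A : K, Γ' ⊢ C : K$ and $Γ, X ≤ A : K, Γ' ⊢ C ≤ B : K$; since $Γ ⊢ A ≤ B : K$ weakens (Lemma~\ref{l:renaming} or direct weakening from Lemma~\ref{l:context}) to $Γ, X ≤ A : K, Γ' ⊢ A ≤ B : K$, and since $X ≤ A : K$ is now in the context, we apply \rn{S-Trans} to get $Γ, X ≤ A : K, Γ' ⊢ C ≤ A : K$ and then \rn{TVar} to conclude $Γ, X ≤ A : K, Γ' ⊢ X_C : K$. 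The case \rn{S-TVar} for $X_C ≤ X_D$ is entirely analogous: one uses \rn{S-Trans} to compose the new $C ≤ A$ with the old $C ≤ D$ — wait, more carefully, the premise is $Γ,X≤B:K,Γ' ⊢ C ≤ B : K$, which after IH and transitivity with $A ≤ B$ does \emph{not} immediately give $C ≤ A$; rather, since the conclusion $X_C ≤ X_D$ only requires $C ≤ A$ with $X ≤ A : K ∈ Γ$ by the rule's first premise, and the first premise after IH and \rn{S-Trans} with $Γ, X ≤ A : K, Γ' ⊢ A ≤ B : K$ gives exactly $Γ, X ≤ A : K, Γ' ⊢ C ≤ A : K$, we are done; the second premise $B = D$ and $X ≤ A : K ∈ Γ$ are then reassembled into \rn{S-TVar}. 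For \rn{S-Promote} deriving $X_C ≤ C$, the premises after IH plus transitivity with $A ≤ B$ likewise yield $Γ, X ≤ A : K, Γ' ⊢ C ≤ A : K$, and with $X ≤ A : K$ in the context we reapply \rn{S-Promote}.

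The main obstacle I anticipate is precisely this bookkeeping in the variable cases: the rule premises are phrased in terms of the \emph{old} bound $B$, and one must insert an extra use of \rn{S-Trans} against the given $Γ ⊢ A ≤ B : K$ (suitably weakened) to recover the subtyping premise against the \emph{new} bound $A$. This is where the hypothesis $Γ ⊢ A ≤ B : K$ is essential — it is not used anywhere else — and one must be careful that weakening it across $X ≤ A : K, Γ'$ is legitimate, which follows since $X ∉ \FV(A)$ is not needed (we are weakening, not substituting) and Lemma~\ref{l:renaming} or a direct weakening lemma suffices. Everything else is routine propagation of the induction hypothesis through the syntax-directed rules.
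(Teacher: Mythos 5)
Your overall strategy is the intended one: induct on the derivation of $Γ, X ≤ B : K, Γ' ⊢ J$, push the induction hypothesis through every rule that does not inspect the bound of $X$, and in the cases \rn{TVar}, \rn{S-TVar} and \rn{S-Promote} for $X$ itself insert one extra use of \rn{S-Trans} against the hypothesis $Γ ⊢ A ≤ B : K$, suitably weakened. (The paper states the proposition without giving a proof; also, using Thinning inside this argument is not circular even though it appears later, since its proof depends only on Lemma~\ref{l:thin-ok} and Renaming, not on Replacement.)

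However, the execution of the crucial variable cases fails as written, because you have the direction of the subtyping premise in those rules backwards. In \rn{TVar} the premise relates the declared bound to the label as ``bound $\le$ label'': for a conclusion $Γ, X ≤ B : K, Γ' ⊢ X_C : K$ the premises are $Γ, X ≤ B : K, Γ' ⊢ C : K$, $Γ, X ≤ B : K, Γ' ⊢ B ≤ C : K$ and $X ≤ B : K ∈ Γ, X ≤ B : K, Γ'$ — not $C ≤ B$ as you state — and likewise the first premise of \rn{S-TVar} and \rn{S-Promote} is $B ≤ C$ with $C$ the label. Consequently what you must produce in the new context is $Γ, X ≤ A : K, Γ' ⊢ A ≤ C : K$, not $C ≤ A$; and your proposed derivation of $C ≤ A$ from $C ≤ B$ and $A ≤ B$ is not an instance of \rn{S-Trans} at all, since the two inequalities have no shared middle term in the required positions — your own ``wait, more carefully\dots does not immediately give $C ≤ A$\dots gives exactly $C ≤ A$'' shows this step was never actually resolved. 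The correct step is simpler: weaken $Γ ⊢ A ≤ B : K$ to $Γ, X ≤ A : K, Γ' ⊢ A ≤ B : K$, take the premise transported by the induction hypothesis, $Γ, X ≤ A : K, Γ' ⊢ B ≤ C : K$, and conclude $A ≤ C$ by \rn{S-Trans}; then reassemble \rn{TVar} (resp.\ \rn{S-TVar}, \rn{S-Promote}) with $X ≤ A : K$ now in the context. With that correction — and noting that Lemma~\ref{l:context} part~3 is not a weakening principle, so the weakening of $Γ ⊢ A ≤ B : K$ should be justified by Thinning or by Renaming together with well-formedness of the new context, itself obtained from the induction hypothesis on the $\ok$ sub-derivation — the remainder of your argument is routine and correct.
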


\begin{proposition}[Thinning]
  \label{p:thinning} 
  If $Γ, Γ' ⊢ J$, $Γ ⊢ A:K$ and $X ∉ \dom(Γ, Γ')$ then $Γ, X ≤ A:K, Γ'
  ⊢ J$.
\end{proposition}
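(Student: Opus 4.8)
The plan is to prove Thinning by induction on the derivation of $Γ, Γ' ⊢ J$, strengthening slightly so that the statement applies uniformly to all three judgement forms ($\ok$, kinding, and subtyping) since they are defined by simultaneous induction. The weakened context $Γ, X ≤ A : K, Γ'$ is obtained from $Γ, Γ'$ by inserting one declaration, and the side condition $X ∉ \dom(Γ, Γ')$ guarantees that no variable capture occurs and that $\dom$ remains well-formed. For most rules the induction is entirely routine: each premise $Γ, Γ' ⊢ J_i$ yields $Γ, X ≤ A : K, Γ' ⊢ J_i$ by the induction hypothesis, and reassembling the conclusion is immediate because the rules do not inspect the context beyond membership tests $Y ≤ B : K' ∈ Γ, Γ'$ and freshness tests $Y ∉ \dom(-)$, both of which are preserved under the insertion (membership is monotone, and $Y ∉ \dom(Γ, Γ')$ with the bound variables renamed apart gives $Y ∉ \dom(Γ, X ≤ A : K, Γ')$).

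First I would handle the base cases. For \rn{TopEmp}, the context $Γ, Γ'$ is empty, so $Γ = Γ' = ()$, and we must show $(), X ≤ A : K ⊢ \ok$; this follows from \rn{TopExt} applied to the hypothesis $Γ ⊢ A : K$ (here $() ⊢ A : K$) together with $X ∉ \dom(())$. For \rn{TopExt} and the rules \rn{TVar}, \rn{TAbs}, \rn{TApp}, \rn{Arrow}, \rn{All} and all the subtyping rules, I would apply the induction hypothesis to each premise and then reapply the same rule. The cases that extend the context in a premise — \rn{TAbs}, \rn{All}, \rn{S-TAbs}, \rn{S-All}, \rn{S-BetaL}, \rn{S-BetaR} — require care: there the premise has the form $Γ, Γ', Y ≤ C : K' ⊢ J'$, which we read as $Γ, (Γ', Y ≤ C : K') ⊢ J'$, apply the induction hypothesis with the larger tail $Γ', Y ≤ C : K'$ to get $Γ, X ≤ A : K, Γ', Y ≤ C : K' ⊢ J'$, and then reapply the rule; by the convention identifying types up to renaming of bound variables we may assume $Y ∉ \dom(Γ, X ≤ A : K, Γ')$, so the side condition of the rule is met.

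The main obstacle is the \rn{TVar} and \rn{S-TVar} cases, because these rules carry a subtyping premise in addition to a kinding premise and a membership premise, and in \rn{TVar} the bound $B$ of the variable type $X_B$ may mention the variable being kinded. Concretely, for \rn{TVar} we have $Γ, Γ' ⊢ B : K'$, $Γ, Γ' ⊢ A' ≤ B : K'$ and $Y ≤ A' : K' ∈ Γ, Γ'$ with conclusion $Γ, Γ' ⊢ Y_B : K'$. Applying the induction hypothesis to the first two premises gives $Γ, X ≤ A : K, Γ' ⊢ B : K'$ and $Γ, X ≤ A : K, Γ' ⊢ A' ≤ B : K'$; the membership $Y ≤ A' : K' ∈ Γ, X ≤ A : K, Γ'$ holds since $Y ∈ \dom(Γ, Γ')$ forces $Y ≢ X$ and membership is monotone under insertion. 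Reapplying \rn{TVar} yields the conclusion. The analogous reasoning handles \rn{S-TVar} and \rn{S-Promote}. Since the induction hypothesis is applied only to strict subderivations and the simultaneous definition means all three judgement forms are available at each step, no circularity arises; the only genuinely delicate point is bookkeeping of $\dom$ and the $α$-conversion convention in the context-extending rules, which is standard.
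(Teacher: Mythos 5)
Your proof is correct in substance, but it takes a genuinely different route from the paper. You prove Thinning directly by induction on the derivation of $\Gamma, \Gamma' \vdash J$, handling each rule by reapplying it after using the induction hypothesis, with the membership and freshness side conditions checked by hand. The paper instead derives Thinning as an instance of its Renaming lemma (Lemma~\ref{l:renaming}): it first establishes $\Gamma, X \leq A : K, \Gamma' \vdash \ok$ via Lemmas~\ref{l:context} and \ref{l:thin-ok}, then observes that the identity renaming is a renaming for $\Gamma, \Gamma'$ in $\Gamma, X \leq A : K, \Gamma'$ in the sense of Definition~\ref{d:renaming}, so the result is immediate. The difference matters precisely at the point you flag as delicate: in the binding cases (\rn{TAbs}, \rn{All}, \rn{S-TAbs}, \rn{S-All}, \rn{S-BetaL}, \rn{S-BetaR}) your argument assumes the bound variable of the premise can be taken distinct from the inserted $X$, but identifying \emph{types} up to renaming of bound variables does not by itself let you rename the variable in the premise \emph{derivation}, whose context entry mentions that variable concretely; justifying that step is exactly a renaming property on derivations. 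In the paper's factoring, this issue is dealt with once inside the Renaming lemma (where the renaming substitution can simply be extended with a fresh variable in the binding cases), and Thinning, Replacement and similar context-manipulation results then come out as cheap corollaries. Your direct induction is the classical argument and is acceptable under the usual variable convention, but to make it fully rigorous in this setting you would either prove the renaming lemma anyway or strengthen the induction (e.g., to induction on derivation height) so that premises can be $\alpha$-renamed before applying the induction hypothesis; given that the paper already has Lemma~\ref{l:renaming}, its route is the more economical one.
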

\begin{proof}
  By Lemmas~\ref{l:context} and \ref{l:thin-ok} $Γ, X ≤ A : K, Γ' ⊢
  \ok$.  Observe that $\id_Γ$ is a renaming for $Γ, Γ'$ in $Γ, X ≤ A :
  K, Γ'$. Then $Γ, X ≤ A:K, Γ' ⊢ J$ by Renaming.
\end{proof}

\begin{proposition}[Substitution]
  \label{l:subst}
  If $Γ, X : K, Γ' ⊢ J$ and $Γ ⊢ A : K$ then $Γ, Γ'[A/X] ⊢ J[A/X]$.
\end{proposition}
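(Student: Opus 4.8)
The plan is to argue by induction on the derivation of $Γ, X : K, Γ' ⊢ J$, keeping $Γ$ and $A$ (hence the hypothesis $Γ ⊢ A : K$) fixed while $Γ'$ and $J$ vary, so that the context-extending rules are reached at instances with a longer $Γ'$. Since $Γ, X : K, Γ' ⊢ \ok$ is a sub-derivation of $Γ, X : K, Γ' ⊢ J$ by Lemma~\ref{l:context}, the same induction delivers $Γ, Γ'[A/X] ⊢ \ok$ whenever it is needed. Two facts keep the bookkeeping manageable. First, $X : K$ abbreviates $X ≤ \Top_K : K$, so $X$ has no bound, $(X_C)[A/X] = A$, and substitution leaves every declaration coming from $Γ$ untouched (since $Γ ⊢ A : K$ with $X ∉ \dom(Γ)$ forces $X ∉ \FV(A)$, and $Γ ⊢ \ok$ with Lemma~\ref{l:context} forces $X$ to occur free in no bound declared in $Γ$), while $(Γ', Y ≤ C : K')[A/X] = Γ'[A/X], Y ≤ C[A/X] : K'$. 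Second, the standard equations $A[γ][δ] \equiv A[γ \circ δ]$ and $(γ[A/X]) \circ δ = (γ \circ δ)[Aδ/X]$ let us commute $[A/X]$ past the context extensions that appear in the binding rules.

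For the binding and congruence rules the proof is routine. In \rn{TAbs}, \rn{TApp}, \rn{Arrow}, \rn{All}, \rn{S-TAbs}, \rn{S-TApp}, \rn{S-Arrow}, \rn{S-All}, \rn{S-BetaL} and \rn{S-BetaR} one applies the induction hypothesis to the premises and re-applies the same rule, the substitution equations above ensuring that the substituted premises are exactly those required for the substituted conclusion (for instance $(E[F/Y])[A/X]$ rewrites to $(E[A/X])[F[A/X]/Y]$, which is what \rn{S-BetaL} demands). The rules \rn{S-Refl}, \rn{S-Trans} and \rn{S-Top} are immediate from the induction hypothesis; \rn{TopEmp} cannot produce a context of the required shape; and for \rn{TopExt}, either the head declaration of $Γ'$ is being removed, in which case freshness is preserved under substitution and the induction hypothesis applies, or $Γ' \equiv ()$, in which case the conclusion $Γ ⊢ \ok$ follows from $Γ ⊢ A : K$ by Lemma~\ref{l:context}.

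The cases needing care are the variable rules \rn{TVar}, \rn{S-TVar} and \rn{S-Promote}, and within them the subcase where the subject variable is $X$ itself. Having $Γ, Γ'[A/X] ⊢ \ok$ from the induction, we first lift $Γ ⊢ A : K$ to $Γ, Γ'[A/X] ⊢ A : K$ by the Renaming Lemma, using $\id_Γ$ as a renaming for $Γ$ in $Γ, Γ'[A/X]$, exactly the move used in the proof of Thinning. Now suppose the subject is $X_C$ with variable $X$. For \rn{TVar} the substituted conclusion is $Γ, Γ'[A/X] ⊢ A : K$, which we have just obtained. For \rn{S-TVar} the substituted conclusion is $Γ, Γ'[A/X] ⊢ A ≤ A : K$, by \rn{S-Refl}. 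For \rn{S-Promote} the substituted conclusion is $Γ, Γ'[A/X] ⊢ A ≤ C[A/X] : K$: the induction hypothesis on the premise $Γ, X : K, Γ' ⊢ \Top_K ≤ C : K$ gives $Γ, Γ'[A/X] ⊢ \Top_K ≤ C[A/X] : K$, and $Γ, Γ'[A/X] ⊢ A ≤ \Top_K : K$ by \rn{S-Top}, so the result follows by \rn{S-Trans}. If instead the subject variable is some $Y ≠ X$, then $Y ≤ C : K'$ is still declared in $Γ, Γ'[A/X]$ (unchanged if $Y ∈ \dom(Γ)$, with the bound substituted if $Y ∈ \dom(Γ')$), $(Y_D)[A/X] = Y_{D[A/X]}$, and applying the induction hypothesis to the premises lets us re-apply the rule.

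The only genuine obstacle is keeping the syntactic accounting straight: distinguishing substitution into a variable's bound decoration from a renaming, tracking which bounds in $Γ'$ actually mention $X$, and invoking the substitution-composition equations in the right form under the binders. No idea beyond the earlier metatheory is needed: Lemma~\ref{l:context}, the Renaming Lemma, and Thinning. In particular no conversion lemma for the declarative judgements is required, since the one case one might worry about, \rn{S-Promote} on $X$, is dispatched by \rn{S-Top} and \rn{S-Trans} rather than by reasoning about normal forms.
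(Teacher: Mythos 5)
Your proof is correct, and it is essentially the intended argument: the paper states Substitution without giving a proof, and your structural induction on the derivation, reusing the Renaming-with-$\id_\Gamma$ device from the proof of Thinning to lift $\Gamma \vdash A : K$ to $\Gamma, \Gamma'[A/X] \vdash A : K$ (with $\Gamma, \Gamma'[A/X] \vdash \ok$ supplied by the induction via Lemma~\ref{l:context}), is exactly the routine development the paper relies on. The delicate points are handled correctly: since the substituted variable is declared with bound $\Top_K$, the rules \rn{TVar}, \rn{S-TVar} and \rn{S-Promote} at $X$ itself reduce to the weakened kinding of $A$, to \rn{S-Refl}, and to \rn{S-Top} plus \rn{S-Trans} respectively (using $\Top_K[A/X] \equiv \Top_K$), while declarations coming from $\Gamma$ are untouched because $X$ cannot occur free in their bounds.
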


\comment{

{\bf Prove Thinning and Substitution simultaneously.}

\begin{definition}[Parallel Reduction]
  Parallel reduction, $A ⇒ B$, is defined inductively.
\end{definition}

\begin{lemma}
  \label{l:parred}
  Parallel reduction has the following properties:
  \begin{itemize}
  \item $A ⇒ A$. \label{l:parred:refl}
  \item If $A ⇒ A'$ and $B ⇒ B'$ then $A[B/X] ⇒
    A'[B'/X]$. \label{l:parred:subst}
  \item If $A ▷ B$ then $A ⇒ B$. \label{l:parred:redinc}
  \item If $A ⇒ B$ then $A ▷^* B$. \label{l:parred:incred}
  \end{itemize}
\end{lemma}
}

\begin{lemma}[Subject Reduction]
  If $Γ ⊢ A : K$ and $A ▷ B$ then $Γ ⊢ A = B : K$.
\end{lemma}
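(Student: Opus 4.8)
The plan is to proceed by induction on the derivation of $Γ ⊢ A : K$, performing case analysis on the last rule used together with the redex position in $A ▷ B$. The goal $Γ ⊢ A = B : K$ unfolds to $Γ ⊢ A ≤ B : K$ and $Γ ⊢ B ≤ A : K$, so in each case I produce both directions; for most structural cases these come directly from the corresponding congruence subtyping rule (\rn{S-TAbs}, \rn{S-Arrow}, \rn{S-All}, \rn{S-TApp} together with \rn{S-Refl}) applied to the inductive hypotheses, using that equality $Γ ⊢ C = D : K$ is symmetric by construction.

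First I would dispose of the genuinely new case, which is the head $β$-redex $A ≡ (ΛX : K_1. C)(D) ▷ C[D/X] ≡ B$. Here inversion on \rn{TApp} and \rn{TAbs} gives $Γ, X : K_1 ⊢ C : K'$ and $Γ ⊢ D : K_1$, which is exactly the premise shape of \rn{S-BetaL} and \rn{S-BetaR}; these two rules deliver $Γ ⊢ (ΛX : K_1. C)(D) ≤ C[D/X] : K'$ and the reverse inclusion, hence the equality. I must also check that $K' $ is the kind assigned to $A$, which follows since \rn{TApp} outputs the codomain kind of the operator and \rn{TAbs} assigns $K_1 → K'$.

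For the congruence cases the redex sits in a proper subterm. For \rn{TApp} with $A ≡ A_0(B_0)$: if the reduction is in $A_0$, the IH gives $Γ ⊢ A_0 = A_0' : K → K'$, and \rn{S-Refl} on $Γ ⊢ B_0 : K$ gives $Γ ⊢ B_0 = B_0 : K$, so \rn{S-TApp} (applied in both directions) yields $Γ ⊢ A_0(B_0) = A_0'(B_0) : K'$; the subcase where reduction is in $B_0$ is symmetric, using \rn{S-TApp} with $Γ ⊢ A_0 ≤ A_0 : K → K'$ from \rn{S-Refl}. The cases \rn{TAbs}, \rn{Arrow}, \rn{All} are handled the same way via \rn{S-TAbs}, \rn{S-Arrow}, \rn{S-All} respectively — noting that in the \rn{All} case the bound $A$ is unchanged by the reduction (or, if the redex is in the bound, that \rn{S-All} requires $Γ ⊢ A = C : K$, supplied by the IH, plus $Γ, X ≤ A : K ⊢ B ≤ B : ⋆$ from \rn{S-Refl}, and we may need Replacement to retype $B$ under the equal bound). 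The leaf cases \rn{TopEmp}, \rn{TopExt}, \rn{TVar} have no reducts (in \rn{TVar}, $X_B ▷ X_{B'}$ reduces inside the bound, handled by the IH on $Γ ⊢ B : K$ plus \rn{S-TVar} with $Γ ⊢ B' = B' : K$, giving $Γ ⊢ X_B ≤ X_{B'} : K$ and conversely).

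The main obstacle I anticipate is the \rn{All} (and to a lesser extent \rn{TVar}) case when the redex lies inside the bound $A$: rule \rn{S-All} needs the body $B$ to be well-kinded under \emph{both} bounds, and \rn{TVar} needs $Γ ⊢ A ≤ B : K$ to still hold after reducing $B$, so I expect to invoke the Replacement proposition (and possibly that $Γ ⊢ A = A' : K$ from the IH lets one move between the two contexts) to realign the premises. Everything else is a routine reassembly of the matching subtyping congruence rule from the inductive hypotheses and reflexivity.
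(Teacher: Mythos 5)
Your proposal is correct and follows the same route as the paper, which states its proof only as ``by induction on derivations'': your case analysis (\rn{S-BetaL}/\rn{S-BetaR} for the head redex, the congruence subtyping rules plus \rn{S-Refl} for subterm reductions, and Replacement/transitivity to realign the \rn{All} and \rn{TVar} bound cases) is exactly the elaboration that one-line proof leaves implicit. The only detail to tidy is that in the \rn{TVar} case the reverse direction needs $Γ ⊢ A ≤ B' : K$, which follows simply by \rn{S-Trans} from the rule's premise and the induction hypothesis.
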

\begin{proof}
  By induction on derivations.
\end{proof}
\comment{
\begin{proof}
  {\bf To add some cases here for submitted version.}
\end{proof}
}

{\bf {Add generation.}}

\begin{proposition}[Correctness]
  The algorithm is correct for the declarative judgements:
  \begin{itemize}
  \item If $Γ ⊢ ok$ and $Γ ⊢_A A : K$ then $Γ ⊢ A : K$.
  \item If $Γ ⊢ A, B : K$ and $⊢_A A ≤_W B$ then $Γ ⊢ A ≤ B : K$.
  \item If $Γ ⊢ A, B : K$ and $⊢_A A ≤ B$ then $Γ ⊢ A ≤ B : K$.
  \end{itemize}
\end{proposition}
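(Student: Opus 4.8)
The plan is to prove the three statements by simultaneous induction on the derivations of the algorithmic judgements $Γ ⊢_A A : K$, $⊢_A A ≤_W B$, and $⊢_A A ≤ B$. For the kinding part, each algorithmic rule has a declarative counterpart that is essentially the same rule modulo the premise structure, so the cases are routine: for \rn{AT-Top} use \rn{TopEmp} (together with $Γ ⊢ \ok$); for \rn{AT-TVar}, $X ≤ A : K ∈ Γ$ and $Γ ⊢ \ok$ give $Γ ⊢ A : K$ by a generation/context argument, and then $Γ ⊢ A ≤ A : K$ by \rn{S-Refl} yields $Γ ⊢ X_A : K$ by \rn{TVar}; for \rn{AT-TAbs}, \rn{AT-TApp}, \rn{AT-Arrow}, \rn{AT-All} apply the matching declarative rule after invoking the induction hypothesis, using Thinning to extend the context with $X : K$ where needed and Lemma~\ref{l:context} for well-formedness side conditions.

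For the two subtyping statements I would carry the hypothesis $Γ ⊢ A, B : K$ (both types well-kinded of the same kind) through the induction. The case analysis on $⊢_A A ≤_W B$ mirrors the declarative rules: \rn{AWS-Top} gives $Γ ⊢ A ≤ \Top_K : K$ by \rn{S-Top} (noting $B ≡ \Top_⋆$ and using that an $A$ with no head variable that is not an abstraction must, when well-kinded of kind $⋆$, be below $\Top_⋆$); \rn{AWS-TVar} uses the hypothesis $X_A(B₁,\dots,B_n) ↓_n X_C(D₁,\dots,D_n)$, which by Subject Reduction (applied along the reduction sequences) and \rn{S-TVar}/\rn{S-TApp}/\rn{S-Trans} gives the required subtyping after establishing $Γ ⊢ D_i = F_i : K_i$ componentwise; \rn{AWS-Promote} combines \rn{S-Promote} with the induction hypothesis on $⊢_A A(B₁,\dots,B_n) ≤ C$ via \rn{S-Trans}; \rn{AWS-TAbs}, \rn{AWS-Arrow}, \rn{AWS-All} apply \rn{S-TAbs}, \rn{S-Arrow}, \rn{S-All} after the induction hypothesis, again using Thinning for the bound-variable extension in the \rn{AWS-TAbs} and \rn{AWS-All} cases. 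The \rn{AS-Inc} rule is handled by Subject Reduction: $A ↠_w C$ and $B ↠_w D$ give $Γ ⊢ A = C : K$ and $Γ ⊢ B = D : K$ (iterating the one-step lemma), so $Γ ⊢ C, D : K$ by generation, the induction hypothesis on $⊢_A C ≤_W D$ gives $Γ ⊢ C ≤ D : K$, and three uses of \rn{S-Trans} conclude $Γ ⊢ A ≤ B : K$.

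The main obstacle is the bookkeeping needed to keep the declarative well-kindedness hypotheses in sync across the recursive calls: in \rn{AWS-Promote} and \rn{AWS-TVar} the types $A(B₁,\dots,B_n)$, $C$, and the component types $D_i, F_i$ must be shown well-kinded of the appropriate kinds before the induction hypothesis applies, which requires the promised generation lemma for kinding (inversion of \rn{TApp}, \rn{TVar}) together with Subject Reduction to transport well-kindedness across $↓_n$ and $↠_w$. The flagged ``Add generation'' is exactly what is needed here; once it is available, each case is a mechanical match between an algorithmic rule and its declarative analogue, with Thinning and Subject Reduction as the only nontrivial tools.
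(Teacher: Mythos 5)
Your overall strategy --- simultaneous induction on the algorithmic derivations, matching each algorithmic rule to its declarative counterpart, with generation, Subject Reduction, and Renaming/Thinning/Context as the glue --- is exactly the paper's proof, which is stated there only as a list of which lemmas are used in which cases. Most of your case analysis is sound, and your diagnosis that the flagged generation lemma is the missing bookkeeping tool is correct.

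There is, however, one concrete gap: the \rn{AWS-All} case. From $\Gamma \vdash \forall X \leq A_1 : K.\, A_2 : \star$ and $\Gamma \vdash \forall X \leq B_1 : K.\, B_2 : \star$, generation gives $\Gamma, X \leq A_1 : K \vdash A_2 : \star$ but only $\Gamma, X \leq B_1 : K \vdash B_2 : \star$. To invoke the induction hypothesis on $\vdash_A A_2 \leq B_2$, and to conclude by \rn{S-All}, you need both bodies kinded in the single context $\Gamma, X \leq A_1 : K$. Thinning, which you cite for this case, cannot do this: it only inserts fresh declarations and never changes the bound of a variable already in the context. The tool you need is the Replacement proposition (context replacement), applied with $\Gamma \vdash A_1 \leq B_1 : K$, which is itself obtained from $A_1 \downarrow_n B_1$ by iterating Subject Reduction along both reduction sequences and using \rn{S-Trans}; this is precisely why the paper's proof cites ``Subject Reduction and Context Replacement in \rn{AWS-All}''. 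Two smaller remarks: in \rn{AWS-TAbs} the bound is $\Top_K$ on both sides, so neither Thinning nor Replacement is needed there; and in \rn{AWS-TVar} the componentwise reconstruction via \rn{S-TVar}/\rn{S-TApp} is unnecessary --- it suffices to iterate Subject Reduction along both $\twoheadrightarrow_n$ sequences to the common normal form and close with \rn{S-Trans}.
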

\begin{proof}
  By induction on derivations, using Context and Renaming in
  \rn{AT-TVar}; the generation property and Subject Reduction for
  \rn{AWS-TVar}; the generation property for \rn{AWS-Top} and
  \rn{AWS-Promote}; Subject Reduction and Context Replacement in
  \rn{AWS-All}, and Subject Reduction for \rn{AS-Inc}.
\end{proof}

\begin{corollary}[Decidability of Subtyping]
  Subtyping is decidable.
\end{corollary}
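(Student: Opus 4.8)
The plan is to reduce decidability of the declarative subtyping judgement to the results already established, through the equivalence
\[
  \Gamma \vdash A \le B : K
  \quad\text{iff}\quad
  \Gamma \vdash A : K ,\ \ \Gamma \vdash B : K ,\ \ \text{and}\ \ \vdash_A A \le B .
\]
The implication from right to left is exactly Correctness. From left to right, a generation lemma for subtyping — that $\Gamma \vdash A \le B : K$ entails $\Gamma \vdash A, B : K$ — together with Corollary~\ref{c:T} (which gives $\vdash_A A \le B$) does the job. Granting the equivalence, to decide $\Gamma \vdash A \le B : K$ one decides $\Gamma \vdash A : K$ and $\Gamma \vdash B : K$ and, if both succeed, runs the algorithmic test $\vdash_A A \le B$. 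That last step terminates: from $\Gamma \vdash A : K$ and $\Gamma \vdash B : K$, Completeness applied to $\id_\Gamma$ (using the lemma $\models \id_\Gamma \in \Gamma$) and Lemma~\ref{l:sat} Case~\ref{l:sat:SO} give $SO_K(A)$ and $SO_K(B)$, hence $T(A)$ and $T(B)$, hence termination of $\vdash_A A \le B$ by Proposition~\ref{p:decide}. So everything reduces to decidability of the declarative kinding judgement.

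I would decide kinding, and context well-formedness along with it, as follows. First decide $\Gamma \vdash \ok$ by recursion on $\Gamma$, checking at each extension $\Gamma', X \le C : K'$ that $X \notin \dom(\Gamma')$ and, recursively, that $\Gamma' \vdash C : K'$. Once $\Gamma \vdash \ok$ is known, decide $\Gamma \vdash A : K$ by recursion on the structure of $A$, following the inversions of the syntax-directed kinding rules; the context extensions used for \rn{TAbs} and \rn{All} are by fresh $\Top$-bounded variables and preserve well-formedness, so no further $\ok$-check is needed. Each case recurses on a structurally smaller subject, with the single exception of $A \equiv X_B$: inversion of \rn{TVar} gives that $\Gamma \vdash X_B : K$ holds iff $X \le A' : K \in \Gamma$ for some $A'$ — which also fixes $K$, and which is well-kinded in $\Gamma$ because $\Gamma \vdash \ok$ — and $\Gamma \vdash B : K$ and $\Gamma \vdash A' \le B : K$. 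The call $\Gamma \vdash B : K$ is on a structurally smaller subject, while $\Gamma \vdash A' \le B : K$, with $A'$ and $B$ now known well-kinded, reduces by the equivalence above to the algorithmic test $\vdash_A A' \le B$, which terminates by Proposition~\ref{p:decide} and does not recurse back into kinding. Hence the procedure is well-founded and kinding is decidable.

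The main obstacle is supplying the generation (inversion) lemmas that the development flags as still to be added: that $\Gamma \vdash A \le B : K$ implies $\Gamma \vdash A, B : K$, and the inversions of the kinding rules, in particular of \rn{TVar}. These are proved by induction on derivations, but because \rn{S-Trans} is not syntax-directed and subtyping appears in the premise of \rn{TVar}, the arguments are not purely structural: they lean on the metatheory already in hand (Context, Renaming, Subject Reduction, Substitution), and the apparent circularity between kinding and subtyping in \rn{TVar} is broken as above, since the needed subtyping facts factor through the algorithm, which is total on terminating inputs. Care is also needed, when showing that generation is stable under the structural rules, with the distinction the paper stresses between renamings $[X \leftarrow Y]$ and variable substitutions $[Y_B/X]$, as bounds sit inside the variable constructor. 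With the generation lemmas available, the equivalence above together with Proposition~\ref{p:decide} assembles directly into decidability of $\Gamma \vdash A \le B : K$, and hence of subtyping.
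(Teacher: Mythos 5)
Your core argument for the subtyping part is the same as the paper's: from well-kindedness of $A$ and $B$ obtain $T(A)$ and $T(B)$ (you go through Completeness with $\models \id_\Gamma \in \Gamma$ and Lemma~\ref{l:sat}; the paper cites Corollary~\ref{c:T}, which amounts to the same thing via \rn{S-Refl}), conclude termination of $\vdash_A A \le B$ from Proposition~\ref{p:decide}, and close the loop with Correctness in one direction and Corollary~\ref{c:T} in the other. Where you genuinely diverge is in scope: the paper's proof simply presupposes $\Gamma \vdash A, B : K$, so its ``decidability'' is relative to types already known to be well-kinded, whereas you additionally give a decision procedure for the kinding judgement itself (and for $\Gamma \vdash \ok$), breaking the kinding/subtyping entanglement in \rn{TVar} by routing the required subtyping check through the terminating algorithm on the already-checked bound. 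That buys an unconditional decidability statement, at the cost of needing the generation/inversion lemmas (well-kindedness of subtyping, inversion of \rn{TVar}) that the paper only flags with ``Add generation'' --- a debt the paper itself incurs in its Correctness proposition, so your reliance on them is no worse than the paper's, and you are right to identify them as the real outstanding obligation.

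Two small corrections to your kinding procedure. First, the context extension in \rn{All} is $\Gamma, X \le A_1 : K$ with the actual bound $A_1$, not a $\Top$-bounded variable (only \rn{TAbs} extends by $X : K$); so in that case you must first recursively check $\Gamma \vdash A_1 : K$ before descending into $A_2$ --- this is structurally smaller, so termination is unaffected, but the claim that ``no further $\ok$-check is needed'' as stated is wrong for \rn{All}. Second, in the $X_B$ case remember that $\Gamma \vdash A' \le B : K$ needs $\Gamma \vdash A' : K$, which comes from $\Gamma \vdash \ok$ only for the prefix of $\Gamma$ preceding $X$; you need Thinning (Proposition~\ref{p:thinning}) to lift it to $\Gamma$ before applying your equivalence. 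Both are routine repairs and do not threaten the argument.
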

\begin{proof}
  Suppose $Γ ⊢ A, B : K$.  By Corollary~\ref{c:T} $T(A)$ and $T(B)$,
  and so $⊢_A A ≤ B$ is decidable by Proposition~\ref{p:decide}, and
  so by Correctness, $Γ ⊢ A ≤ B : K$ is also decidable.
\end{proof}

\section{Relationship with Traditional $F^ω_≤$}
\label{sec:equiv}

We now show that the system with the bounded variable constructor is
equivalent to the traditional presentation of the system, without
bounds in the variable constructor.

Explicitly, we take the syntax of traditional $F^ω_≤$
\cite{Compagnoni:thesis, SteffenPierce97}, which differs from the
syntax presented here only by having a type constructor $X$ instead of
the bounded type constructor $X_A$.  We write judgements in this
system as $Γ ⊢_T J$, with the decoration $T$ for traditional.  The
rules of inference are also standard: we include two rules here but
refer the reader to the standard references for the complete
system.\footnote{However, \cite{Compagnoni:thesis} has intersection
  types rather than an explicit rule for $\Top_⋆$.}

{\small
\infrule[TS-Conv]
{Γ ⊢_T A, B : K \andalso A =_β B}
{Γ ⊢_T A ≤ B}

\infrule[TS-TApp]
{Γ ⊢_T A ≤ B \andalso Γ ⊢_T A(C) : K}
{Γ ⊢_T A(C) ≤ B(C)}
}

\begin{definition}[Decoration]
  Decoration of a type, $A^Γ$, where $Γ$ is a context in the system
  with bounded variables, is a partial function that maps types from
  the Curry-style presentation to the more explicit structure à la
  Church:
  \begin{itemize}
  \item $X^Γ = X_B$, if $X ≤ B : K ∈ Γ$.
  \item $(A → B)^Γ = A^Γ → B^Γ$.
  \item $(∀X ≤ A : K. B)^Γ = ∀X ≤ A^Γ : K. B^{Γ, X ≤ A : K}$.
  \item $(ΛX : K. A)^Γ = ΛX : K. A^{Γ, X : K}$.
  \item $(A(B))^Γ = A^Γ(B^Γ)$.
  \item $(\Top_⋆)^Γ = \Top_*$.
  \end{itemize}
  The extension to contexts, $Γ^d$, is defined in the obvious way:
  \begin{itemize}
  \item $()^d = ()$.
  \item $(Γ, X ≤ A : K)^d = Γ^d, X ≤ A^{Γ^d} : K$.
  \end{itemize}
\end{definition}

\begin{definition}[Erasure]
  The erasure map is simply the homomorphic extension of the stripping
  of the bound from the variable constructor:
  \begin{eqnarray*}
  |X_A| & = & X
  \end{eqnarray*}
  The extension to contexts is also the homomorphic extension.
\end{definition}

Decoration and erasure have some simple properties, most important of
which is that both preserve $β$-equality:
\begin{lemma}
  \label{l:equiv:beta}
  \begin{itemize}
  \item If $A =_β B$ then $|A| =_β |B|$.
  \item If $A =_β B$ and $A^Γ$, $B^Γ$ defined then $A^Γ =_β B^Γ$.
  \end{itemize}
\end{lemma}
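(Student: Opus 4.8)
The plan is to prove both statements by reducing $\beta$-equality to its generating relation, one-step $\beta$-reduction, and then to prove the corresponding one-step statements; the equational versions then follow by induction on the length of a conversion sequence (using that $\beta$-equality is the reflexive-symmetric-transitive closure of $\rhd$, so a chain $A \equiv C_0, C_1, \ldots, C_n \equiv B$ with each adjacent pair related by $\rhd$ in one direction or the other). Concretely, it suffices to show: (i) if $A \rhd B$ then $|A| \rhd |B|$ (or $|A| \equiv |B|$, though in fact a genuine reduction step is preserved since erasure does not collapse redexes); and (ii) if $A \rhd B$ and $A^\Gamma$ is defined, then $B^\Gamma$ is defined and $A^\Gamma \rhd B^\Gamma$. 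Both (i) and (ii) are proved by induction on the derivation of $A \rhd B$, i.e. by structural induction following the definition of one-step reduction, with the base case being contraction of a top-level redex $(\Lambda X : K. P)(Q) \rhd P[Q/X]$ and the inductive cases being the congruence closure through the type constructors $\to$, $\forall$, $\Lambda$, and application.

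For (i), the congruence cases are immediate since erasure is a homomorphism: e.g. $|(\Lambda X : K. P)| = \Lambda X : K. |P|$ and $|P(Q)| = |P|(|Q|)$, so a reduction inside a subterm maps to a reduction inside the corresponding subterm. The base case requires the commutation $|P[Q/X]| \equiv |P|[|Q|/X]$; note that substitution here is the single-variable substitution $[Q/X]$ which, per the paper's definition, replaces $X$ by $Q$ and renames the other free variables by the identity. Under erasure the bounds attached to variables are discarded and the identity renaming on the remaining variables becomes literally the identity, so $|P[Q/X]|$ and $|P|[|Q|/X]|$ agree; this is a routine induction on $P$. Then $|(\Lambda X : K. P)(Q)| = (\Lambda X : K. |P|)(|Q|) \rhd |P|[|Q|/X] \equiv |P[Q/X]|$, as required.

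For (ii), the definedness claim must be tracked alongside the reduction: $A^\Gamma$ is defined when every free variable occurrence of $A$ has its name declared in $\Gamma$ (so that the clause $X^\Gamma = X_B$ with $X \leq B : K \in \Gamma$ applies), and this property is preserved by reduction since reduction does not introduce new free variables. For the congruence cases one passes the induction hypothesis through, being careful with the clauses that extend the context, e.g. $(\forall X \leq P : K. Q)^\Gamma = \forall X \leq P^\Gamma : K. Q^{\Gamma, X \leq P : K}$: a reduction in $P$ changes the bound in the extended context used to decorate $Q$, but since reduction inside a bound does not occur at top level in $Q$'s decoration in a way that affects definedness, and the induction hypothesis gives the reduction of the decorated subterm, this goes through. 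The genuinely delicate point, and the main obstacle, is the base case: I must show $(P[Q/X])^\Gamma \equiv P^{\Gamma, X \leq R : K}[\,Q^\Gamma / X\,]$ for an appropriate $R$ (namely the bound under which $X$ is decorated inside $(\Lambda X : K. P)$, which in the decoration of $(\Lambda X : K. P)(Q)$ is $\Top_K$, since $(\Lambda X : K. P)^\Gamma = \Lambda X : K. P^{\Gamma, X : K}$). This is a substitution lemma for decoration, and it is subtle precisely because the paper has emphasized that substituting $Y_B$ for $X$ changes bounds: one has to verify that decorating-then-substituting agrees with substituting-then-decorating, tracking how the bound annotations on free variables of $P$ are generated from $\Gamma$ versus from $\Gamma, X : K$. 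Once this decoration-substitution lemma is established — itself by induction on $P$, with the variable case splitting on whether the variable is $X$ or not — the base case of (ii) follows, and hence so do (ii) and the second bullet of the lemma.
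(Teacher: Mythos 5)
Your overall strategy --- reduce $\beta$-equality to one-step reduction via commutation lemmas for erasure and decoration with substitution, then close under conversion --- is the natural one; the paper offers no proof of this lemma at all, so the only question is whether your argument is watertight. Two of your intermediate claims are not correct as stated. First, claim (ii), that $A \rhd B$ with $A^\Gamma$ defined gives $A^\Gamma \rhd B^\Gamma$ in a \emph{single} step, fails: decoration duplicates the bound of a quantifier into the annotation of every occurrence of the bound variable, since $(\forall X \leq P : K.\, Q)^\Gamma = \forall X \leq P^\Gamma : K.\, Q^{\Gamma, X \leq P : K}$ records the bound of $X$, so each occurrence of $X$ in $Q$ is decorated with (the decoration of) $P$. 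Hence one step $P \rhd P'$ inside the bound corresponds to several steps in the decorated type (at the binder and at each annotated occurrence), and the claim must be weakened to $A^\Gamma \rhd^* B^\Gamma$, which still suffices for preservation of $=_\beta$. Symmetrically, your parenthetical in (i) that erasure never collapses a step is wrong: a step inside a bound annotation, $X_A \rhd X_{A'}$, erases to the identity; the disjunction $|A| \rhd |B|$ or $|A| \equiv |B|$ that you also state is the correct form.

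Second, for the decoration half you propose to finish by induction on the length of a conversion chain $A \equiv C_0, \ldots, C_n \equiv B$. This does not go through as stated: the hypotheses give definedness only of $A^\Gamma$ and $B^\Gamma$, and an intermediate peak $C_{i+1}$ reached by an expansion step ($C_{i+1} \rhd C_i$) may contain free variables not declared in $\Gamma$, so $C_{i+1}^\Gamma$ need not be defined and your one-step lemma cannot be applied to it. The repair is to use Church--Rosser, which the paper assumes: from $A =_\beta B$ obtain a common reduct $A \rhd^* C$ and $B \rhd^* C$; every type along these reductions is a reduct of $A$ or $B$, and, as you note, reduction does not introduce free variables, so all decorations involved are defined, and the multi-step version of (ii) yields $A^\Gamma \rhd^* C^\Gamma$ and $B^\Gamma \rhd^* C^\Gamma$, hence $A^\Gamma =_\beta B^\Gamma$. (This is exactly the shape in which the paper later uses the lemma in the Soundness proof for \rn{TS-Conv}.) With these two repairs, and the decoration--substitution lemma you describe --- which is fine given the freshness convention on bound variables, so that $X \notin \FV(B)$ for every bound $B$ recorded in $\Gamma$ --- your argument is complete.
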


Now, we can relate the Curry and Church presentations of $F^ω_≤$.  The
proofs rely on standard properties of the traditional presentation,
for example Church--Rosser for untyped reduction, Generation
properties, well-formedness of contexts, well-kindedness of subtyping,
and uniqueness of kinds.

\begin{lemma}[Soundness]
  \begin{itemize}
  \item If $Γ ⊢_T A : K$ then $Γ^d$ and $A^{Γ^d}$ are defined and $Γ^d
    ⊢ A^{Γ^d} : K$.
  \item If $Γ ⊢_T A ≤ B$ and $Γ ⊢_T A, B : K$ then $Γ^d$, $A^{Γ^d}$,
    $B^{Γ^d}$ are well-formed and $Γ^d ⊢ A^{Γ^d} ≤ B^{Γ^d}: K$.
  \end{itemize}
\end{lemma}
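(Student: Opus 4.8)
The plan is to prove both parts simultaneously by induction on the derivation of $Γ ⊢_T A : K$ and $Γ ⊢_T A ≤ B$, since the subtyping judgement in traditional $F^ω_≤$ refers to kinding (through the well-kindedness side conditions on rules like \rn{TS-Conv} and \rn{TS-TApp}), and our target system similarly has kinding and subtyping entangled in rule \rn{TVar}. First I would establish the auxiliary facts about decoration that the induction needs: that $Γ^d$ is well-formed whenever $Γ ⊢_T \ok$ (which follows from the first clause applied to $\Top_⋆$), that decoration commutes with substitution in the sense that $(A[B/X])^Γ = A^{Γ, X ≤ C : K}[B^Γ/X]$ under appropriate well-formedness hypotheses, and that decoration commutes with $β$-reduction up to the behavior already recorded in Lemma~\ref{l:equiv:beta}. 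The substitution-commutation fact is the technical linchpin: decoration threads the context through binders, so one must check that substituting $B^Γ$ for $X$ correctly rewrites the recorded bounds, using the observation from Section~\ref{sec:syntax} that $B[A/X]$ as a parallel substitution is the identity renaming elsewhere.

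For the kinding part, each case is by inversion on the traditional rule and application of the induction hypothesis. The interesting case is \rn{TVar}: from $X ≤ A : K ∈ Γ$ we get $X ≤ A^{Γ^d} : K ∈ Γ^d$ by definition of $Γ^d$, and $X^{Γ^d} = X_{A^{Γ^d}}$; to apply our rule \rn{TVar} we need $Γ^d ⊢ A^{Γ^d} : K$ and $Γ^d ⊢ A^{Γ^d} ≤ A^{Γ^d} : K$. The first comes from the induction hypothesis applied to the derivation of well-kindedness of the bound recorded by well-formedness of the context (using Context-style properties of the traditional system), and the second is then \rn{S-Refl}. The \rn{TAbs}, \rn{TApp}, \rn{Arrow}, \rn{All} cases are routine: decoration is a homomorphism on these constructors, and the context extension in the premise matches the context extension that decoration performs under the corresponding binder. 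For \rn{TopExt} we invoke the induction hypothesis on the bound and the decoration clause for contexts.

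For the subtyping part, \rn{TS-Conv} is the crux and also the main obstacle: from $A =_β B$ and well-kindedness $Γ ⊢_T A, B : K$ we must produce $Γ^d ⊢ A^{Γ^d} ≤ B^{Γ^d} : K$. By Lemma~\ref{l:equiv:beta} we get $A^{Γ^d} =_β B^{Γ^d}$, so by Church--Rosser there is a common reduct $C$; it then suffices to show that $β$-reduction in our Church-style system is contained in declarative subtyping in both directions, i.e.\ that $D ▷ E$ with $Γ^d ⊢ D : K'$ implies $Γ^d ⊢ D = E : K'$ — which is exactly Subject Reduction as stated in the excerpt — together with the transitivity rule \rn{S-Trans} to chain the reductions from $A^{Γ^d}$ down to $C$ and back up to $B^{Γ^d}$. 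The subtlety here is that decoration is only a partial function, so I must first argue that $A^{Γ^d}$ and $B^{Γ^d}$ are defined and well-kinded in $Γ^d$, which is where the kinding part of the lemma (already proved by the simultaneous induction) is used. The remaining subtyping cases — the structural ones and \rn{TS-TApp} — follow by the induction hypothesis and the matching rules \rn{S-TApp}, \rn{S-Arrow}, \rn{S-All}, \rn{S-TAbs}, \rn{S-Top}, again appealing to well-kindedness of the traditional judgements and uniqueness of kinds to supply the side conditions our rules demand.
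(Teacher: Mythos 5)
Your proposal is correct and takes essentially the same route as the paper: induction on derivations with \rn{TS-Conv} as the crux, resolved exactly as the paper does via Lemma~\ref{l:equiv:beta}, Church--Rosser, Subject Reduction in the decorated system, and \rn{S-Trans} (the paper applies Church--Rosser to the undecorated types and then transfers the reductions, you decorate first and apply Church--Rosser afterwards --- an immaterial difference). The remaining cases are handled at the same level of detail as the paper, which spells out only \rn{TS-Conv}.
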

\begin{proof}
  We consider the rule \rn{TS-Conv}.  By Church--Rosser, there exists
  a $C$ such that $A ▷^* C$ and $B ▷^* C$.  By
  Lemma~\ref{l:equiv:beta}, $A^{Γ^d} ▷^* C^{Γ^d}$ and $B^{Γ^d} ▷^*
  C^{Γ^d}$.  By the induction hypothesis, $Γ^d ⊢ A^{Γ^d} : K$ and $Γ^d
  ⊢ B^{Γ^d}: K$.  By Subject Reduction, $Γ^d ⊢ A^{Γ^d} = C^{Γ^d} : K$
  and $Γ^d ⊢ B^{Γ^d} = C^{Γ^d}: K$, and we have $Γ^d ⊢ A^{Γ^d} ≤
  C^{Γ^d}: K$ and $Γ^d ⊢ C^{Γ^d} ≤ B^{Γ^d} : K$ by definition.
  Finally, by \rn{S-Trans}, $Γ^d ⊢ A^{Γ^d} ≤ B^{Γ^d}:K$.
\end{proof}

\begin{lemma}[Completeness]
  \begin{itemize}
  \item If $Γ ⊢ A : K$ then $|Γ| ⊢_T |A| : K$.
  \item If $Γ ⊢ A ≤ B : K$ then $|Γ| ⊢_T |A| ≤ |B|$ and $|Γ| ⊢_T |A|,
    |B| : K$.
  \end{itemize}
\end{lemma}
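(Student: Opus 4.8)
The plan is to proceed by induction on derivations in the system with bounded variables, $\Gamma \vdash A : K$ and $\Gamma \vdash A \le B : K$, simultaneously. For each inference rule we erase the conclusion and the premises and check that the erased premises allow us to reconstruct an erased conclusion in the traditional system $\vdash_T$. The key observation driving every case is that erasure commutes with all the type-forming operations, so $|A \to B| = |A| \to |B|$, $|\forall X \le A : K. B| = \forall X \le |A| : K. |B|$, $|\Lambda X : K. A| = \Lambda X : K. |A|$, $|A(B)| = |A|(|B|)$, and $|X_A| = X$, with the analogous statements for contexts. I would first record this together with the fact, from Lemma~\ref{l:equiv:beta}, that erasure preserves $\beta$-equality: if $A =_\beta B$ then $|A| =_\beta |B|$. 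This is the tool for the $\beta$ rules and, crucially, for \rn{S-TVar}, \rn{S-TApp}, and \rn{S-All}.

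For the kinding judgement most cases are immediate. \rn{TopEmp} and \rn{TopExt} go through since $|\Top_\star| = \Top_\star$ and erasure preserves context well-formedness. \rn{TAbs}, \rn{TApp}, \rn{Arrow}, and \rn{All} are routine applications of the induction hypothesis together with the commutation identities above. The one case requiring care is \rn{TVar}: the conclusion $\Gamma \vdash X_B : K$ becomes, under erasure, the obligation $|\Gamma| \vdash_T X : K$; from the hypotheses $X \le A : K \in \Gamma$ and $\Gamma \vdash A \le B : K$ we have by induction hypothesis $|\Gamma| \vdash_T |A| \le |B|$ and that the context is well-formed, and the traditional \rn{TVar} rule derives $|\Gamma| \vdash_T X : K$ directly from $X \le |A| : K \in |\Gamma|$. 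So the bound $B$ simply evaporates, which is exactly the point of erasure: the premise $\Gamma \vdash B : K$ in the bounded system is discarded, and $X$ is kinded from the context bound $|A|$ in the traditional system.

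For the subtyping judgement, \rn{S-Refl}, \rn{S-Top}, \rn{S-Trans} are trivial; \rn{S-TAbs}, \rn{S-Arrow} follow by induction hypothesis and commutation. \rn{S-TApp} and \rn{S-All} use Lemma~\ref{l:equiv:beta}: in \rn{S-TApp} the premise $\Gamma \vdash B = D : K$ erases to $|B| =_\beta |D|$ (using correctness of the bounded algorithm or, more simply, the fact that in the traditional system $\Gamma \vdash_T B = D$ forces $\beta$-equality), and then \rn{TS-TApp} applies; \rn{S-All} is similar with the bounds. The genuinely interesting case is \rn{S-Promote}: $\Gamma \vdash X_B \le B : K$ with $X \le A : K \in \Gamma$ and $\Gamma \vdash A \le B : K$. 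After erasure we must show $|\Gamma| \vdash_T X \le |B|$. In the traditional system, the promotion rule gives $|\Gamma| \vdash_T X \le |A|$ from $X \le |A| : K \in |\Gamma|$, and by induction hypothesis $|\Gamma| \vdash_T |A| \le |B|$, so \rn{S-Trans} finishes it. Likewise \rn{S-TVar}, $\Gamma \vdash X_B \le X_C : K$, erases to the trivial $|\Gamma| \vdash_T X \le X$ obtained from reflexivity (the information that $B$ and $C$ are equal is discarded). The \rn{S-BetaL}/\rn{S-BetaR} rules erase to instances of $\beta$-equality and go through by \rn{TS-Conv} using Lemma~\ref{l:equiv:beta}.

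The main obstacle is bookkeeping rather than conceptual: one must be careful that the well-kindedness side conditions of the traditional rules (\rn{TS-Conv} needs $\Gamma \vdash_T A, B : K$; \rn{TS-TApp} needs $\Gamma \vdash_T A(C) : K$) are supplied. These follow by strengthening the induction to carry the kinding conclusion alongside the subtyping conclusion — which is exactly how the lemma is stated ("$|\Gamma| \vdash_T |A|, |B| : K$") — and then invoking the standard well-kindedness-of-subtyping and uniqueness-of-kinds properties of traditional $F^\omega_\le$ that the paper has already flagged as available. So the two bullets of the lemma genuinely need to be proved together, and the subtyping bullet must explicitly re-establish the kinding bullet for its subjects at each step, using the erased induction hypotheses and the cited metatheory of the traditional system.
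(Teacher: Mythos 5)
Your overall strategy --- simultaneous induction on the Church-style derivations, carrying the kinding bullet alongside the subtyping bullet, commuting erasure with all the type constructors, and using Lemma~\ref{l:equiv:beta} together with \rn{TS-Conv} and transitivity --- is the paper's approach, and your treatment of the kinding rules, \rn{S-Promote}, \rn{S-TVar} and the beta rules is fine. The gap is in the one case the paper actually writes out, \rn{S-TApp} (and the analogous comparison of bounds in \rn{S-All}). There the premise $\Gamma \vdash B = D : K$ must be converted into the \emph{syntactic} fact $B =_\beta D$ before \rn{TS-Conv} can be used, and neither of your two proposed justifications achieves this. The ``more simply'' alternative --- that $|\Gamma| \vdash_T |B| \le |D|$ and $|\Gamma| \vdash_T |D| \le |B|$ force $\beta$-equality --- is anti-symmetry of the \emph{traditional} system, which is not part of its standard metatheory; it is precisely one of the corollaries the paper derives \emph{from} this completeness lemma, so invoking it here is circular. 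Your other suggestion, ``correctness of the bounded algorithm'', points the wrong way: correctness takes $\vdash_A$ derivations back to $\Gamma \vdash$ derivations, whereas what is needed is the declarative Anti-Symmetry corollary of Section 5 (if $\Gamma \vdash B \le D : K$ and $\Gamma \vdash D \le B : K$ then $B \downarrow_n D$), which itself rests on Termination (completeness of the algorithm via the logical relation) plus the algorithmic Anti-Symmetry lemma. Naming and using that result is the crux of this case; without it the \rn{S-TApp} and \rn{S-All} cases do not go through.

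A smaller point: even once $|B| =_\beta |D|$ is in hand, it is not accurate that ``\rn{TS-TApp} applies'' to finish the case; \rn{TS-TApp} only yields $|\Gamma| \vdash_T |A|(|B|) \le |C|(|B|)$, with the same argument on both sides. One must then use the kinding bullet to supply $|\Gamma| \vdash_T |C|(|B|), |C|(|D|) : K'$, apply \rn{TS-Conv} to $|C|(|B|) =_\beta |C|(|D|)$, and conclude by transitivity --- exactly the three-step composition the paper spells out. Your plan has the right ingredients scattered around, but it should make both the dependence on Anti-Symmetry and this composition explicit.
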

\begin{proof}
  We consider the rule \rn{S-TApp}.  By the induction hypothesis, $|Γ|
  ⊢_T |A| ≤ |C|$, $|Γ| ⊢_T |A|, |C| : K → K'$, $|Γ| ⊢_T |B| ≤ |D|$,
  and $|Γ| ⊢_T |B|, |D| : K$.  By the kinding rule for type
  application in the traditional presentation and the definition of
  erasure, we have $|Γ| ⊢_T |AB|, |CB|, |CD| : K'$, and by
  \rn{TS-TApp} $|Γ| ⊢_T |AB| ≤ |CB|$.  By Anti-Symmetry $B =_β D$, and
  $|B| =_β |D|$ by Lemma~\ref{l:equiv:beta}, so $|CB| =_β |CD|$.  By
  \rn{TS-Conv} $|Γ| ⊢_T |CB| ≤ |CD|$, and finally, by Transitivity,
  $|Γ| ⊢_T |AB| ≤ |CD|$.
\end{proof}

The important metatheoretic results for subtyping now transfer
straightforwardly to the traditional one.

\begin{corollary}
  \begin{itemize}
  \item If $Γ ⊢_T A : K$ then $A$ is strongly normalizing.
  \item Typechecking $Γ ⊢_T A : K$ and subtyping $Γ ⊢_T A ≤ B : K$ are
    decidable.
  \item If $Γ ⊢_T A ≤ B$, $Γ ⊢_T B ≤ A$, $Γ ⊢_T A : K$ and $Γ ⊢_T B :
    K$ then $A =_β B$.
  \end{itemize}
\end{corollary}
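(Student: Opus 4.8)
The plan is to derive all three items by transporting the metatheory already established for the Church-style system along the Soundness and Completeness lemmas of Section~\ref{sec:equiv}: strong normalization from Termination (Corollary~\ref{c:T}) together with Lemma~\ref{l:T}, Case~\ref{l:T:SN}; decidability from Proposition~\ref{p:decide}, Corollary~\ref{c:T} and the Correctness proposition; and the $=_\beta$ property from the declarative Anti-Symmetry corollary. The only glue required is that erasure inverts decoration, i.e.\ the routine identities $|A^{\Gamma^d}| \equiv A$ and $|\Gamma^d| \equiv \Gamma$; that decoration commutes with substitution, $(C[D/X])^{\Delta} \equiv C^{\Delta, X : K}[D^{\Delta}/X]$; and that erasure respects $\beta$ (Lemma~\ref{l:equiv:beta}).

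For strong normalization: if $\Gamma \vdash_T A : K$ then $\Gamma^d \vdash A^{\Gamma^d} : K$ by the Soundness lemma, so $\Gamma^d \vdash A^{\Gamma^d} \leq A^{\Gamma^d} : K$ by \rn{S-Refl}, hence $T(A^{\Gamma^d})$ by Corollary~\ref{c:T} and $\SN(A^{\Gamma^d})$ by Lemma~\ref{l:T}, Case~\ref{l:T:SN}. It then remains to push $\SN$ back down to $A$. Since the decoration of a $\beta$-redex is again a $\beta$-redex and decoration commutes with the substitution carried out by a contraction, every step $A \rhd B$ lifts to $A^{\Gamma^d} \rhd B^{\Gamma^d}$ (decoration stays defined along the reduction because reduction creates no new free variables). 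An infinite reduction out of $A$ would therefore lift to one out of $A^{\Gamma^d}$, contradicting $\SN(A^{\Gamma^d})$, so $A$ is strongly normalizing.

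For decidability: kinding in the traditional system is syntax-directed in the subject type and never appeals to subtyping, so $\Gamma \vdash_T A : K$ is decidable by the usual structural recursion and, by uniqueness of kinds, determines $K$ uniquely when it holds. For subtyping, given $\Gamma$, $A$, $B$, first run this kinding check on $A$ and $B$: if either is ill-kinded or they receive different kinds, reject — correctly, since traditional subtyping is well-kinded (one of the standard properties recalled in Section~\ref{sec:equiv}). Otherwise $\Gamma \vdash_T A, B : K$ for a common $K$, so $\Gamma^d \vdash A^{\Gamma^d}, B^{\Gamma^d} : K$ by the Soundness lemma, so $T(A^{\Gamma^d})$ and $T(B^{\Gamma^d})$ by \rn{S-Refl} and Corollary~\ref{c:T}, so by Proposition~\ref{p:decide} the algorithm $\vdash_A A^{\Gamma^d} \leq B^{\Gamma^d}$ terminates, and by the Correctness proposition and Corollary~\ref{c:T} it succeeds exactly when $\Gamma^d \vdash A^{\Gamma^d} \leq B^{\Gamma^d} : K$. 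Finally $\Gamma \vdash_T A \leq B$ iff $\Gamma^d \vdash A^{\Gamma^d} \leq B^{\Gamma^d} : K$: the forward direction is the Soundness lemma and the backward direction is the Completeness lemma of Section~\ref{sec:equiv}, using $|\Gamma^d| \equiv \Gamma$, $|A^{\Gamma^d}| \equiv A$ and $|B^{\Gamma^d}| \equiv B$. Since $\Gamma^d$, $A^{\Gamma^d}$ and $B^{\Gamma^d}$ are computed from the inputs by a structural recursion, this is a decision procedure.

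For the last item: if $\Gamma \vdash_T A \leq B$, $\Gamma \vdash_T B \leq A$ and $\Gamma \vdash_T A, B : K$, then by the Soundness lemma $\Gamma^d \vdash A^{\Gamma^d} \leq B^{\Gamma^d} : K$ and $\Gamma^d \vdash B^{\Gamma^d} \leq A^{\Gamma^d} : K$, so $A^{\Gamma^d} \downarrow_n B^{\Gamma^d}$ by the declarative Anti-Symmetry corollary; as every $\twoheadrightarrow_n$ step is a $\beta$-reduction, this gives $A^{\Gamma^d} =_\beta B^{\Gamma^d}$, and hence $A =_\beta B$ by Lemma~\ref{l:equiv:beta}. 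I expect the strong normalization transfer to be the main obstacle: one must verify carefully that decoration is a simulation for $\beta$-reduction — essentially the decoration/substitution commutation together with the fact that inserting bounds neither creates nor destroys redexes at existing positions — so that non-termination genuinely lifts from the Curry side to the Church side. The decidability and anti-symmetry transfers, by contrast, are just a matter of composing the two translations and disposing of the ill-kinded inputs.
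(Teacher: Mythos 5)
Your proof is correct and takes exactly the route the paper intends: the paper gives no explicit argument beyond asserting the results ``transfer straightforwardly'' via the Soundness and Completeness lemmas of Section~\ref{sec:equiv}, and your chain---Soundness plus \rn{S-Refl}, Corollary~\ref{c:T} and Lemma~\ref{l:T} for strong normalization; Proposition~\ref{p:decide}, Corollary~\ref{c:T} and Correctness for decidability; declarative Anti-Symmetry plus Lemma~\ref{l:equiv:beta} for the $=_\beta$ claim---is precisely that transfer spelled out. The auxiliary facts you identify (that decoration commutes with substitution and hence simulates one-step $\beta$-reduction, and that $|A^{\Gamma^d}| \equiv A$ and $|\Gamma^d| \equiv \Gamma$) are exactly the routine glue the paper leaves implicit, and your handling of them is sound.
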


\section{Related and Future Work}
\label{sec:work}

An earlier version of this article was published in the unrefereed
proceedings of Henk Barendregt's Festschrift \cite{henk60}.  The
current paper extends the results of the earlier version by showing
the equivalence with the traditional presentation of the system.

In an earlier paper \cite{CompagnoniAGoguenH:TOSHOSInfComp}, we
considered an algorithm that reduces types to normal form before
invoking the promotion rule in the algorithm.  This makes context
replacement trivial for equal types, since they have the same normal
form and so altering the context does not alter the path of types
considered by the algorithm.  However, this algorithm is not optimal,
since it normalizes the head earlier than necessary.

That earlier paper also used a typed operational semantics to show
termination of the algorithm.  This gave a more extensive treatment of
the metatheory, and the admissibility of thinning, substitution and
context replacement were consequences of the model.  Furthermore,
Subject Reduction was straightforward in the typed operational
semantics.  In the current paper, finding the exact formulation
necessary to show these results in the declarative system $Γ ⊢ J$
turned out to be somewhat subtle, since the kinding judgement uses the
subtyping judgement for the bounded variable rule.  However, the
approach using typed operational semantics was also longer and less
approachable, and involved Kripke models for the proof of
completeness.  We hope that the current paper is clearer by not
defining an intermediate system.

In separate work \cite{CompagnoniAGoguenH:AntiSymJournal}, we also
proved anti-symmetry of higher-order subtyping using the typed
operational semantics.  The basic idea of that paper was to include
the sub-derivation of replacing the variable in a bounded head
variable expression $X_A(B₁, ..., B_n)$ with its bound, $A(B₁, ...,
B_n)$.  This idea is captured in the current paper by the $T(-)$
predicate.  The $T(-)$ predicate is also similar to Compagnoni's
approach with $+$-reduction \cite{Compagnoni:thesis}, but we do not
need to develop the metatheory of a new reduction relation.

As mentioned in the introduction, Stone and Harper
\cite{StoneHarper:ExtensionalEquiv} use a logical relation defined
over sets of contexts, instead of the standard logical relations over
single contexts, to show termination of an algorithm for a type theory
with singleton types, Σ and Π types, and all of the η rules.  Their
work does not normalize the singleton types.  This is an elegant
solution to the problem of varying contexts, but it raises the
question of why singletons or $F^ω_≤$ should have different
requirements on the Kripke-style relation than other type systems.

Abel \cite{abel:mscs06} has shown equivalence of a subtyping algorithm
for higher kinds with polarity by direct induction on kinds rather
than using a logical relation.

There are several directions for future work.  We would like to show
that a Harper–Pfenning-style algorithm \cite{HarperPfenning:alg} is
correct and complete for the type system.  Furthermore, it would be
nice to be able to prove context conversion and Church--Rosser in the
model, as can be done for logical relations for equality, rather than
proving them for the algorithm and lifting to the model.  However,
properties that follow straightforwardly for equality, such as that $⊧
A = B ∈ K$ implies $⊧ A = A ∈ K$, cannot be shown so easily for
subtyping.  Finally, another candidate type construct that we might
study with our technique of explicit type information is singleton
types, which also have computational behavior expressed in the
context.

\section{Conclusions}
\label{sec:concl}

We have introduced a natural and powerful extension of the syntax of
$F^ω_≤$ and showed that the development of the metatheory is similar
to the standard metatheory for type theories, specifically without a
Kripke-style model and with a simple inductive definition capturing
termination of the algorithm.  We have shown all of the important
results for the system, including anti-symmetry, transitivity
elimination and decidability of subtyping.

{\small
\bibliographystyle{eptcs}
\bibliography{bib}
}

\end{document}